\documentclass[final,1p,times,numbers, xcolor=dvipsnames]{elsarticle}

\usepackage{booktabs}
\usepackage{siunitx} 
\usepackage{graphicx}
\usepackage{comment}
\usepackage{verbatimbox}
\usepackage[dvipsnames]{xcolor}
\usepackage{amssymb}
\usepackage{placeins}
\usepackage{amsthm}
\newtheorem{theorem}{Theorem}[section]
\newtheorem{lemma}[theorem]{Lemma}

\newtheorem{proposition}[theorem]{Proposition}
\usepackage{pdflscape}\usepackage{amsmath}
\newtheorem{definition}{Definition}[section]

\usepackage{url}
\usepackage{hyperref}

\begin{document}

\begin{frontmatter}
 \title{Ghost tasking for parametrized Gaussian Processes solving linear differential equations}

 \author[label1]{Johanna Moser\corref{cor1}}
\ead{j.moser@tugraz.at}

\author[label1]{Christopher Albert}

\author[label2,label3]{Sascha Ranftl\fnref{label4}}

\cortext[cor1]{Corresponding author}
\fntext[label4]{Current address: School of Mechanical Engineering, Purdue University, USA.}

\affiliation[label1]{organization={Institute of Theoretical \& Computational Physics, University of Technology},
            city={Graz},
            country={Austria}}

\affiliation[label2]{organization={Courant Institute, New York University},
            city={New York},
            country={USA}}

\affiliation[label3]{organization={Division of Applied Mathematics, Brown University},
            country={USA}}

\begin{abstract}
Physics-informed machine learning has gained significant attention in recent years. In regimes of limited data, parametrized Gaussian processes have become popular. Existing approaches, however, often face limitations, such as requiring parametrizable (also called controllable) systems or a large number of output tasks. In this work, we introduce a systematic procedure we call "ghost tasking", using auxiliary tasks to circumvent these limitations. We prove that such ghost tasks can render any non-parametrizable system effectively parametrizable, enabling algorithmic construction of parametrized Gaussian Processes while keeping the number of required tasks (i.e. output dimensions) and latent functions low. We find that ghost tasking performs especially well in an inverse problem setting, even with very few available data. We show the usage and power of ghost tasking in three experiments, providing systematic comparisons to the only other currently available method applicable to all experiments. We provide necessary syntax and explications for two computer algebra programs that compute parametrizations for systems with polynomial or rational coefficients. Our theoretical results extend to systems with 
meromorphic functions. 
\end{abstract}

\begin{graphicalabstract}
\begin{figure}
\centering
\includegraphics[width=0.8\textwidth]{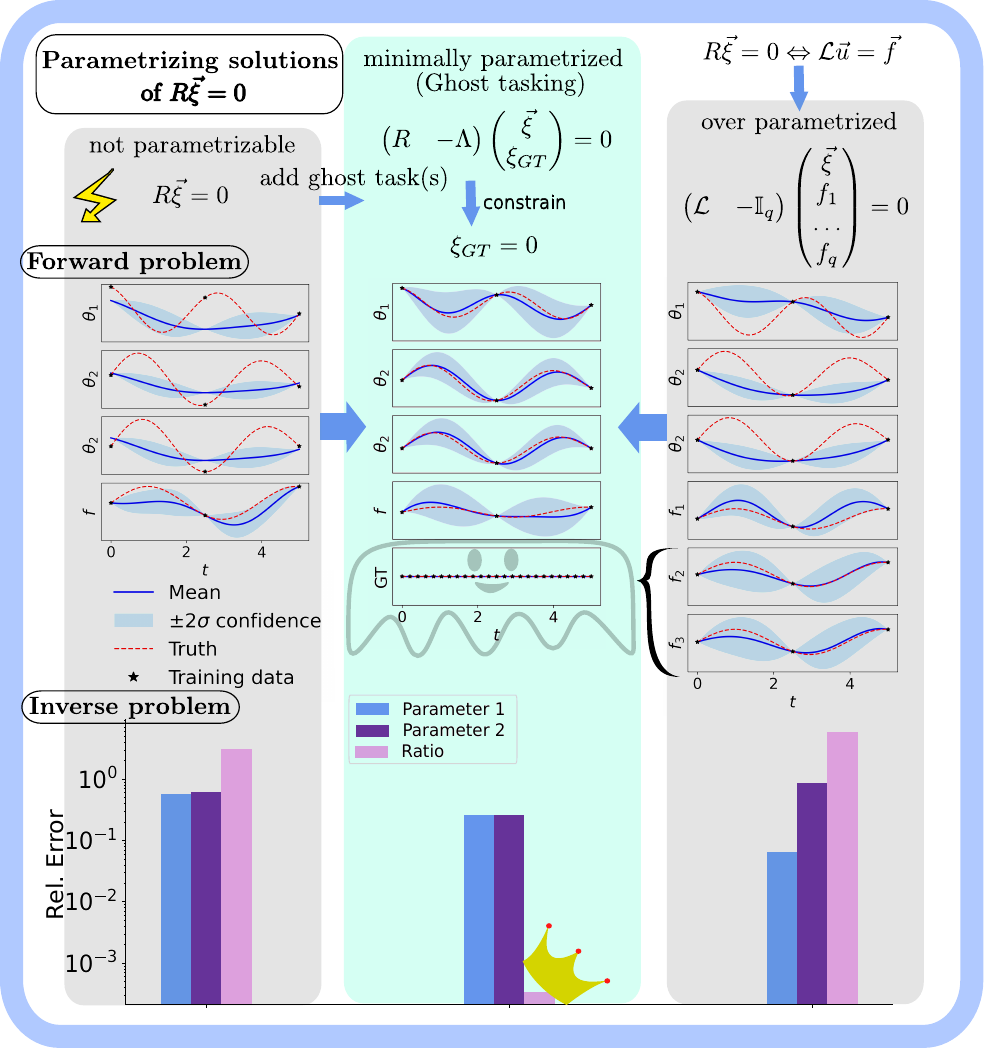}
\caption{Visualization of ghost tasking as the minimal amount of tasks necessary to depict any linear system of differential equations through parametrization. On top, we show the equations each system is solving, in the middle we show the forward fit of a physical system and below, the the mean absolute error of the two learned physical parameters and their ratio relative to the true values in an inverse problem.}
\label{fig:graphical_abstract}
\end{figure}
\end{graphicalabstract}

\begin{highlights}
\item Ghost tasking balances accuracy and applicability in physics-informed GPs
\item We connect the physics-informed GP methods of Raissi (2017) to Lange-Hegermann (2018)
\item We prove ghost tasking valid for full-row-rank systems with meromorphic coefficients
\item Ghost tasking improves accuracy over Raissi (2017) in low-data inverse problems
\item Computer algebra tutorials included for OreModules and Macaulay2
\end{highlights}

\begin{keyword}
Physics-informed machine learning
\sep Gaussian Processes
\sep Homological algebra
\sep Parametrizability
\sep Probabilistic machine learning
\sep Ghost tasking

\end{keyword}

\end{frontmatter}

\section{Introduction}
\label{sec:intro}
Understanding complex systems from data typically relies on either data-driven models or physics-based formulations. Hybrid approaches have emerged under the name physics-informed machine learning, aiming to integrate physical structure directly into learning systems.
A central design choice in such hybrid models is how physical laws are enforced. Soft constraint approaches incorporate differential equations as penalties, whereas hard constraint methods enforce them exactly. In this work, we focus on hard constraints within the Gaussian process (GP) framework, which guarantees exact solutions of linear differential equations through the structure of the covariance function (also called kernel) while retaining principled uncertainty quantification capabilities.

\subsection{Related Work}
Research on incorporating physically motivated constraints into Gaussian processes has accelerated considerably in recent years; a survey of methods as of 2020 can be found in \cite{swiler2020survey}. We provide here a brief and non-exhaustive orientation.

The first class of methods capable of enforcing hard constraints are parametrized Gaussian Processes, which we will focus on in this work. They construct covariance functions by exploiting the closure of Gaussian processes under linear transformations, applying linear differential operators directly to base kernel functions. The earliest works in this direction are \cite{dong_kriging, dong_diss, narcowich} which have subsequently been applied in \cite{ wahlstrom0, wahlstrom2015, berlinghieri2023gaussian} and rendered algorithmic in \cite{jidling2017linearly, raissi_machine_2017, lange-hegermann_algorithmic_2018, besginow_constraining_2022, harkonen_gaussian_2023}.

Apart from the parametrization-based methods, another line of work that enforces linear differential equations exactly is represented by Mercer-type GPs whose kernels are directly built from fundamental solutions of the differential equations at hand. This approach has been proposed in \cite{boogaart} and has been re-discovered and applied to different equations repeatedly since \cite{alvarez_linear_2013, albert_gaussian_2019}. They however have the disadvantage that we need to know the fundamental solution of the specific problem at hand beforehand, making them, in contrast to parametrization-based GPs, non-algorithmic. 

Similar approaches have been proposed to enforce boundary conditions through specific kernel choices and/or mean-function constructions \cite{ding2019bdrygp, tan2018gaussian}. Harmonic-feature representations based on Laplacian eigenfunctions on irregular domains \cite{pmlr-v89-solin19a} have also been proposed with the additional advantage of providing computationally efficient low-rank GP representations. Fulfilling boundary conditions can also be combined with fulfilling differential equations inside of the domains and has been explored for example in \cite{pmlr-v130-lange-hegermann21a, gulian2022gaussian}.

Finally, especially for nonlinear PDEs, physics is often incorporated during conditioning as a soft constraint rather than through exact prior construction. Numerical Gaussian process methods using Euler integration schemes and collocation-based formulations enforce differential equations at selected points and have been applied to a broad range of nonlinear problems \cite{chen_nonlinear, raissi_nonlinear, hamelijnck2024physics}, while recent work has further generalized GP conditioning to flexible classes of constraints and observations \cite{moss2026conditioning}.

\subsection{Contribution}
The present work focuses on parametrization-based GP models that incorporate linear differential operators in the covariance function. Parametrization-based methods provide the strongest combination of exact constraint satisfaction and algorithmic construction, yet they remain subject to important limitations.
The method of Raissi et al. \cite{raissi_machine_2017} is applicable to a broad class of linear differential equations, but can lack robustness in low-data regimes \cite{mymaxent}. More efficient techniques however rely on restrictive assumptions such as parametrizability \cite{lange-hegermann_algorithmic_2018}, or the coefficients in the differential equations being constant \cite{besginow_constraining_2022, harkonen_gaussian_2023}.  We address these limitations by introducing \emph{ghost tasking}, a construction that generalizes parametrization-based GP models to traditionally not parametrizable systems, including those with non-constant coefficients, while preserving performance in forward and inverse problems. Therein, ghost tasks are an auxiliary (ghost) dimension in the GP output; i.e. the a non-parametrizable system is lifted to a higher dimension where it becomes parametrizable. After parametrizing the system in this higher dimension, we can conveniently convert the GP back to the original lower dimension by conditioning the ghost tasks to zero. 

A detailed conceptual comparison with existing methods is deferred to chapter~\ref{sec:discussion}.\\

The main idea of ghost tasking is to (1) embed an  non-parametrizable system represented by its operator matrix in a higher-dimensional space by adding ghost tasks, (2) compute the now fully parametrizable null space of the embedded operator matrix, (3) apply the null space as a linear operator to parametrize a base kernel of a GP and (4) re-constrain the embedded parametrized GP to the original solution space by conditioning the ghost tasks to be point wise zero. 
As discussed in \cite{lange-hegermann_algorithmic_2018}, the computation of operator null spaces can be performed very efficiently with computer algebra programs. Two open source packages that work well are \texttt{OreModules} \cite{oremodules} and \texttt{Macaulay2} \cite{macaulay2}.\\

Our contributions are the following:
\begin{enumerate}
\item We identify the work of Raissi et al. (2017) \cite{raissi_machine_2017} and Lange-Hegermann (2018)\cite{lange-hegermann_algorithmic_2018} regarding \lq\lq physics-informed GPs\rq\rq and GPs for differential equations more generally as belonging to the same methodological spectrum. 
\item We introduce ghost tasking as the optimum on said spectrum, bridging the gap between accuracy and applicability. 
\item We prove the theoretical validity of ghost tasking for all systems with full row-rank, and polynomial, rational, or generally meromorphic coefficients
\item We show improved accuracy, especially in low data settings, for inverse problems in 3 experiments and systematic comparisons to Raissi et al. \cite{raissi_machine_2017}
\item Our method is fully compatible with the already freely available python package \texttt{PCGP} \cite{mymaxent} and we include tutorials for using ghost tasking with the computer algebra programs \texttt{OreModules} \cite{oremodules} and \texttt{Macaulay2} \cite{macaulay2}.
\end{enumerate}

The remaining paper will be structured in the following way: in chapter \ref{sec:basics} we first introduce Gaussian Processes and subsequently discuss two state of the art methods for differential equations with non-constant coefficients to parametrize GPs \cite{raissi_machine_2017, lange-hegermann_algorithmic_2018} in chapter \ref{subsec:paramGP}. In chapter \ref{subsec:algebraintro}, we give an introduction to the algebraic concepts that we use in chapter \ref{subsec:params} to analyze the methods introduced in chapter \ref{subsec:paramGP} with regards to parametrizability. Our theoretical contributions are introduced in chapter \ref{sec:theorem}. In chapter \ref{subsec:mlhraissi}, we show how \cite{raissi_machine_2017} and \cite{ lange-hegermann_algorithmic_2018} are related and use these findings to develop ghost tasking in chapter \ref{sec:theorem}. We analyze ghost tasking theoretically in chapter \ref{subsec:ghostprops}. In chapter \ref{sec:experiments} we show three different experiments applying ghost tasking. Each of the experiments contains an algebraic part where we show the application of ghost tasking, followed by a numerical implementation and evaluation of the problem in a forward and inverse setting, and systematic performance comparison to \cite{raissi_machine_2017}, since this is the only available GP parametrization method next to ghost tasking that is general enough so that it can be applied to all proposed experiments. Finally, we discuss practical aspects of ghost tasking and place it within the context of the landscape of currently available methods in chapter \ref{sec:discussion}. In \ref{ap:macaulay} we provide and explain code snippets for the practical use of the computer algebra programs \texttt{OreModules} \cite{oremodules} and \texttt{Macaulay2} \cite{macaulay2} for ghost tasking. Ablation studies on the ghost tasking method can be found in \ref{ap:ablation}, and an example for commutative cases as discussed in chapter \ref{sec:discussion} is provided in \ref{app:commutative}.

\section{Background} 
\label{sec:basics}
\subsection{Gaussian Processes}
\newcommand{\vv}{\mathbf}
Gaussian Processes are a flexible framework for regression function approximation that naturally incorporates uncertainty. For a comprehensive introduction, see e.g. \cite{rasmussen_gaussian_2006}.
Given an input set ${\cal X} \subset \mathbb{R}^d$ with elements $\vv x, \vv x' \in {\cal X}$, a GP $g \sim {\cal GP}(m(\vv x), k^{\vartheta}(\vv x,\vv x'))$ is defined by a mean function $m(\vv x)$ and a kernel function $k^{\vartheta}(\vv x, \vv x')$ with hyperparameters $\vartheta$. For any finite set of inputs ${\cal X}^n \ni X = (\vv x_1, ..., \vv x_n)$, $g(X)$ follows a multivariate normal distribution with mean $\vec{\mu} = m(X)$ whose entries are $\mu_i = m(\vv x_i)$ and covariance matrix $\Sigma$ frequently written as $k^{\vartheta}(X, X)$ since the entries of the covariance matrix are defined through the kernel function as $\Sigma_{ij} = k^{\vartheta}(\vv x_i, \vv x_j)$. 
Given $n$ noisy observations $Y = (y_1,\cdots ,y_n)^T$ with $y_i = g(\vv x_i) + \varepsilon_i$ at the points $X$ with independently distributed measurement noise terms $\varepsilon_i \sim {\cal N}(0, \sigma^2)$, conditioning yields again a GP whose mean and covariance matrix at input new points $X_*$ are obtained by the standard multivariate Gaussian conditioning formulas
\begin{align}
	&m_{|X, Y}(X_*) = m(X_*) + k^{\vartheta}(X_*, X)\left(k^{\vartheta}(X, X) + \sigma^2\mathbb{I}\right)^{-1}(Y-m(X)) \nonumber\\
	&k_{|X,Y}(X_*, X_*) = k^{\vartheta}(X_*, X_*) - k^{\vartheta}(X_*, X)(k^{\vartheta}(X, X)+\sigma^2\mathbb{I})^{-1}k^{\vartheta}(X, X_*). 
\end{align}

The kernel encodes prior beliefs about our function such as smoothness, differentiability, and, through $\vartheta$, hyperparameters such as relevant lengthscales. 
The hyperparameters $\vartheta$ are generally chosen by minimizing the negative log marginal likelihood, i.e. for a zero-mean prior
\begin{equation}-\log\left(p(Y|\vartheta, X, \sigma)\right) = \frac{1}{2} Y^T (k^{\vartheta}(X, X)+\sigma^2\mathbb{I})^{-1}Y + \frac{1}{2} \log\left(\det (k^{\vartheta}(X, X)+\sigma^2\mathbb{I})\right) + \frac{n}{2}\log(2\pi)
\label{eq:mll}
\end{equation}
consisting of a data-fit term ($Y^T (k^{\vartheta}(X, X)+\sigma^2\mathbb{I})^{-1}Y$), a complexity penalty ($\log( \det (k^{\vartheta}(X, X)+\sigma^2\mathbb{I}))$), and a constant term to normalize the likelihood.
If we have prior beliefs about the hyperparameters $p(\vartheta)$ we can simply add $-\log(p(\vartheta))$ to \eqref{eq:mll} to obtain the maximum a posteriori (MAP) estimates of $\vartheta$.
Just like their finite dimensional analogues, multivariate Gaussian distributions, Gaussian Processes are also closed under linear transformations $T$, i.e. linearly transforming the output of a GP corresponds to sampling from another GP whose mean and kernel are transformed, formally 
\begin{equation}
    Tg \sim {\cal GP}(Tm(\vv x), Tk^{\vartheta}(\vv x, \vv x') T')\,
\end{equation} 
where $T$ acts on $\vv x$ and $T'$ acts on $\vv x'$.

For multidimensional outputs, we can define the so-called "multitask GP" by stacking the outputs (also called "tasks") in the mean vector and covariance matrix. This multitask structure and GPs' closedness under linear transforms gives us the flexibility to build GPs whose outputs are coupled by a given linear system of differential equations, rather than treating each output as an independent GP. This allows us to encode physical constraints directly into the kernel of the model. 

\subsection{Parametrized Gaussian Processes}
\label{subsec:paramGP}
We now discuss two particular state-of-the-art methods for incorporating linear differential equations into GPs. To our knowledge, they are the only two methods in the literature that are both algorithmic and capable of parametrizing GPs for learning solutions of differential equations with non-constant coefficients. We use the umbrella term "parametrized GPs" to refer to them collectively.
A linear system of differential equations can be written as
\begin{equation}
	{\cal L}^\phi(\vv x) \vec{u}(\vv x) = \vec{f}(\vv x) \quad \text{for all } \vv x \in \mathcal{X}.
    \label{eq:system1}
\end{equation}
Here, $\vec{u}(\vv x)$ is a vector containing all unknown functions (i.e. the solution), and $\vec{f}(\vv x)$ contains the inhomogeneities. ${\cal L}^\phi(\vv x)$ is a matrix containing a set of physical parameters $\phi$ and differential operators $\partial_{\vv x}$ with coefficient functions depending on $\vv x$. \\

Differential operators acting on functions of variables $ x_1,..., x_n$ can be organized into algebraic rings. Depending on the class of coefficient functions allowed, one obtains different rings -- for example rings of differential operators with meromorphic or rational coefficient functions. Of particular practical interest is the case of polynomial coefficients, leading to Weyl algebras, which are especially convenient for computer algebra. We will use them for our experiments in section \ref{sec:experiments}. For describing differential equations in $n$ variables, we define the $n$-th Weyl algebra $A_n(\mathbb{R})$ as consisting of elements that are finite linear combinations of terms $\Pi_{i = 1}^{n}x_i^{k_i}\partial_i^{l_i}$ with $k_i, l_i \in \mathbb{N}$ where the generators satisfy the commutation relations 
\begin{align}
[\partial_i, x_i]_{-} := \partial_i x_i-x_i\partial_i &= 1 \text{ for all } i = 1,....n\nonumber \\ 
[x_i, x_j]_{-} = [\partial_i, \partial_j]_{-} &= 0 \text{ for all } i, j\nonumber \\ 
[\partial_j, x_i]_{-} &= 0\text{ for }i \neq j,
\label{eq:commutator}
\end{align}
i.e. $\partial_i$ acts as differentiation with respect to $x_i$. A similar structure can be used for rational coefficients, giving the ring $B_n(\mathbb{R})$.

All results in this paper apply to a class of noncommutative rings that are are Noetherian domains, so-called "very simple domains", and have finite global dimension \cite{stafford}. Concretely, we state the results for Weyl algebras $A_n(\mathbb{R})$ and $B_n(\mathbb{R})$ (for polynomial and rational coefficients), rings with meromorphic coefficients, and, in the ODE case, rings with holomorphic (formal or locally convergent power series) coefficients. We will simply denote such a ring by $D$ throughout the paper unless stated otherwise.\\

The most prominent version to parametrize a Gaussian Process using operators in $D$ was introduced by \cite{raissi_machine_2017}. We will briefly summarize the method here: 
Let us assume the unknown functions $\vec{u}$ in Eq.~\ref{eq:system1} can be expressed through a "base GP" or "latent GP" $u(\vv x)\sim {\cal GP}(m_u(\vv x), k_{uu}(\vv x, \vv x))$. In general, this base GP has a base mean $m_u(\vv x)$ conventionally chosen to be zero and a sufficiently continuously differentiable base kernel $k_{uu}(\vv x, \vv x')$. A common choice for the latter is the squared-exponential kernel $k_{uu}(\vv x, \vv x') = A \exp \left(-\frac{||\vv x - \vv x'||^2}{2l}\right)$ that is infinitely often differentiable and has two hyperparameters $\vartheta$, the lengthscale $l$ and the scale factor denoted by $A$. If $\vec{u}(\vv x)$ has more than one output dimension, the base kernel can consist of multiple independent square-exponential kernels stacked in multitask fashion, with either shared or independent hyperparameter values. 
We can then exploit the linearity of ${\cal L}^{\phi}(\vv x)$ and model $\vec{u}(\vv x)$ and the inhomogeneities $\vec{f}(\vv x)$ together as a multitask GP of the form 
\begin{align}
	\begin{pmatrix}
	    \vec{u}(\vv x)\\ \vec{f}(\vv x)
	\end{pmatrix} \sim {\cal GP}\left(\vec{0}, \begin{pmatrix} k_{uu}(\vv x, \vv x') & {\cal L}^{\phi T}(\vv x') k_{uu}(\vv x, \vv x') \\
		{\cal L}^\phi(\vv x) k_{uu}(\vv x,\vv x') & {\cal L}^\phi(\vv x)k_{uu}(\vv x, \vv x'){\cal L}^{\phi T}(\vv x')\end{pmatrix}\right),
        \label{eq:PIGP}
\end{align}
where the transpose ${\cal L}^{\phi T}(\vv x')$ is applied to the second argument of the base kernel.
By conditioning on data points of $\vec{f}(\vv x)$ we can infer $\vec{u}(\vv x)$, which is called a \emph{forward problem}, and if we know data from both $\vec{f}(\vv x)$ and $\vec{u}(\vv x)$, we can learn the probability distribution of the physical parameters $\phi$ through hyperparameter optimization or sampling, which is referred to as the \emph{inverse problem}.
In analogy to physics-informed neural networks \cite{pinn}, we will refer to this method as "physics-informed Gaussian Processes" (PIGPs) and use them as comparison baseline for our method. \\

A downside of PIGPs is that they require a lot of tasks: for a system with $q$ unknown functions we need to model both the functions and their inhomogeneities, so we need $2q$ tasks, which is unfavorable in terms of the scaling of the computational effort with the number of dimensions. Accordingly, we also need $q$ independent base kernels, which makes the latent function space for $\vec{u}(\vv x)$ oftentimes unnecessarily large and thus requires more data to be sufficiently constrained. 
\cite{jidling2017linearly, lange-hegermann_algorithmic_2018} introduced an approach that is able to encode a system in a more compact way, making the latent function space smaller and rendering the approach more data efficient and more accurate in the inverse problem (\cite{mymaxent}). To explain this method, we will rewrite \eqref{eq:system1} in the form:\\
\begin{equation}
    R^{\phi}(\vv x)\vec{\xi}(\vv x) = 0 \quad \text{for all } \vv x \in \mathcal{X},
    \label{eq:system}
    \end{equation}
where $R^{\phi}(\vv x)\in D^{q \times p}$ is called the system matrix. As before, $\phi$ denotes physical parameters and the elements of $R^\phi(\vv x)$ lie in an operator ring $D$. Both the unknown functions and any inhomogeneities are components of the  $p$-dimensional vector-valued output function $\vec{\xi}(\vv x)$ with $p\geq q$. \\

The method of Lange-Hegermann \cite{lange-hegermann_algorithmic_2018} proposes to use the null space of $R^\phi(\vv x)$ in the operator space $D$ to solve the differential equation: If $Q^\phi(\vv x) \in D^{p \times n}$ denotes the null space of $R^\phi(\vv x)$, i.e. the matrix whose $n$ column vectors $Q^\phi_{\cdot,i}(\vv x)$ fulfill $R^\phi (\vv x)Q^\phi_{\cdot,i}(\vv x) = 0$, then for any set of (sufficiently differentiable) functions $g_i$, the function $\vec{\xi}(\vv x) := \sum_{i = 1}^nQ^\phi_{\cdot,i}(\vv x)g_i = Q^{\phi}(\vv x)\vec{g}(\vv x)$ automatically fulfills 
\[R^{\phi}(\vv x)\vec{\xi}(\vv x) = R^{\phi}(\vv x)\big(Q^{\phi}(\vv x)\vec{g}(\vv x)\big) = \big(R^{\phi}(\vv x)Q^{\phi}(\vv x)\big)\vec{g}(\vv x) = 0\vec{g}(\vv x) = 0,\]
and is therefore a solution to \eqref{eq:system}. 

Based on this observation, the author of \cite{lange-hegermann_algorithmic_2018} shows that we can build a Gaussian Process whose realizations $\xi(\vv x)$ are dense in the solution space of Eq.~\ref{eq:system} by transforming a base GP model $\vec{g}(\vv x)$ with this null space operator $Q^\phi(\vv x).$ We will refer to this method as "null space-based" (NSB) method. The NSB Gaussian Process has the form 
\begin{equation}
\vec{\xi}(\vv x) \sim \mathcal{GP}\left(Q^{\phi}(\vv x)m_0(\vv x), Q^{\phi}(\vv x)k_0(\vv x,\vv x')Q^{\phi T}(\vv x')\right).\end{equation}
As in PIGPs, the mean is frequently chosen to be zero. 
This construction requires one task per row of $Q^{\phi}(\vv x)$ and one latent base kernel per column of $Q^{\phi}(\vv x)$. For $q$ unknown functions,
$Q^{\phi}(\vv x)$ has at most $n\leq q$ columns and $q+1 \leq p \leq 2q$ rows, depending on how many components of $\vec{\xi}(\vv x)$ are needed to model the inhomogeneities. So, this method never needs more tasks or latent functions than PIGPs, and typically needs fewer of both. This results in general in a stronger coupling between the remaining tasks which in turn can help improve performance on the inverse problem of learning physical parameters as we have shown in \cite{mymaxent}. 

The NSB method \cite{lange-hegermann_algorithmic_2018} is applicable whenever a system is \textit{parametrizable}. 
Otherwise, finding a $Q^\phi$, such that \(R^\phi Q^\phi = 0\), only guarantees that every function generated through \(Q^\phi\) is a solution of the differential equation. It does not yet guarantee the converse, namely that every solution of the differential equation can be generated through \(Q^\phi\). In other words,
    $\operatorname{im}(Q^\phi) \subseteq \ker(R^\phi)$
always holds, whereas parametrizability ensures the stronger equality
    $\operatorname{im}(Q^\phi) = \ker(R^\phi)$.
Therefore, for a non-parametrizable system, the parametrized GP is only able to realize certain solutions, but not all solutions, or can, in the (frequent) extreme case, only produce the trivial solution.  We make this distinction precise in Sections~2.3--2.4.

\subsection{A short introduction to some algebraic concepts}
\label{subsec:algebraintro}
We will give a minimal introduction to necessary algebra in this section. Our work, however, remains focused on physics-informed machine learning, and not on algebra, so we intend to avoid overwhelmingly detailed definitions and provide just enough context to develop our method and for the reader to apply our results in a practical setting. For a detailed presentation of the rich topic of homological algebra, we refer interested readers for example to \cite{cours, dmodules, rotmancours}. We have already introduced the class of rings containing our differential operators that we will work over as $D$ in the last section. We  keep the notation for the system matrix $R$ as in Eq.~\ref{eq:system}, but drop parameter specifications $\phi$ and the argument $(\vv x)$ for better readability throughout the whole section. 

Firstly, we need to introduce multidimensional objects over rings. Just like vector spaces can be spanned over fields, modules can be thought of as the analogue over rings $D$. Because the rings we consider are in general non-commutative, we distinguish between left and right modules, depending on which side elements act from. Because we have fixed commutation relations (such as Eq.~\ref{eq:commutator} in the Weyl algebras), we can change between left and right modules easily via a mapping called involution (used frequently in computer algebra programs), so most arguments and definitions hold for both left and right modules, which is why we drop the specification at times. 

Given a left module $M$, we define some characteristics: 
\begin{definition}[\cite{cours, rotmancours}]
    We say 
    \begin{itemize}
        \item $M$ is \emph{stably free} if there exists $r, s \in \mathbb{N}_0$, such that $M\oplus D^{1\times s} \cong D^{1\times r}$, where $\oplus$ denotes the direct sum of modules. 
        \item $M$ is \emph{projective} if there exists $r \in \mathbb{N}_0$ and a left $D$-module $P$ such that $M\oplus P \cong D^{1\times r}.$
        \item $M$ is \emph{torsion-free} if the torsion left $D$-submodule defined by 
        \[t(M) := \{m \in M | \exists d \in D \setminus \{0\}: dm = 0 \}\]
        is reduced to 0, i.e. $t(M) = \{0\}$.
        \item $M$ is \emph{torsion} if $t(M) = M.$ 
    \end{itemize}
    Similar definitions hold for right modules.
    \label{def:torsionundco}
\end{definition}
If $M$ is stably free, it is automatically also projective, and if $M$ is projective, it is also torsion-free \cite{rotmancours}. If $M$ is not torsion-free, we say it \textit{has} torsion if $0 \neq t(M) \subset M$, or \textit{is} torsion, if $t(M) = M$. 

The following theorems hold for modules that are torsion:
\begin{theorem}[Stafford's theorem \cite{staffordog} \cite{stafford}, adapted from Thm.~9, Cor.~2; cyclicity part adapted from {\cite[Thm.~4--5]{serre}}]
\label{th:staffordserre}
Let $M$ be a finitely generated torsion left (resp.\ right) $D$-module, with $D$ defined as in chapter \ref{subsec:paramGP}.
\begin{enumerate}
    \item[(1)] $M$ can be generated by two elements. 
    \item[(2)] If, in addition, $D$ is a ring of \emph{ordinary} (i.e.\ single-variable) differential operators with polynomial, formal power series, or locally convergent power series coefficients, then $M$ is cyclic, i.e.\ can be generated by a single element.
\end{enumerate}
\end{theorem}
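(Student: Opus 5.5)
The plan is to follow Stafford's generator-reduction argument for part (1), and a cyclic-vector argument for part (2). We use only three properties of $D$: it is a simple Noetherian domain, it has finite Krull (and global) dimension, and every nonzero element generates $D$ as a two-sided ideal. The last property, $DaD=D$ for $a\neq 0$, holds because $D$ is a very simple domain.

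For part (1), we write $M = Dm_1+\dots+Dm_k$ and argue by induction on $k$, reducing $k$ by one whenever $k\ge 3$. Torsion means each $m_i$ has a nonzero annihilator $\mathrm{ann}(m_i)\neq 0$. The reduction step is as follows. Given generators $m_1,\dots,m_k$ with $k\geq 3$, we seek $d_1,\dots,d_{k-1}\in D$ such that the elements $m_i' := m_i + d_i m_k$ for $i<k$ already generate $M$. This holds exactly when $m_k \in \sum_{i<k} D m_i'$.

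We would attack the reduction step by induction on the Krull dimension of the quotient modules involved, using Stafford's stable-range lemma for simple Noetherian rings. Since $M$ is torsion, $\mathrm{K}(M) < \mathrm{K}(D)$. The lemma then lets us choose the $d_i$ so that $M/\sum_{i<k} Dm_i'$ has strictly smaller Krull dimension. This is repeated until the quotient is zero. Simplicity enters at the point where we must move $m_k$ into the submodule. For $0\neq a\in\mathrm{ann}$, the identity $DaD=D$ writes $1=\sum_j r_j a s_j$, which allows us to absorb $m_k$ using elements of the form $s_j m_k$. This is why the bound is two generators, independent of $k$ and of the number of variables.

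We expect this step to be the main obstacle. Choosing the $d_i$ generically enough requires a careful induction on Krull dimension, or on GK dimension in the Weyl-algebra case. It also requires checking that "generic" choices exist in a noncommutative ring, for which we rely on the argument in \cite{stafford}.

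For part (2), let $D$ be an ordinary differential operator ring $k[x][\partial]$, $k[[x]][\partial]$ or $k\{x\}[\partial]$, and let $K$ be the fraction field of its coefficient ring. The plan has three steps.

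1. Localize to $D_K = K[\partial]$. For the torsion module $M$, the localization $M_K = D_K\otimes_D M$ is a finite-dimensional $K$-vector space carrying a connection. By the cyclic vector lemma, valid in characteristic zero, $M_K$ is cyclic: $M_K = D_K\, v$ for some $v$.
2. Clear denominators. Choose $v\in M$ and a nonzero coefficient $c$ such that $c\,M \subseteq Dv + M'$, where $M'$ has finite length and is supported at the zeros of $c$.
3. Finish in the base ring. For the power-series rings, which are local, these singular parts are controlled directly. For $A_1$, we use simplicity together with part (1) to merge the two generators: given $M = Dm_1 + Dm_2$, we pick $d$ so that $m_1 + d m_2$ generates. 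This is the argument of \cite{serre}.
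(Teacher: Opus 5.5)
For part (1) you end up where the paper does. The paper proves nothing itself: it imports the two-generator bound from Stafford and Quadrat--Robertz, and your sketch also leaves the decisive choice of the $d_i$ to \cite{stafford}. Be aware that the heuristic you give for absorbing $m_k$ is not an argument. From $1=\sum_j r_j a s_j$ you only get $m_k=\sum_j r_j a s_j m_k$, and $a\in\mathrm{ann}(m_k)$ does not kill $s_j m_k$. The sketch should therefore be read as a pointer to Stafford's proof, not as a proof.

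The genuine gap is in part (2). Steps 1--2 only see the generic part of $M$. What they actually deliver is some $v\in M$ with $D_K\otimes_D(M/Dv)=0$, i.e.\ $M/Dv$ supported at finitely many points. (The inclusion $cM\subseteq Dv+M'$ is not a statement about submodules, since $c$ does not commute with $D$.) Summands such as $D/Dx$ vanish entirely under localization, so everything rests on step 3, where you only assert the conclusion. For power series the singular part is said to be ``controlled directly''. For $A_1$ a $d$ with $D(m_1+dm_2)=M$ is said to exist by simplicity plus part (1). That justification cannot suffice: simplicity and two-generation hold verbatim for $A_n$, $n\ge 2$. An argument using only these two facts would therefore prove Stafford's conjecture \cite{staffordog} that torsion modules are cyclic, which the paper itself describes as open.

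The missing one-variable ingredient is finite length. Over $k[x][\partial]$, $k[[x]][\partial]$ and $k\{x\}[\partial]$, every finitely generated torsion module is holonomic and hence of finite length. This fails for $n\ge 2$: $A_2/A_2\partial_1$ is torsion but has infinite length. Combine this with the lemma that, over a simple ring $R$ which does not have finite length as a left module over itself, every module $M$ of finite length is cyclic. To prove the lemma, induct on length: pick $S\subseteq M$ simple and $u$ with $M=Ru+S$. Then $L=\mathrm{ann}(u)\neq 0$ because $R/L\cong Ru$ has finite length. Also $\mathrm{ann}_R(S)=0$ by simplicity, so some $s\in S$ has $Ls\neq 0$, hence $Ls=S$. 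Thus $S\subseteq R(u+s)$, so $u\in R(u+s)$ and $R(u+s)=M$. Together these give (2); this is the holonomic-implies-cyclic mechanism the paper relies on when citing \cite{serre}. With these two facts your cyclic-vector localization becomes unnecessary.
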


To study differential equations in operator space (i.e. over $D$-modules), we need to study the left quotient row-module of the differential equation characterized by the system matrix $R$, defined as
\begin{equation}
    M := D^{1\times p}/(D^{1 \times q}R)
\end{equation}
i.e. $D^{1\times p}$ modulo the relations defined in the rows of $R$. 
In order to characterize $M$, we furthermore need to define its \textit{Auslander transpose} $N := D^{q}/(RD^{p})$, and so called \textit{extension modules}, written ext$^i_D(M, D)$. These measure obstructions in chains of mappings from and to $M$, but only depend on the module $M$ itself, not on the particular choice of the mapping. A more rigorous introduction based on algebraic diagrams can be found for example in \cite{cours}. For full row-rank matrices $R$ (\cite{serre}), it holds that
\begin{equation}
    ext^1_D(M,D)\cong D^q/(RD^p) = N.
    \label{eq:ext}
\end{equation}
\begin{lemma}[{\cite[Prop.~2.2.1]{amc1311}}]
\label{lem:torsion}
Let $M$ be a finitely generated right (resp.\ left) $D$-module. Then for $i \ge 1$ the modules $\mathrm{ext}^i_D(M,D)$ are either zero or finitely generated torsion left (resp.\ right) $D$-modules.
\end{lemma}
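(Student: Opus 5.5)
The plan is to compute $\mathrm{ext}^i_D(M,D)$ from a finite free resolution of $M$ and then use the fact that $D$ is a Noetherian domain, hence an Ore domain, so that it has a skew field of fractions $K$. Since $D$ is left Noetherian and $M$ is finitely generated, $M$ admits a resolution by finitely generated free modules
\[
\cdots \to D^{1\times p_2} \xrightarrow{\cdot R_2} D^{1\times p_1} \xrightarrow{\cdot R_1} D^{1\times p_0} \to M \to 0 .
\]
Applying $\hom_D(\,\cdot\,,D)$ gives a complex of finitely generated free modules on the other side, $D^{p_0} \xrightarrow{R_1\cdot} D^{p_1} \xrightarrow{R_2\cdot} D^{p_2}\to\cdots$. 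By definition, $\mathrm{ext}^i_D(M,D)$ is the homology of this complex at position $i$, that is, a subquotient $\ker(R_{i+1}\cdot)/\mathrm{im}(R_i\cdot)$ of a finitely generated module. Because $D$ is Noetherian, submodules of finitely generated modules are finitely generated, and so is every subquotient. This settles the finite generation part at once; it matches the full row-rank case in Eq.~\eqref{eq:ext}, where $\mathrm{ext}^1$ is the quotient $D^q/(RD^p)$ of a free module.

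For the torsion part, I would localize at the Ore set $S=D\setminus\{0\}$. Localization $K\otimes_D(\cdot)$ is exact, since $K$ is flat over $D$. It also commutes with $\hom$ out of finitely presented modules, so $K\otimes_D \mathrm{ext}^i_D(M,D)\cong \mathrm{ext}^i_K(K\otimes_D M,K)$. Over the skew field $K$ every module is free, hence projective, so these ext groups vanish for $i\ge1$. An element $m$ of a module $E$ with $K\otimes_D E=0$ becomes zero after localization, so there is some $d\neq0$ with $dm=0$. Hence $E=t(E)$. Applied to $E=\mathrm{ext}^i_D(M,D)$, this gives that each such module is either zero or torsion.

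The main obstacle is the side bookkeeping. $M$ and its ext modules live on opposite sides, and this needs both the left and the right Ore conditions, which hold because $D$ is two-sided Noetherian. It also needs the isomorphism $K\otimes_D\hom_D(F,D)\cong\hom_K(K\otimes F,K)$ for finitely generated free $F$, which I would check directly on matrices. Alternatively, one can avoid localizing ext itself. Tensor the dual complex with $K$: since $K$ is flat, its homology at position $i\ge1$ is $K\otimes\mathrm{ext}^i$. But the localized complex is exactly the dual of the localized resolution, which is a resolution over a skew field, so its homology vanishes in positive degrees.
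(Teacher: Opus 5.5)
Your proof is correct. The paper gives no proof of this lemma and simply imports it from Pommaret--Quadrat. Your route is the standard localization proof of that cited result: finite generation comes from a resolution by finitely generated free modules over the two-sided Noetherian ring $D$, and torsion comes from $\mathrm{ext}^i_D(M,D)\otimes_D K \cong \mathrm{ext}^i_K(K\otimes_D M,K)=0$ for $i\ge 1$, using that the Ore localization at the skew field of fractions $K$ is flat on both sides.

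The only thing to tidy is the side bookkeeping you already flagged. For a left module $M$ the ext modules are right modules, so you should write $\mathrm{ext}^i_D(M,D)\otimes_D K$ rather than $K\otimes_D \mathrm{ext}^i_D(M,D)$, and the torsion condition reads $e\,d=0$ for some $d\neq 0$.
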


With $M$, $N$, and extension modules introduced, we can check characteristics of $M$ in the following ways (for proofs we refer to cited literature):
\begin{lemma}[\cite{cours}]
Let $M = D^{1\times p}/(D^{1 \times q}R)$ be the left $D$-module finitely presented by the full row-rank matrix $R \in D^{q\times p}$, and $N$ its Auslander transpose. Then
\begin{enumerate}
   \item  $M$ is torsion-free $\Leftrightarrow$ ext$^1_D(N,D) = 0$
   \item $M$ is projective $\Leftrightarrow$ ext$^i_D(N, D) = 0$ for all $i = 1,2,3,4....$
   \item $M$ is projective $\Leftrightarrow$ $R$ has a right inverse $S$, such that $RS = \mathbb{I}_q$
   \item $M$ is stably free $\Leftrightarrow$ ext$^1_D(M, D) = 0$
\end{enumerate}
\label{lem:exteigenschaften}
\end{lemma}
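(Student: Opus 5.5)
The four equivalences are standard results of homological algebra, so the plan is to recall the relevant long exact sequences and check each item in turn. Throughout we use that $D$ is a Noetherian domain of finite global dimension. We also use that $R$ has full row rank, i.e.\ $D^{1\times q}\xrightarrow{\cdot R} D^{1\times p}$ is injective. This gives the short free resolution
\[
0 \to D^{1\times q}\xrightarrow{\cdot R} D^{1\times p}\to M\to 0 .
\]
Applying $\hom_D(\cdot,D)$ to it yields $0\to \hom_D(M,D)\to D^p\xrightarrow{R\cdot} D^q\to \mathrm{ext}^1_D(M,D)\to 0$. This recovers \eqref{eq:ext}, $\mathrm{ext}^1_D(M,D)\cong N$, and shows $\mathrm{ext}^i_D(M,D)=0$ for $i\ge 2$.

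\textbf{Item 4.} Suppose $\mathrm{ext}^1_D(M,D)=0$. Then $R\cdot$ is surjective onto $D^q$, so there is $S\in D^{p\times q}$ with $RS=\mathbb{I}_q$. Hence the resolution splits and $M\oplus D^{1\times q}\cong D^{1\times p}$, so $M$ is stably free. Conversely, if $M\oplus D^{1\times s}\cong D^{1\times r}$, then $M$ is projective. By Schanuel's lemma the resolution splits, which gives $RS=\mathbb{I}$ and hence $N=0$.

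\textbf{Item 3.} The argument is the same. $M$ is projective if and only if the short exact sequence splits. Splitting is equivalent to $\cdot R$ admitting a left inverse $\cdot S$ as a map of row modules, i.e.\ $RS=\mathbb{I}_q$. So items 3 and 4 agree here: for full-row-rank presentations, projective and stably free coincide.

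\textbf{Items 1 and 2.} These concern $\mathrm{ext}^i_D(N,D)$, so we pass to the transposed presentation. $N$ is presented by $D^p\xrightarrow{R\cdot}D^q\to N\to 0$. Dualizing gives the complex
\[
0\to \hom_D(N,D)\to D^{1\times q}\xrightarrow{\cdot R}D^{1\times p}\to M\to 0 .
\]
Choose a free resolution $D^{p}\xleftarrow{Q\cdot} D^{n}\leftarrow\cdots$ extending the presentation of $N$. Then $\mathrm{ext}^1_D(N,D)=\ker(\cdot Q)/\mathrm{im}(\cdot R)$, i.e.\ the obstruction to $D^{1\times q}R$ being all syzygies of $Q$. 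This is the classical identification $t(M)\cong \mathrm{ext}^1_D(N,D)$, which is the \textbf{key step}. To prove it, one shows that $\ker(\cdot Q)/D^{1\times q}R$ is exactly the kernel of the canonical map $M\to M^{**}$, where $M^{**}=\hom_D(\hom_D(M,D),D)$. For a Noetherian domain (Ore condition), that kernel equals $t(M)$. One inclusion holds because torsion elements die in any map to the torsion-free module $D$. The other uses the Ore localization to the skew field of fractions: an element not killed by any nonzero $d$ survives in $K\otimes_D M$, hence is detected by some functional.

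For item 2, one uses the higher-ext characterization. $M$ is projective if and only if $\mathrm{ext}^i_D(N,D)=0$ for all $i\ge1$, by iterating the above through the dual resolution and using finite global dimension so that the sequence terminates. Vanishing of all these ext's means the dual complex is exact, so $M$ embeds as a direct summand of a free module. Conversely, projectivity of $M$ splits everything.

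The main obstacle is item 1's identification of torsion with the kernel of $M\to M^{**}$ in the noncommutative setting, which requires the Ore property. Following the paper, we would cite \cite{cours} for this rather than reprove it.
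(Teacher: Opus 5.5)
Your proposal is correct and is essentially the standard argument of the source the paper defers to (\cite{cours}; the paper itself gives no proof): the splitting of the length-one free resolution handles items 3--4, and $\mathrm{ext}^1_D(N,D)\cong\ker(M\to M^{**})=t(M)$ together with finite global dimension handles items 1--2. Two small corrections: in item 4 the splitting comes from the lifting property of the projective module $M$ (as you use in item 3), not from Schanuel's lemma, which only gives an abstract isomorphism $M\oplus D^{1\times q}\cong D^{1\times p}$ and not $RS=\mathbb{I}_q$; and the converse direction of item 2 is immediate, because $RS=\mathbb{I}_q$ forces $N=0$.
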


To study solutions of differential equations - i.e. functions - we need not only the operator characterized above through modules, but also a function space ${\cal F}$ over which we want to find these solutions. Formally, ${\cal F}$ is also a left $D$-module, and depending on the chosen ring $D$, ${\cal F}$ can be \textit{injective}, or not. As a special case, it is well known that for a ring of differential ODE or PDE operators with \textit{constant} coefficients, $C^{\infty}$ is an injective function space \cite{oberst, gradefiltration}. 
However, for the more general case of $D$ being a Weyl algebra or having general meromorphic coefficients, considered here, standard $C$ function spaces including $C^{\infty}$ are not injective, which has consequences on parametrizability.

\subsection{Algebraic parametrization}
\label{subsec:params}
\begin{figure}[h!]
\centering
  \includegraphics[width=\textwidth]{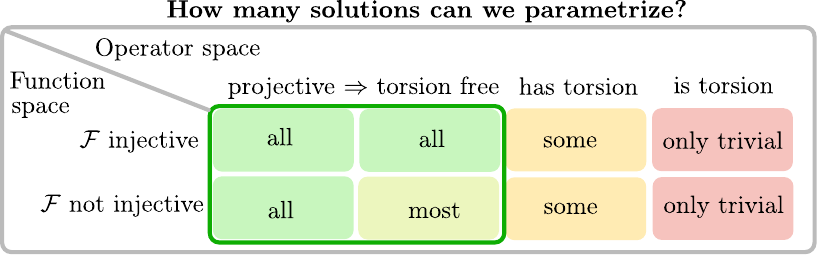}
\caption{Overview over parametrizability in operator space $D$ and function space ${\cal F}$. Every combination can be read as "Systems represented by modules in operator space that are [i.e. torsion-free] can parametrize [i.e. all] solutions over the function space ${\cal F}$ if [i.e. ${\cal F}$ is injective]"}
\label{fig:param_overview}
\end{figure}
Combining the definitions introduced in the previous section, we can now study solutions $\xi$ of our system \eqref{eq:system}. A graphical overview over the results of this section can be found in Fig. \ref{fig:param_overview}.
The following relations hold
\begin{lemma}[{\citep[Corollary 4.1]{cours}}]
Let $M = D^{1\times p}/(D^{1 \times q}R)$ be the left $D$-module finitely presented by the full row-rank matrix $R \in D^{q\times p}$. Let ${\cal F}$ be the function space over which we want to find solutions. Matrices always act on the right. 
    \begin{enumerate}
        \item If $M$ is torsion-free, there exists a non-trivial parametrization matrix $Q \in D^{p\times n}$ such that $RQ = 0$. Functions ${\cal F}^p \ni \vec{\xi} = Q\vec{f}$ with $\vec{f} \in {\cal F}^n$ are solutions of Eq.~\ref{eq:system}. Formally, we write $im_D(Q) = ker_D(R)$ and $im_{\cal F}(Q) \subseteq ker_{\cal F}(R)$.
        \item If $M$ is torsion-free and ${\cal F}$ is injective, \emph{all} solutions in ${\cal F}$ can be parametrized using $Q$, i.e. $im_{\cal F}(Q) = ker_{\cal F}(R)$.
        \item If $M$ is projective, \emph{all} solutions in ${\cal F}$ can be parametrized using $Q$, i.e. $im_{\cal F}(Q) = ker_{\cal F}(R)$, even if ${\cal F}$ is not injective.  
        \item If $M$ is stably free, then $M$ is in particular also projective (Definition~\ref{def:torsionundco}), so (3) applies as well.
    \end{enumerate}
    \label{lem:parametrizability}
\end{lemma}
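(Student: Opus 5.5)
We prove the four items of Lemma~\ref{lem:parametrizability} using the duality between the row module $M$ and the solution functor $\hom_D(M,{\cal F})$. The basic identification is Malgrange's isomorphism $\ker_{\cal F}(R)\cong \hom_D(M,{\cal F})$, which sends a solution $\vec\xi$ to the morphism mapping the residue class of the $j$-th unit row to $\xi_j$.

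For item (1), we use Lemma~\ref{lem:exteigenschaften}(1): $M$ is torsion-free if and only if $\mathrm{ext}^1_D(N,D)=0$. To exploit this, we compute a matrix $Q\in D^{p\times n}$ whose columns generate the right kernel $\ker_D(R.)=\{v\in D^p: Rv=0\}$. Such a $Q$ exists because $D$ is Noetherian. We then compute a generating matrix $R'$ of the left kernel of $Q$. Since $RQ=0$, we always have $D^{1\times q}R\subseteq D^{1\times q'}R'$, and the quotient $(D^{1\times q'}R')/(D^{1\times q}R)$ is exactly $\mathrm{ext}^1_D(N,D)$. Torsion-freeness therefore forces $R$ and $R'$ to generate the same row module. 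This gives exactness of
\[D^{1\times q}\xrightarrow{.R} D^{1\times p}\xrightarrow{.Q} D^{1\times n}\]
at the middle term, which is the operator-level statement. The column-kernel statement $\mathrm{im}_D(Q)=\ker_D(R)$ holds by the construction of $Q$. Nontriviality of $Q$ follows from full row rank: when $p>q$, the kernel has rank $p-q>0$. The inclusion $\mathrm{im}_{\cal F}(Q)\subseteq\ker_{\cal F}(R)$ is the associativity computation $R(Q\vec f)=(RQ)\vec f=0$ already shown in Section~\ref{subsec:paramGP}.

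For item (2), we apply the exact contravariant functor $\hom_D(\cdot,{\cal F})$ to the exact sequence above. Exactness of this functor is the definition of injectivity of ${\cal F}$. Under the identification $\hom_D(D^{1\times k},{\cal F})\cong{\cal F}^k$, the dualized sequence is
\[{\cal F}^n\xrightarrow{Q.}{\cal F}^p\xrightarrow{R.}{\cal F}^q,\]
and its exactness at ${\cal F}^p$ is precisely $\mathrm{im}_{\cal F}(Q)=\ker_{\cal F}(R)$.

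For item (3), we use Lemma~\ref{lem:exteigenschaften}(3): projectivity of $M$ gives a right inverse $S$ with $RS=\mathbb{I}_q$. Then $D^{1\times p}\cong D^{1\times q}R\oplus M$, and $M$ is a direct summand of a free module. We build a split exact sequence $0\to D^{1\times q}\to D^{1\times p}\to D^{1\times n}$ in which $Q$ is chosen as a projector-type matrix. Concretely, we take $Q=\mathbb{I}_p-SR$, a square matrix with $n=p$, and check that $RQ=R-RSR=0$. Given any $\vec\xi\in\ker_{\cal F}(R)$, we have
\[Q\vec\xi=\vec\xi-S(R\vec\xi)=\vec\xi,\]
so $\vec\xi\in\mathrm{im}_{\cal F}(Q)$ with no hypothesis on ${\cal F}$. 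This is exactly why injectivity is not needed: split sequences remain exact under any additive functor. If a minimal $Q$ is preferred, we show that any other kernel generator differs from this one by an invertible-up-to-factorisation change of parameters, so the image over ${\cal F}$ is unchanged.

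Item (4) is immediate from Definition~\ref{def:torsionundco}. If $M\oplus D^{1\times s}\cong D^{1\times r}$, then $M$ is projective with $P=D^{1\times s}$, and (3) applies.

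The main obstacle is item (1), namely identifying $\mathrm{ext}^1_D(N,D)$ with the quotient $(D^{1\times q'}R')/(D^{1\times q}R)$. Establishing this requires a free resolution of $N$ and a careful passage between left and right modules through the involution. Here we would follow \cite{cours}, rather than reprove the homological identification from scratch.
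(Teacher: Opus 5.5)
The paper gives no proof of this lemma; it cites \cite{cours}. Your items (1), (2) and (4) follow the standard argument behind that citation: take $Q$ generating the right kernel $\ker_D(R.)$, identify $\mathrm{ext}^1_D(N,D)$ with $\ker_D(.Q)/D^{1\times q}R$, deduce exactness of $D^{1\times q}\xrightarrow{.R}D^{1\times p}\xrightarrow{.Q}D^{1\times n}$ from torsion-freeness, and dualize with an injective ${\cal F}$. Your proof is correct, and item (3) is where you take a different route. A homological proof would use Lemma~\ref{lem:exteigenschaften}(2). Vanishing of all $\mathrm{ext}^i_D(N,D)$ makes the dual of a finite projective resolution of $N$ split exact, so it stays exact under $\mathrm{hom}_D(\cdot,{\cal F})$ for every ${\cal F}$. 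You instead use the right inverse $S$ and the projector $\mathbb{I}_p-SR$, which fixes every solution. That is a one-line, elementary argument and gives an explicit parametrization. The price is a square, non-minimal $Q$, so you need a transfer step to reach the $Q$ of item (1).

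That transfer step is the one imprecise sentence in your proposal. No invertible change of parameters is needed, nor is one available in general; a one-sided factorization suffices. If the columns of $Q'$ generate $\ker_D(R.)$, then $R(\mathbb{I}_p-SR)=0$ gives $\mathbb{I}_p-SR=Q'Y$ for some $Y$. Hence $\ker_{\cal F}(R)=(\mathbb{I}_p-SR){\cal F}^p\subseteq Q'{\cal F}^{n'}\subseteq\ker_{\cal F}(R)$.

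This covers right-kernel generators only, not every minimal parametrization in the row sense of Proposition~\ref{prop:minimalparam}. Over $A_1(\mathbb{R})$, take $R=(1\;\;0)$ and $Q'=(0\;\;x)^T$. Then $\ker_D(.Q')=D^{1\times 1}R$ with $p-q=1$ column, and $M\cong D$ is free. Yet the solution $(0,1)^T$ is not in $\mathrm{im}_{C^\infty}(Q')$. So restrict your remark about a minimal $Q$ to right-kernel generators.

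Finally, what you call the main obstacle is not one. Since $Q$ generates $\ker_D(R.)$, the sequence $D^{n}\xrightarrow{Q.}D^{p}\xrightarrow{R.}D^{q}\to N\to 0$ is the start of a free resolution of $N$. Applying $\mathrm{hom}_D(\cdot,D)$, with $\mathrm{hom}_D(D^k,D)\cong D^{1\times k}$, turns it into $D^{1\times q}\xrightarrow{.R}D^{1\times p}\xrightarrow{.Q}D^{1\times n}$. So the identification is just the definition of $\mathrm{ext}^1_D(N,D)$. The involution only matters for computing right kernels with left Gr\"obner bases.
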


\begin{definition}[Parametrizable systems]
If $im_{\cal F}(Q) = ker_{\cal F}(R)$, we call the system presented by $R$ parametrizable over the function space ${\cal F}$. 
\end{definition}
What happens if a system is not parametrizable over a certain function space? If a system over $D$ is torsion-free, but not projective, and ${\cal F}$ is not injective (as is the case for the function spaces considered here), some solutions may not be realizable, though a large portion of solution space is. If a system on the other hand has torsion ($t(M)\subset M$), we can in general only parametrize some solutions (for example for specific initial or boundary conditions). 
As a special case, if a system \emph{is} torsion ($t(M) = M$), the parametrization matrix $Q$ is trivial ($Q = 0$), and we can only parametrize the trivial, zero-in-every-task solution. 

\textit{Remark:} For linear systems governed by ordinary differential equations, the term parametrizability coincides with the notion of controllability in the sense of \cite{willems_controllability, fliess_controllability} in linear systems- and control theory. In contrast, for systems governed by partial differential equations, parametrizability is a stronger property, i.e. parametrizability always implies controllability \cite{zerz, pommaret1999algebraic} but not vice versa. Oftentimes in machine learning literature, parametrizability and controllability are used interchangeably \cite{lange-hegermann_algorithmic_2018, besginow_constraining_2022, harkonen_gaussian_2023}.

\begin{proposition}[{\cite[Definition 4.3, Theorem 4.2]{cours},\cite[Theorem 8]{chyzak}}]
If a system presented by the full row-rank matrix $R \in D^{q\times p}$ is torsion-free, there exist \textit{minimal parametrizations} $Q \in D^{p \times n}$, in the sense that the number of columns is minimal. This is the case, if $n = \mathrm{rank}(M) = p-q.$
\label{prop:minimalparam}
\end{proposition}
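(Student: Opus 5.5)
The plan is to obtain the proposition from the structure of torsion-free modules over the very simple Noetherian domains $D$ fixed in chapter~\ref{subsec:paramGP}, which are Ore domains. First I would fix the notion of rank: since $D$ is an Ore domain it has a skew field of fractions $K$, and $\mathrm{rank}(M):=\dim_K(K\otimes_D M)$. Tensoring the exact presentation $D^{1\times q}\xrightarrow{\cdot R}D^{1\times p}\to M\to 0$ with the flat module $K$ gives $\mathrm{rank}(M)=p-\mathrm{rank}_K(R)$. Full row rank means $\mathrm{rank}_K(R)=q$, so $\mathrm{rank}(M)=p-q$. This settles the value of $n$ in the statement.

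Next I show that no parametrization has fewer than $p-q$ columns. Let $Q\in D^{p\times n}$ satisfy $\ker_D(\cdot Q)=D^{1\times q}R$, as in Lemma~\ref{lem:parametrizability}(1). Then $\cdot Q$ embeds $M$ into $D^{1\times n}$. Since $K$ is flat, $K\otimes M$ embeds into $K^{1\times n}$, which gives $p-q\le n$.

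The existence of a parametrization attaining $n=p-q$ is the main obstacle. I would start from the standard fact in \cite{cours} that torsion-free means $M$ embeds into a free module, realised by any $Q$ from Lemma~\ref{lem:parametrizability}(1); such a $Q$ may have too many columns. To reduce it, let $r=p-q=\mathrm{rank}(M)$. Pick elements $m_1,\dots,m_r$ of $M$ that are $K$-linearly independent in $K\otimes M$. The cokernel $T=M/\langle m_1,\dots,m_r\rangle$ then has rank zero, so it is a finitely generated torsion module. Using the Ore condition one finds a common nonzero $d\in D$ with $dT=0$; more precisely, the left Ore condition applied to finitely many generators supplies denominators. This gives an injection $M\hookrightarrow D^{1\times r}$ of the form $m\mapsto$ the coordinates of a suitable multiple $dm$ with respect to the $m_i$. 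Here I would use torsion-freeness to ensure injectivity. The delicate point is the noncommutativity: left multiplication by $d$ is not a module map. So I would instead realise the embedding as the composite $M\hookrightarrow K\otimes M\cong K^{1\times r}$, whose image is finitely generated. After clearing denominators on the right by a common right denominator (Ore), this yields an isomorphic copy of $M$ inside $D^{1\times r}$, given by right multiplication by a matrix $Q\in D^{p\times r}$.

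Finally I verify $\ker_D(\cdot Q)=D^{1\times q}R$. This equals the kernel of $D^{1\times p}\to M\hookrightarrow D^{1\times r}$, and it coincides with the presentation kernel precisely because the map is injective on $M$. Hence $Q$ is a parametrization with $n=p-q$ columns, and by the lower bound it is minimal. As a cross-check I would note the alternative route of \cite{chyzak}: computing $\mathrm{ext}^1_D(N,D)=0$ via Lemma~\ref{lem:exteigenschaften}(1) yields some $Q$, and Stafford's Theorem~\ref{th:staffordserre} controls the generators of the torsion quotient $T$ in the column-reduction step.
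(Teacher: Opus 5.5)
Your argument is correct. It takes a genuinely different route from the paper, which gives no proof of its own and simply defers to Quadrat's notes and to Chyzak--Quadrat--Robertz.

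You prove everything by Ore localization. First, $\mathrm{rank}(M)=\dim_K(K\otimes_D M)=p-q$. Second, the lower bound $n\ge p-q$ follows from flatness of $K$. Third, existence follows by composing $M\hookrightarrow K\otimes_D M\cong K^{1\times(p-q)}$ with right multiplication by a common right denominator $s$. That map is left $K$-linear, so the composite stays injective and lands in $D^{1\times(p-q)}$.

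Two facts you use implicitly should be stated. The kernel of $M\to K\otimes_D M$ is exactly $t(M)$, and this is the only place torsion-freeness enters. Common right denominators need $D$ to be right Ore as well as left Ore, which holds for the two-sided Noetherian domains considered here.

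Your route needs no parametrization to start from, and in effect reproves the embedding behind Lemma~\ref{lem:parametrizability}(1). The cited references are algorithmic in spirit, and with your tools the constructive version is even shorter. Start from any computed $Q_1\in D^{p\times m}$ with $\ker_D(\cdot Q_1)=D^{1\times q}R$ (e.g.\ generators of the right kernel of $R$, as produced in the appendix). Then $\mathrm{rank}_K Q_1=\dim_K(K^{1\times p}Q_1)=\mathrm{rank}(M)=p-q$. So any $p-q$ columns $Q_2$ spanning the right $K$-column space satisfy $Q_1=Q_2C$ with $C$ over $K$. Hence $\lambda Q_2=0\Leftrightarrow\lambda Q_1=0$ for $\lambda\in D^{1\times p}$, i.e.\ $\ker_D(\cdot Q_2)=D^{1\times q}R$. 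This avoids choosing a basis of $K\otimes_D M$ and clearing denominators, and it is what one actually computes.

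Some clean-up is needed. The sentence claiming a nonzero $d\in D$ with $dT=0$ is false for every nonzero $T$ over the rings at hand. The set $\{d\in D: dT=0\}$ is a two-sided ideal, and $A_n(\mathbb{R})$, $B_n(\mathbb{R})$ are simple, so the Ore condition only gives a $d$ killing each chosen generator of $T$. Since your final construction does not use it, delete the whole $m_1,\dots,m_r$/$T$ detour rather than leaving it as an abandoned step. Stafford's theorem and the $\mathrm{ext}$ test likewise play no role.

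Finally, your reading of a parametrization as $\ker_D(\cdot Q)=D^{1\times q}R$ is the one the proposition needs. A minimal $Q$ need not generate the full right kernel $\{\eta\in D^{p}: R\eta=0\}$. For example, for $R=(\partial_1\ \partial_2\ \partial_3)$, two columns of the curl matrix form a minimal parametrization, yet the third column is not in their right $D$-span. So the literal condition $\mathrm{im}_D(Q)=\ker_D(R)$ in Lemma~\ref{lem:parametrizability}(1) should not be imposed.
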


\section{Ghost tasking}
\subsection{Bridging PIGP and NSB}
\label{subsec:mlhraissi}
With the notion of parametrizability established in the previous sections, we are now in a position to characterize the parametrized Gaussian Processes defined in \ref{subsec:paramGP}. Lange-Hegermann \cite{lange-hegermann_algorithmic_2018} already observed that their method applies only to parametrizable systems, i.e., as established above, to projective systems, or at least torsion-free systems when working in an injective function space.

The method of Raissi et al. \cite{raissi_machine_2017} for parametrizing a Gaussian Process (Eq.~\ref{eq:PIGP}), by contrast, was not originally introduced within an algebraic context. We show here that it can equally be interpreted as a parametrization, obtained by applying the matrix
\[Q_{PIGP}(\vv x) = \begin{pmatrix} \mathbb{I}_{q} \\ {\cal L}^\phi(\vv x)\end{pmatrix}\]
to the base kernel $k_{uu}(\vv x, \vv x')$.  

Writing the system of Eq.~\ref{eq:PIGP} in homogeneous form,
\begin{equation}
    R^{\phi}_{PIGP}(\vv x)\vec{\xi}_{PIGP}(\vv x) = 0, \quad 
    R^{\phi}_{PIGP}(\vv x) = \begin{pmatrix} {\cal L}^{\phi}(\vv x) & -\mathbb{I}_{q} \end{pmatrix}, \quad
    \vec\xi_{PIGP}(\vv x) = \begin{pmatrix} \vec{u}(\vv x) \\ \vec{f}(\vv x) \end{pmatrix},
    \label{eq:mlhraissi}
\end{equation}
one sees that $Q^{\phi}_{PIGP}(\vv x)$ as defined above indeed satisfies
\begin{equation}R^{\phi}_{PIGP}(\vv x)Q^{\phi}_{PIGP}(\vv x) = \begin{pmatrix} {\cal L^{\phi}}(\vv x) & -\mathbb{I}_{q} \end{pmatrix} \begin{pmatrix} \mathbb{I}_{q} \\ {\cal L}^{\phi}(\vv x) \end{pmatrix} = 0.\end{equation}

Moreover, we can show that $R_{PIGP}(\vv x)$ as defined in Eq.~\ref{eq:mlhraissi} has a right inverse, choosing
\[S^{\phi}_{PIGP}(\vv x) = \begin{pmatrix} S_1 \\ {\cal L}^{\phi}(\vv x) S_1 - \mathbb{I}_q \end{pmatrix}\]
for any $S_1 \in D^{q \times q}$, since
\[R^{\phi}_{PIGP}(\vv x)S^{\phi}_{PIGP}(\vv x) = \begin{pmatrix}{\cal L}^{\phi}(\vv x) & -\mathbb{I}_q \end{pmatrix} \begin{pmatrix} S_1 \\ {\cal L}^{\phi}(\vv x) S_1 - \mathbb{I}_q \end{pmatrix} = \mathbb{I}_q. \]
By point (3) of Lemma \ref{lem:exteigenschaften}, the system presented by $R^{\phi}_{PIGP}(\vv x)$ is projective and hence parametrizable (Lemma \ref{lem:parametrizability}) over any (sufficiently smooth) function space. 

Comparing $R^{\phi}_{PIGP}(\vv x)$ as defined in Eq.~\ref{eq:mlhraissi} with the original system matrix $R^{\phi}(\vv x)$ in Eq.~\ref{eq:system}, we note that the latter is not necessarily projective, whereas $R^{\phi}_{PIGP}(\vv x)$ is projective by construction. The distinction between the two formulations lies entirely in how the inhomogeneities $f(\vv x)$ are incorporated. In the traditional physics-informed GP formulation $R^{\phi}_{PIGP}(\vv x)$, one inhomogeneity term is introduced per unknown function, even when $f(\vv x) = 0$, so that $R^{\phi}_{PIGP}(\vv x)$ must always have $2q$ tasks and thus also $2q$ columns. In the formulation of Eq.~\ref{eq:system} on the other hand, correlated inhomogeneities can be modeled using fewer tasks (so that $R^{\phi}(\vv x)$ has $p<2q$ columns), or, in the homogeneous case, omitted entirely, resulting in a system matrix $R^{\phi}(\vv x)$ that has exactly $p = q$ columns. In this latter case, with zero inhomogeneities, the module presented by $R^{\phi}(\vv x)$ is torsion (hence is neither torsion-free nor projective, see Def. \ref{def:torsionundco}), while the inclusion of $q$ inhomogeneities resulting in $R^{\phi}_{PIGP}(\vv x)$ always guarantees projectivity. This raises a question of direct practical consequence: are $q$ input tasks truly necessary, or can torsion-freeness or projectivity, and with it parametrizability, be achieved with fewer tasks? The answer is not merely of theoretical interest: it determines how much data and latent functions are actually required to obtain a well-posed, parametrized model. We show in the following that there exists a sweet spot reachable by our "ghost tasking" method, guaranteeing parametrizability while keeping tasks and latent functions at a minimum.

\subsection{Foundations of ghost tasking}
\begin{figure}[h!]
\centering
  \includegraphics[width=\textwidth]{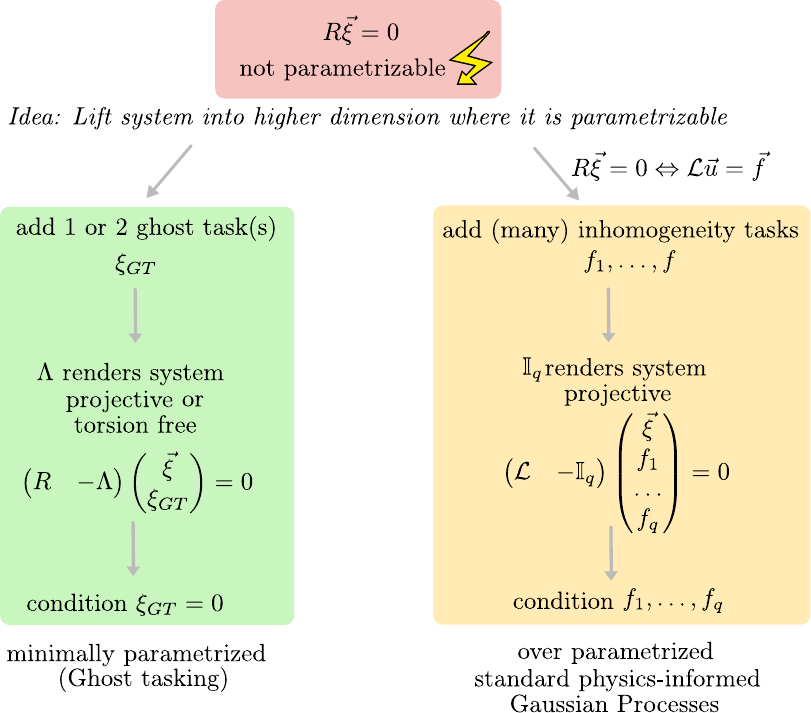}
\caption{Overview over the idea behind ghost tasking and its connection to physics-informed Gaussian Processes (PIGP) as in \cite{raissi_machine_2017}. A non parametrizable system can be rendered parametrizable by adding extra tasks, and subsequently conditioning them away again. The differences between ghost tasking and PIGPs lie in (a) how many tasks we add and (b) how they are coupled to the system ($\Lambda$ versus the identity matrix $\mathbb{I}_q$).}
\label{fig:ghosttasking_overview}
\end{figure}
\label{sec:theorem}
Ghost tasking locates the optimum between the PIGP method \cite{raissi_machine_2017} and the NSB method \cite{lange-hegermann_algorithmic_2018}, because it provides a method for rendering a physical system encoded in a Gaussian Process minimally parametrizable.
We propose a three-step procedure:
\begin{enumerate}
\item[(i)] introduce a maximum of 2 auxiliary tasks (i.e., auxiliary output variables) we call "ghost tasks" to embed the multi-dimensional GP in a higher-dimensional space, in which an otherwise unparametrizable system is reduced to a parametrizable system;
\item[(ii)] apply computer algebra methods to solve the resulting parametrizable system;
\item[(iii)] transform back to the original problem dimension by conditioning away the ghost tasks, i.e., conditioning them pointwise to vanish.
\end{enumerate}
We visualize this idea and show the analogy to the previously discussed connection to PIGP in Fig. \ref{fig:ghosttasking_overview}. For better readability, we omit the parameters $\phi$ and the argument $(\vv x)$ at times. To formalize this procedure and prove its applicability, we use a theorem developed for Serre reduction:

 \begin{theorem}[{Adapted from \cite[Thm.~17]{serre}, \cite[Thm.~3.6]{ogofserre}}]
\label{th:serre}
Let $D$ be a noetherian domain, $R \in D^{q\times p}$ a full row-rank matrix, $0\leq m \leq q$, $\Lambda \in D^{q\times m}$, $P = \begin{pmatrix}R & -\Lambda \end{pmatrix}$ and $M = D^{1\times p}/(D^{1\times q}R$) (resp., $E = D^{1\times(p+m)}/(D^{1\times q}P)$)) the left D-module finitely presented by $R$ (resp. $P$). Then, the following results are equivalent:
\begin{enumerate}
    \item The left D-module $E$ is stably free of rank $p-q+m$.
    \item $\{\tau(\Lambda_{\cdot i})\}_{i = 1,...,m}$ generates the right D-module ext$^1_D(M,D) = D^q/(RD^p)$, where the right D-homomorphism $\tau: D^q\rightarrow D^q/(RD^p)$ is the canonical projection and $\Lambda_{\cdot i}$ is the $i^{th}$ column of the matrix $\Lambda$. 
    \end{enumerate}

Moreover, the construction depends only on the residue class of $\Lambda$ i.e. $\overline{\Lambda} = \Lambda + RX$ yields the same result for any $X \in D^{p\times r}$.
\end{theorem}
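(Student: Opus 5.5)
The plan is to reduce both conditions to statements about the Auslander transpose of $P$ and then compare it with that of $R$. First I will check that $P=\begin{pmatrix}R & -\Lambda\end{pmatrix}$ has full row rank. This holds because $R$ does: any left relation $yP=0$ gives $yR=0$, hence $y=0$. Equation~\eqref{eq:ext} therefore applies to $E$ as well, and yields
\[
\mathrm{ext}^1_D(E,D)\cong D^q/(PD^{p+m}) = D^q/(RD^p+\Lambda D^m).
\]
Next I will use Lemma~\ref{lem:exteigenschaften}(4): $E$ is stably free if and only if $\mathrm{ext}^1_D(E,D)=0$. The rank statement follows separately. Since $P$ has full row rank, the presentation $0\to D^{1\times q}\xrightarrow{\cdot P} D^{1\times(p+m)}\to E\to 0$ is exact. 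Hence $\mathrm{rank}(E)=p+m-q$, by additivity of rank over the Ore quotient field of the noetherian domain $D$.

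The core step is then a purely module-theoretic identification. The submodule $RD^p+\Lambda D^m$ equals all of $D^q$ if and only if every class in $D^q/(RD^p)$ lies in the image of $\Lambda D^m$. That is, the images $\tau(\Lambda_{\cdot i})$ must generate the right module $D^q/(RD^p)=\mathrm{ext}^1_D(M,D)$. I will write this out both ways:
\begin{itemize}
\item[(1)$\Rightarrow$(2):] For any $v\in D^q$, write $v=Rx+\Lambda y$. Then $\tau(v)=\sum_i \tau(\Lambda_{\cdot i})y_i$.
\item[(2)$\Rightarrow$(1):] Reverse the argument: every $v$ has $\tau(v)=\sum_i\tau(\Lambda_{\cdot i})y_i$, so $v-\Lambda y\in RD^p$, and $D^q/(PD^{p+m})=0$.
\end{itemize}
The ``moreover'' clause follows at once. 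Replacing $\Lambda$ by $\Lambda+RX$ leaves both the submodule $RD^p+\Lambda D^m$ and the classes $\tau(\Lambda_{\cdot i})$ unchanged. Alternatively, the unimodular column operation $\begin{pmatrix}\mathbb{I}_p & -X\\ 0&\mathbb{I}_m\end{pmatrix}$ gives an isomorphism of the presented modules $E$.

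The main obstacle I expect is bookkeeping rather than substance. The left/right conventions must be kept straight: $E$ is a left module, while $\mathrm{ext}^1$ is a right module computed via $\mathrm{hom}_D(\cdot,D)$ applied to the presentation. I must also justify that the isomorphism \eqref{eq:ext} is natural enough to identify $\mathrm{ext}^1_D(E,D)$ with the cokernel of $P\cdot$ on $D^{p+m}$. For this I will dualize the short exact presentation of $E$ and read off
\[
D^{p+m}\xrightarrow{P\cdot} D^q\to \mathrm{ext}^1_D(E,D)\to 0.
\]
This again uses full row rank, which guarantees that there is no further syzygy term.
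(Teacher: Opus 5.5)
Your proposal is correct and follows essentially the same route as the paper's proof sketch: identify $\mathrm{ext}^1_D(E,D)\cong D^q/(RD^p+\Lambda D^m)$, invoke Lemma~\ref{lem:exteigenschaften}(4), and observe that this quotient vanishes exactly when the classes $\tau(\Lambda_{\cdot i})$ generate $D^q/(RD^p)$. The paper phrases that last step with the third isomorphism theorem where you chase elements directly, and your extra checks (full row rank of $P$, the rank count $p+m-q$, and the unimodular column operation for the ``moreover'' clause) supply details the sketch leaves implicit.
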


\textit{Proof sketch:} For the full rigorous proof, we refer to \cite{ogofserre}. From Lemma \ref{lem:exteigenschaften} we recall that $E$ is stably free iff $\mathrm{ext}^1_D(E,D) = 0$. Using Eq.~\ref{eq:ext} and the third isomorphism theorem \footnote{For $A$ and $B$ submodules of $C$ such that $A\subseteq B \subseteq C$, then $(C/A)/(B/A)\cong C/B$) \cite{noether}} applied to $RD^p \subseteq PD^{(p+m)} \subseteq D^q$, we rewrite
\[\mathrm{ext}^1_D(E,D) \;\cong\; D^q/PD^{(p+m)} \;\cong\; \big(D^q/RD^p\big)\big/\big(PD^{(p+m)}/RD^p\big) \;\cong\; \mathrm{ext}^1_D(M,D)\big/\big(PD^{(p+m)}/RD^p\big).\]
Therefore,
\[\mathrm{ext}^1_D(E,D) = 0 \quad\Longleftrightarrow\quad \mathrm{ext}^1_D(M,D) = PD^{(p+m)}/RD^p = (RD^p + \Lambda D^{m})/RD^p.\]
Applying the canonical projection $\tau: D^q \to D^q/RD^p = \mathrm{ext}^1_D(M,D)$, the right-hand side is simply the image of $\Lambda D^{m}$ under $\tau$, i.e.
\[(RD^p + \Lambda D^{m})/RD^p = \tau(\Lambda D^{m}) = \tau(\Lambda)D^{m},\]
so that
\[\mathrm{ext}^1_D(E,D) = 0 \quad\Longleftrightarrow\quad \mathrm{ext}^1_D(M,D) = \tau(\Lambda)D^{m}.\]
That is, $E$ is stably free exactly when the projections $\tau(\Lambda_{\cdot i})$ of the columns of $\Lambda$ generate $\mathrm{ext}^1_D(M,D)$ as a right $D$-module.
\qed

Based on Theorem \ref{th:serre}, we can now formalize ghost tasking for parametrized Gaussian Processes:

\begin{theorem}[Ghost tasking]
Let $D$ denote a ring of linear differential operators whose coefficients are polynomial, rational, or general meromorphic functions. Let furthermore $R^{\phi}(\vv x) \in D^{q \times p}$ be a full row-rank matrix representing a linear system of ordinary or partial differential equations with coefficients in $D$, and $M$ be the left module presented by $R$. Assume the system is not parametrizable. 

Then, there exists a matrix $\Lambda(\vv x) \in D^{q \times m}$ with at most $m = 2$ columns
such that the embedded system $P^{\phi}(\vv x) = \begin{pmatrix}R^{\phi}(\vv x) & -\Lambda(\vv x) \end{pmatrix}$ is stably free, and hence parametrizable over any sufficiently smooth function space. The canonical projection of the columns of $\Lambda$ generates ext$^1_D(M, D)\cong D^q/(RD^p)$.  
Consequently, there exists a proper parametrization matrix $Q^{\phi}(\vv x)$, such that every realization of the parametrized GP
\[\eta \sim {\cal GP}(0, Q^{\phi}(\vv x)k_0(\vv x,\vv x')Q^{\phi T}(\vv x'))\]
solves the embedded system $P^{\phi}(\vv x)$. If we constrain the additional ghost task(s) to zero everywhere, the constrained solutions of the embedded system coincide with the solutions of the original system.
\label{th:GT}
\end{theorem}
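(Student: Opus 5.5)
The proof combines three earlier results: the torsion property of extension modules (Lemma~\ref{lem:torsion}), Stafford's two-generator theorem (Theorem~\ref{th:staffordserre}), and the Serre-reduction criterion (Theorem~\ref{th:serre}).

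\emph{Step 1: $\mathrm{ext}^1_D(M,D)$ is generated by at most two elements.} Since $R^{\phi}$ has full row rank, Eq.~\ref{eq:ext} identifies $\mathrm{ext}^1_D(M,D)\cong D^q/(RD^p)$. This module is finitely generated as a quotient of $D^q$. By Lemma~\ref{lem:torsion} it is either zero or a finitely generated torsion right $D$-module. It cannot be zero: by Lemma~\ref{lem:exteigenschaften}(4), $\mathrm{ext}^1_D(M,D)=0$ would make $M$ stably free, hence projective and parametrizable by Lemma~\ref{lem:parametrizability}(3)--(4), contradicting the hypothesis. So it is a nonzero finitely generated torsion module. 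Because the rings in the statement (Weyl algebras $A_n$, $B_n$, and meromorphic-coefficient rings) are among the very simple Noetherian domains of finite global dimension covered by Stafford's theorem, Theorem~\ref{th:staffordserre}(1) supplies two generators $\bar\lambda_1,\bar\lambda_2$. In the ODE case Theorem~\ref{th:staffordserre}(2) gives a single generator, so $m=1$ suffices.

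\emph{Step 2: lift the generators to $\Lambda$.} Choose preimages $\lambda_i\in D^q$ under the canonical projection $\tau$ and set $\Lambda=(\lambda_1\ \lambda_2)\in D^{q\times 2}$. The theorem's side remark shows the choice of lift is irrelevant.

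\emph{Step 3: apply Serre reduction.} With $m=2\le q$, Theorem~\ref{th:serre} (direction (2)$\Rightarrow$(1)) shows that $E=D^{1\times(p+2)}/(D^{1\times q}P)$ is stably free of rank $p-q+2$. If $q=1$, the module is cyclic and one column suffices.

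\emph{Step 4: parametrize.} Lemma~\ref{lem:parametrizability}(4) makes $E$ projective, so some $Q$ satisfies $\mathrm{im}_{\mathcal F}(Q)=\ker_{\mathcal F}(P)$ over any sufficiently smooth $\mathcal F$. To make this explicit:
\begin{itemize}
\item $P$ has full row rank, since $R$ is a submatrix with the same number of rows.
\item Projectivity gives a right inverse by Lemma~\ref{lem:exteigenschaften}(3).
\item Taking $Q$ to be a generating matrix of $\ker_D(P)$ then yields an exact sequence that stays exact after applying $\hom_D(\cdot,\mathcal F)$, because $E$ is projective.
\end{itemize}
By closure of GPs under linear maps, every realization $\eta=Q g$ with $g\sim\mathcal{GP}(0,k_0)$ satisfies $P\eta=0$.

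\emph{Step 5: condition the ghost tasks away.} Write $\eta=(\xi,\zeta)$ with $\zeta$ the ghost components. Then $P\eta=R\xi-\Lambda\zeta$, so imposing $\zeta\equiv0$ gives $\{\xi: (\xi,0)\in\ker_{\mathcal F}P\}=\ker_{\mathcal F}R$. The inclusion $\supseteq$ holds because any solution $\xi$ of $R\xi=0$ lifts to $(\xi,0)\in\ker P=\mathrm{im}\,Q$.

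\emph{Main obstacle.} I expect the delicate point to be Step~1's appeal to Stafford's theorem for the meromorphic-coefficient ring. One must verify that this ring falls within the stated class of very simple Noetherian domains, and that the side conventions (right versus left module, via the involution) match between Lemma~\ref{lem:torsion} and Theorem~\ref{th:staffordserre}. I would handle the conventions by applying the involution to transfer the left-module statement to the right module $\mathrm{ext}^1_D(M,D)$.
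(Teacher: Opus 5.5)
Your proposal is correct and follows the paper's proof. Lemma~\ref{lem:torsion} together with Stafford's two-generator theorem bounds the number of generators of $\mathrm{ext}^1_D(M,D)$, Theorem~\ref{th:serre} turns lifts of those generators into a $\Lambda$ making $E$ stably free, and Lemma~\ref{lem:parametrizability} plus setting $\xi_{GT}=0$ finish the argument. Your additional care (ruling out $\mathrm{ext}^1_D(M,D)=0$, treating $q=1$ separately to respect the hypothesis $m\le q$ of Theorem~\ref{th:serre}, and using the right inverse to show $(\xi,0)\in\mathrm{im}\,Q$) only tightens points the paper passes over.
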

   
\begin{proof}
Assume $R$ is not parametrizable, hence not projective. By Theorem~\ref{th:serre}, constructing a stably free module $E = D^{1\times(p+m)}/(D^{1\times q}P)$, $P = \begin{pmatrix} R & -\Lambda \end{pmatrix}$, amounts to choosing $\Lambda \in D^{q\times m}$ whose projected columns generate $\mathrm{ext}^1_D(M,D)$; the minimal such $m$ is thus bounded by the minimal number of generators of $\mathrm{ext}^1_D(M,D)$. By Lemma~\ref{lem:torsion}, $\mathrm{ext}^1_D(M,D)$ is a finitely generated torsion module, and by (1) of Theorem~\ref{th:staffordserre} every such module is generated by at most two elements. Hence there exists $\Lambda \in D^{q\times m}$ with $m \le 2$ such that $E = D^{1\times(p+m)}/(D^{1\times q}(R,-\Lambda))$ is stably free, and therefore also projective.

For $D$ a ring of ordinary differential operators, (2) of Theorem~\ref{th:staffordserre} strengthens this: every torsion module over $D$ is cyclic, so $\mathrm{ext}^1_D(M,D)$ is generated by a single element and $\Lambda$ may be chosen with $m = 1$.

Since $E$ is projective, $P\vec{\eta} = 0$ is parametrizable over any sufficiently smooth function space (Lemma~\ref{lem:parametrizability}), yielding a parametrized Gaussian Process with at most two ghost tasks $\xi_{GT}$. Writing
\begin{align}
\text{Original: } & R \vec{\xi} = 0, \\
\text{New: } & P \vec{\eta} = \begin{pmatrix} R & -\Lambda \end{pmatrix}\begin{pmatrix} \vec{\xi}\\\xi_{GT} \end{pmatrix} = R \vec{\xi} - \Lambda \xi_{GT} = 0,
\end{align}
with $\Lambda \neq 0$ (else $R$ is already parametrizable), setting $\xi_{GT} = 0$ recovers the original system, so the two are pointwise equivalent.
\end{proof}

\subsection{Properties of ghost tasking}
\label{subsec:ghostprops}
Having established the existence of a minimal ghost-tasking parametrization in Theorem~\ref{th:GT}, we now discuss several refinements, special cases, and practical considerations that follow from it.

\paragraph{Pointwise conditioning} In practice, a Gaussian Process cannot be constrained to vanish identically on a continuum, only at finitely many points. We therefore replace the pointwise condition $\xi_{GT}(\vv x) = 0$ for all $\vv x$ with a finite set of conditioning points $\{\vv x_1, \dots, \vv x_k\}$. Empirically, we find that a small to moderate number of conditioning points suffices in practice (Chapter~\ref{sec:experiments}). 

\paragraph{Torsion-freeness} Theorem~\ref{th:GT} guarantees the stronger characteristic of stable freeness, which implies torsion-freeness. For efficiency reasons (Chapter~\ref{sec:experiments} and \ref{sec:discussion}), it may suffice and be actually desirable to require only that the embedded module $E$ be torsion-free, rather than stably free. Theorem~\ref{th:GT} admits the following weaker, and thus more permissive, variant:

\begin{proposition}
\label{prop:torsionfreeghost}
Let $D, R, M, P,$ and $E$ be defined as in Theorem \ref{th:GT}. There exists a matrix $\Lambda \in D^{q\times m}$ with at most $m=2$ columns such that the embedded system  is torsion-free. Writing $\tau(\Lambda)D^{m}$ for the submodule generated by the canonical projections of the columns of $\Lambda$, this holds precisely when
\[
\mathrm{ext}^1_D\big(\mathrm{ext}^1_D(M,D)/(\tau(\Lambda)D^{m}),\ D\big) = 0.
\]
\label{prop:GT2}
\end{proposition}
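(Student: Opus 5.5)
We prove Proposition~\ref{prop:torsionfreeghost} in two parts. First we show that torsion-freeness of $E$ is characterized by the displayed ext-condition. Then we show existence with $m\le 2$ by reducing to Theorem~\ref{th:GT}.

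For the characterization, the starting point is (1) of Lemma~\ref{lem:exteigenschaften}: $E$ is torsion-free if and only if $\mathrm{ext}^1_D(N_E,D)=0$, where $N_E$ is the Auslander transpose of $E$. Since $P=\begin{pmatrix}R & -\Lambda\end{pmatrix}$ and $R$ has full row rank, $P$ also has full row rank. Hence $N_E = D^q/(PD^{p+m})$. The third isomorphism theorem is applied to $RD^p\subseteq PD^{p+m}\subseteq D^q$, exactly as in the proof sketch of Theorem~\ref{th:serre}. Together with Eq.~\ref{eq:ext}, this gives
\[
N_E \cong \big(D^q/RD^p\big)\big/\big(PD^{p+m}/RD^p\big)\cong \mathrm{ext}^1_D(M,D)\big/\big(\tau(\Lambda)D^m\big).
\]
Here we use $PD^{p+m}=RD^p+\Lambda D^m$, whose image under $\tau$ is $\tau(\Lambda)D^m$. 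The functor $\mathrm{ext}^1_D(\,\cdot\,,D)$ respects isomorphisms, so $E$ is torsion-free exactly when the displayed condition holds. One point needs checking here: the torsion-freeness criterion must be applied with the correct left/right conventions. $N_E$ is a right module and $E$ a left module, which matches the setting of Lemma~\ref{lem:exteigenschaften}. The independence of the choice of presentation, up to residue classes $\Lambda+RX$, follows as in Theorem~\ref{th:serre}.

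For existence, we use the proof of Theorem~\ref{th:GT}. By Lemma~\ref{lem:torsion} and (1) of Theorem~\ref{th:staffordserre}, $\mathrm{ext}^1_D(M,D)$ is generated by at most two elements $\tau(\lambda_1),\tau(\lambda_2)$. Take $\Lambda=(\lambda_1\ \lambda_2)$. The quotient $\mathrm{ext}^1_D(M,D)/(\tau(\Lambda)D^m)$ is then $0$, and $\mathrm{ext}^1_D(0,D)=0$. By the characterization above, $E$ is torsion-free. Alternatively, Theorem~\ref{th:GT} directly gives that $E$ is stably free, hence torsion-free by the implication chain stably free $\Rightarrow$ projective $\Rightarrow$ torsion-free. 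Either route shows that an $m\le 2$ matrix exists. In the ODE case, (2) of Theorem~\ref{th:staffordserre} gives $m=1$.

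The existence part is therefore not the obstacle; it is inherited from Theorem~\ref{th:GT}. The real content of the proposition is the characterization, which is what makes the variant more permissive. The condition allows $\tau(\Lambda)$ to generate only a submodule whose cokernel $T$ satisfies $\mathrm{ext}^1_D(T,D)=0$, so smaller or simpler $\Lambda$ than a full generating set may suffice. The main care goes into the identification $N_E\cong \mathrm{ext}^1_D(M,D)/\tau(\Lambda)D^m$ and into confirming that the full-row-rank hypothesis of Lemma~\ref{lem:exteigenschaften} holds for $P$.
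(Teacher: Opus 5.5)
Your proposal is correct and follows the same route as the paper. Like the paper, you identify $N_E \cong \mathrm{ext}^1_D(M,D)/\tau(\Lambda)D^{m}$ through the third isomorphism theorem applied to $RD^p \subseteq PD^{p+m} \subseteq D^q$, then invoke (1) of Lemma~\ref{lem:exteigenschaften}, and take the bound $m\le 2$ from Theorem~\ref{th:GT} via Stafford's theorem. Your explicit check that $P=\begin{pmatrix}R & -\Lambda\end{pmatrix}$ inherits full row rank from $R$ is a detail the paper leaves implicit.
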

This formulation subsumes the stably free case of Theorem~\ref{th:GT}: when $\tau(\Lambda)D^{m} \cong \mathrm{ext}^1_D(M,D)$ (i.e., $\Lambda$'s columns generate all of $\mathrm{ext}^1_D(M,D)$), the quotient $\mathrm{ext}^1_D(M,D)/\tau(\Lambda)D^{m}$ vanishes, and the condition holds trivially since $\mathrm{ext}^1_D(0,D)=0$.
\begin{proof}
    We denote the Auslander transpose of $E$ by $N_E$. By Lemma \ref{lem:exteigenschaften}, $E$ is torsion-free, iff $\mathrm{ext}^1_D(N_E, D) = 0$, and since analogously to the proof of Theorem \ref{th:serre}, we can write 
    \begin{align}
        N_E = D^q/PD^{(p+m)} \cong \big(D^q/RD^p\big)\big/\big(PD^{(p+m)}/RD^p\big) 
        \cong \mathrm{ext}^1_D(M,D)\big/\big(PD^{(p+m)}/RD^p\big) \nonumber \\
        \cong \mathrm{ext}^1_D(M,D)\big/\left((RD^p + \Lambda D^{m})/RD^p\right) = \mathrm{ext}^1_D(M,D)/\left(\tau(\Lambda)D^{m}\right), \nonumber
        \end{align}
    Therefore, iff $\mathrm{ext}^1_D\big(\mathrm{ext}^1_D(M,D)/(\tau(\Lambda)D^{m}),\ D\big) = 0$, $E$ is torsion-free. The number of columns is bounded by Theorem~\ref{th:GT}, and the rest of the proof follows the same arguments as Theorem~\ref{th:GT}. 
\end{proof}

\paragraph{Stronger conjecture} A long-standing conjecture of \cite{staffordog}[Conjecture 3.8] asks, if any finitely generated torsion right module is cyclic. If this is true, the result would extend the single-column bound of to all Weyl algebras $A_n(k)$, $n \ge 1$, i.e. also to PDEs with polynomial coefficients. An AI-assisted machine proof in Lean \cite{lean} has recently been published \cite{chris_stafford}, though has not been peer reviewed yet. 

\paragraph{Recovering PIGP} Choosing $\Lambda = \mathbb{I}_q$ recovers the PIGP construction of \cite{raissi_machine_2017}: since $\tau(\mathbb{I}_q)$ trivially generates $\mathrm{ext}^1_D(M,D)$, this is always a valid — if not minimal — choice of $\Lambda$, corresponding to $m = q$ ghost tasks (i.e. one ghost task per equation).

\paragraph{Finding $\Lambda$ in practice} Theorem~\ref{th:GT} guarantees the existence of $\Lambda$ but calculating a minimal set of generators is generally difficult. In the holonomic case, $\Lambda$ can be computed directly using \texttt{Macaulay2}~\cite{macaulay2}. Otherwise, we find the brute force method of systematically proposing $\Lambda$ candidates containing low-order terms in $D$ and verifying the validity of $\Lambda$ computationally to be surprisingly effective. Verification is comparatively cheap in both the stably free/projective case (amounting to checking if extension modules are zero, Lemma~\ref{lem:exteigenschaften}), using a computer algebra system such as \texttt{OreModules}~\cite{oremodules} or \texttt{Macaulay2}~\cite{macaulay2}. We demonstrate the appropriate syntax in \ref{ap:macaulay}.

\paragraph{Bound on the number of latent functions} Bounding the number of ghost tasks also bounds the number of latent functions needed to parametrize the solution space. By Proposition \ref{prop:minimalparam} the minimal parametrization matrix $Q$ of the embedded system $P \in D^{q \times (p+m)}$ has $n = (p+m) - q$ columns. 
Since $q \le p$ (full row-rank) and $0 \le m \le q$ (with $m\le 2$ the minimal choice and $m=q$ recovering PIGP), $n$ ranges between $p-q$ and $p-q+q = p$ latent functions. In the most extreme case of homogeneous differential equations ($R = {\cal L}$), $m \in \{1,2\}$ columns (corresponding to one or two ghost tasks) suffice. 

\section{Experiments}
\label{sec:experiments}
We illustrate ghost tasking in three experiments, discussing different aspects of the implementation in each of them: the first experiment is an ODE, where we are learning the dynamics of a pendulum system. We show the advantage of knowing the ghost task to be exactly zero: conditioning the ghost task to be zero on additional collocation points allows us to better identify the eigenfrequency of the system. The second experiment is a PDE with two variables. In this case, the previously discussed notions of torsion-freeness and projectivity do not coincide anymore, and we compare the performance between the two parametrization variants. Lastly, we show the performance on a slightly more real-world example of a PDE with 3 variables in magnetostatics. For each experiment, we first derive the augmented parametrization matrix $Q^\phi(\vv x)$, and then apply the resulting minimally parametrized Gaussian Process obtained by ghost tasking (in short referred to as "GT") to solve the forward and inverse problem. We implement the physics-informed Gaussian Process (PIGP) method of \cite{raissi_machine_2017} as a baseline comparison to our method in each experiment. For the second experiment, an explication of the commands used in the computer algebra programs \texttt{OreModules} or \texttt{Macaulay2} can be found in \ref{ap:macaulay}. 
 Scripts and data for all experiments can be found on our GitHub\footnote{\url{https://github.com/moserjo/GhostTask}}.\\

All numerical experiments use the python package \texttt{PCGP} \cite{mymaxent}, built on \texttt{GPyTorch} \cite{gpytorch1, gpytorch2} for optimization and \texttt{jax} \cite{jax} for grid evaluations. For each experiment we implement GT with the minimal number of ghost tasks, alongside PIGP as the comparison. In the forward problem, we only specify data of (i) the inhomogeneities, (ii) ghost task (or equivalent inhomogeneity tasks for PIGP), and (iii) boundary/initial conditions, and infer the solution on the unknown interior. Data points are noise-free, though we assume a Gaussian likelihood with small variance, $\sigma^2 = 10^{-6}$, as a nugget for numerical stability. We choose the squared-exponential kernel as introduced in Sec. \ref{subsec:paramGP} as our base kernel and the calibration parameters $\vartheta = (A, l)$ are optimized (ADAM, learning rate 0.1, 100 iterations) with a flat prior, $A$ constrained to be positive and $l$ constrained below by the minimal training-point distance. Accuracy is measured by RMSE on a test grid whose size depends on the experiment.\\

For the inverse problem, we systematically learn the maximum a posteriori (MAP) estimate of the physical parameters $\phi$ by optimization jointly with $\vartheta$
\begin{equation}
    (\phi_{MAP}, \vartheta_{MAP}) = \text{argmax } p(\phi, \vartheta \mid {\cal D}, \sigma),
\end{equation} for varying numbers of data points ${\cal D}$ and noise levels $\sigma$. Each setting is repeated 10 times with different random noises to account for stochasticity. We define a run as converged if the physical parameters $\phi$ have less than 0.01\% (relative to their mean) peak-to-peak variation within the last 50 iterations. Performance is measured by the MAP error relative to the true value of each physical parameter, $\frac{\phi_{i,MAP} - \phi_{i,true}}{\phi_{i,true}}$.\\

All computations were performed on a node of an x86\_64 Linux cluster (16 cores, 126 GB RAM) using 8 CPU cores per job.

\subsection{Experiment 1: Tripendulum}
\label{sec:Ex2}
The first experiment consists of a system of coupled ODEs. 

Inspired by \cite{besginow_constraining_2022} and \cite{mymaxent}, we consider three pendula of length $\ell$ that feel the gravitational acceleration $g$ and are connected with a rod that moves with the acceleration $f(t)$. 
The linearized system can be written in the form $R^{(\ell,g)}(\vv x)\vec{\xi}(\vv x) = \vec{0}$ as follows:
\begin{align}
    \begin{pmatrix}
        \partial_t^2+g/\ell&0&0&-1/\ell\\0&\partial_t^2+g/\ell&0&-1/\ell\\0&0&\partial_t^2+g/\ell&-1/\ell\\
    \end{pmatrix}\begin{pmatrix}
        \theta_1\\ \theta_2 \\\theta_3 \\ f
    \end{pmatrix} = \begin{pmatrix}
        0\\0\\0
    \end{pmatrix}.
\end{align}
If the lengths of the pendula are all the same, the system has torsion, and is therefore not parametrizable.
Since the system consists of ODEs, the notion of torsion-freeness and projectivity coincides. To render the system parametrizable, we will therefore need to introduce a column vector $\Lambda$, such that its projection generates $D^3/R^{(\ell, g)}D^4 \cong ext^1_D(M, D).$ In the ODE case, we can calculate this generator using \texttt{Macaulay2} (see \ref{ap:macaulay}). $\Lambda$ is not unique, in fact for any generator $\Lambda$, $\tilde{\Lambda} = \Lambda + R^{(\ell,g)}X$ for $X \in D^4$ are all equally valid choices.  We use $\Lambda = (-1, -\partial_t, -t)^T$, leading to the new system 
\begin{align}
    \begin{pmatrix}
        \partial_t^2+g/\ell&0&0&-1/\ell & 1\\
        0&\partial_t^2+g/\ell&0&-1/\ell & \partial_t\\
        0&0&\partial_t^2+g/\ell&-1/\ell & t\\
    \end{pmatrix}\begin{pmatrix}
        \theta_1\\ \theta_2 \\\theta_3 \\ f \\ \xi_{GT}
    \end{pmatrix} = \begin{pmatrix}
        0\\0\\0\\0
    \end{pmatrix}.
\label{eq:Ex2_2}
\end{align}
that is parametrizable with the parametrization matrix 
\begin{align} Q^{(\ell,g)}_{GT_1}(t) = \begin{pmatrix}
1 & -\partial_t^{2} \ell^{2}-g \ell \\
 1 & -\partial_t^{3} \ell^{2}-g \ell \partial_t \\
 1 & -\partial_t^{2} \ell^{2} t+2 \partial_t \,\ell^{2}-g l t \\
 \partial_t^{2} \ell+g & 0 \\
 0 & \partial_t^{4} \ell^{2}+2 \partial_t^{2} g \ell+g^{2} 
\end{pmatrix}.
\label{eq:Ex2_B}
\end{align}
The choice of using $t$ in $\Lambda$ means that we work in the non-commutative first Weyl algebra $D = A_1(\mathbb{R})$. Since for ODEs, torsion-freeness and projectivity coincide, the system is parametrizable over $C^{\infty}$. 

For the inverse problem setting we will use $Q_{GT_1}(t)$ to parametrize the GP kernel 
\[K(t, t') = Q^{(\ell,g)}_{GT_1}(t) k^{\vartheta}_0(t, t')Q^{(\ell,g) T}_{GT}(t').\]
For the forward problem, we would however typically use initial value conditions on $\theta_i(t = 0)$ and on $\partial_t\theta_i$ for $i = 1,2,3$ to uniquely identify the solution. To do so, we need to add additional tasks for $\partial_t\theta_i$. Since $\partial_t \theta_i$ correlates with $\theta_i$, we need to correctly couple the additional tasks to the parametrization of $\theta_i$. The parametrization of $\theta_i$ corresponds to the operators of the $i$-th row of $Q^{(\ell,g)}_{GT_1}(t)$ (Eq.~\ref{eq:Ex2_B}), so we parametrize $\partial_t \theta_i$ with $\partial_t$ applied to the operators of the $i$-th row of $Q^{(\ell,g)}_{GT_1}(t)$. Since we work in a Weyl algebra, we need to make sure to respect the commutation relation $[\partial_t, t] = 1$ (Eq.~\ref{eq:commutator}) when applying $\partial_t$. By adding the derivative tasks $\partial_t \theta_i$ to our parametrization matrix, we arrive at the parametrization matrix for the forward problem
\begin{align} Q^{(\ell,g)}_{GT_2}(t) = \begin{pmatrix}
1 & -\partial_t^{2} \ell^{2}-g \ell \\
 1 & -\partial_t^{3} \ell^{2}-g \ell \partial_t \\
 1 & -\partial_t^{2} \ell^{2} t+2 \partial_t \,\ell^{2}-g \ell t \\
 \partial_t^{2} \ell+g & 0 \\
 0 & \partial_t^{4} \ell^{2}+2 \partial_t^{2} g \ell+g^{2} \\
 \partial_t & -\partial_t^{3} \ell^{2}-g \ell\partial_t \\
 \partial_t & -\partial_t^{4} \ell^{2}-g \ell \partial_t^2 \\
 \partial_t & -\partial_t^{2} \ell^{2}-\partial_t^{3} \ell^{2} t+2 \partial_t^2 \,\ell^{2}-g \ell t \\
\end{pmatrix}.
\label{eq:Ex2_GT}
\end{align}

The baseline PIGP method interprets $f(t)$ as three separate inhomogeneities of the form 
\begin{align}
    {\cal L}^{(\ell,g)}(t)\vec{\theta}(t) &= \vec{f}(t)\Leftrightarrow
    \begin{pmatrix}
        \ell\partial_t^2+g&0&0\\0&\ell\partial_t^2+g&0\\0&0&\ell\partial_t^2+g
    \end{pmatrix}\begin{pmatrix}
        \theta_1\\ \theta_2 \\\theta_3
    \end{pmatrix} = \begin{pmatrix}
        f_1(t)\\f_2(t)\\f_3(t)
    \end{pmatrix}.
\end{align}

Though we theoretically know $f_i(t) = f(t)$ describes the movement of the same rod in all three tasks, i.e. the tasks are perfectly correlated, we still need to model them as independent (inter-task correlation = 0) tasks to account for the different movements of the pendula. 

This leads to the equivalent PIGP parametrization matrix (already including the derivative tasks $\partial_t\theta$)
\begin{align}
    Q^{(\ell,g)}_{PIGP}(t) = \begin{pmatrix}
        \mathbb{I}_q\\ {\cal L}^{(\ell, g)}(t) \\ \partial_t \mathbb{I}_q
    \end{pmatrix} = \begin{pmatrix}
    1&0&0\\0&1&0\\0&0&1\\
        \ell\partial_t^2+g&0&0\\0&\ell\partial_t^2+g&0\\0&0&\ell\partial_t^2+g\\
        \ell\partial_t&0&0\\
        0&\ell\partial_t&0\\
        0&0&\ell\partial_t\\
    \end{pmatrix}.
    \label{eq:Ex2_PIGP}
\end{align}
For the numerical experiments, we can directly use $Q^{(\ell,g)}_{GT}(t)$ resp. $Q^{(\ell,g)}_{PIGP}(t)$ as input for the package \texttt{PCGP} \cite{mymaxent} that automatically generates the corresponding kernels that can then be used either with \texttt{GPyTorch} \cite{gpytorch1, gpytorch2} or \texttt{jax} \cite{jax}. 

The ground truth solution of all three angles $\theta = \theta_i,$$i = 1,2,3$ given fixed initial conditions $\theta(0)=\theta_0,\ \dot\theta(0)=\dot\theta_0$ and a chosen forcing acceleration $f(t) = 5\sin(\omega_ft)$ is

\begin{equation}
\theta(t) = \theta_0\cos(\omega t)
+ \left(\dot\theta_0 - \frac{5 \omega_f}{\ell(\omega^2 - \omega_f^2)}\right)\frac{1}{\omega}\sin(\omega t)
+ \frac{5}{\ell(\omega^2 - \omega_f^2)}\sin(\omega_f t),
\qquad
\omega = \sqrt{\frac{g}{\ell}}
\label{eq:ex2_groundtruth}
\end{equation}
with chosen parameter values $g=9.81$, $\ell=2.5$, $\omega_f=1.6$.
The training data for the forward problem consists of a total of six initial condition points at $\theta_i(t =0)$ and $\partial_t\theta_i(t = 0), \; i = 1,2,3$ and 10 linearly spaced points for $f(t)$ in GT and the same 10 points in all $f_i(t),\; i = 1,2,3$ in the PIGP method. We compare two different datasets for the ghost task: because we know $\xi_{GT}(t) = 0$ everywhere without needing to measure it, we can use as many collocation points to enforce this pointwise as we want. We compare 10 to 20 linearly spaced collocation points and refer to the version with additional linearly spaced collocation points as "GT+". The RMSE was calculated over 201 linearly spaced test points in each task.
For the inverse problem we are comparing the MAP estimates of the parameters $\ell$ and $g$, as well as their ratio $g/\ell$ for different amounts of training data per task $n = 3, 5, 10$ and different noise levels $\sigma \in \{0.001, 0.01, 0.1, 0.2\}.$ While the ratio $g/\ell$ we compare is not an independent parameter, Eq.~\ref{eq:ex2_groundtruth} shows that $\omega = \sqrt{\frac{g}{\ell}}$ dominates the dynamics of the system and is therefore an important quantity to assess.

As in the forward problem, we compare a "standard" GT method where we condition the ghost task on $n$ collocation points, i.e. equally many points as the other tasks, and a "GT+" method, where we condition the ghost task on 25 collocation points to better constrain the ghost task. 

Since we have two parameters of interest here, it is also interesting to look at the joint posterior probability of $\ell$ and $g$ (Eq.~\ref{eq:posterior}), that is highly non-trivial. We calculate it on a $20\times 20\times 20 \times 20$ linearly spaced point grid over the parameters $\ell, g, A, l$ and subsequently marginalize over the calibration parameters amplitude $A$ and lengthscale $l$. We assume a flat prior over the considered ranges $g \in [1, 15],\, \ell \in [0.5, 5],\, A \in [0.5, 100],\, l \in [l_{min}, l_{max}]$ with $l_{min}, l_{max}$ being the closest and farthest distance between training points.
\begin{equation}
    p(\ell, g|{\cal D}) = \int p(\ell, g, A, l|{\cal D}) dA \,dl \approx \frac{1}{Z}\sum_{i = 1}^{20}\sum_{j = 1}^{20}p(\ell, g, A_i, l_j)
    \label{eq:posterior}
\end{equation} 
We will qualitatively assess the performance of GT, GT+, and PIGP compared to the optimal posterior resulting from a full Bayesian parameter estimation using a Gaussian likelihood centered around the true forward solution (Eq.~\ref{eq:ex2_groundtruth}) as a function of $g$ and $\ell$.

\begin{figure}[h!]
\centering
  \includegraphics[width=\textwidth]{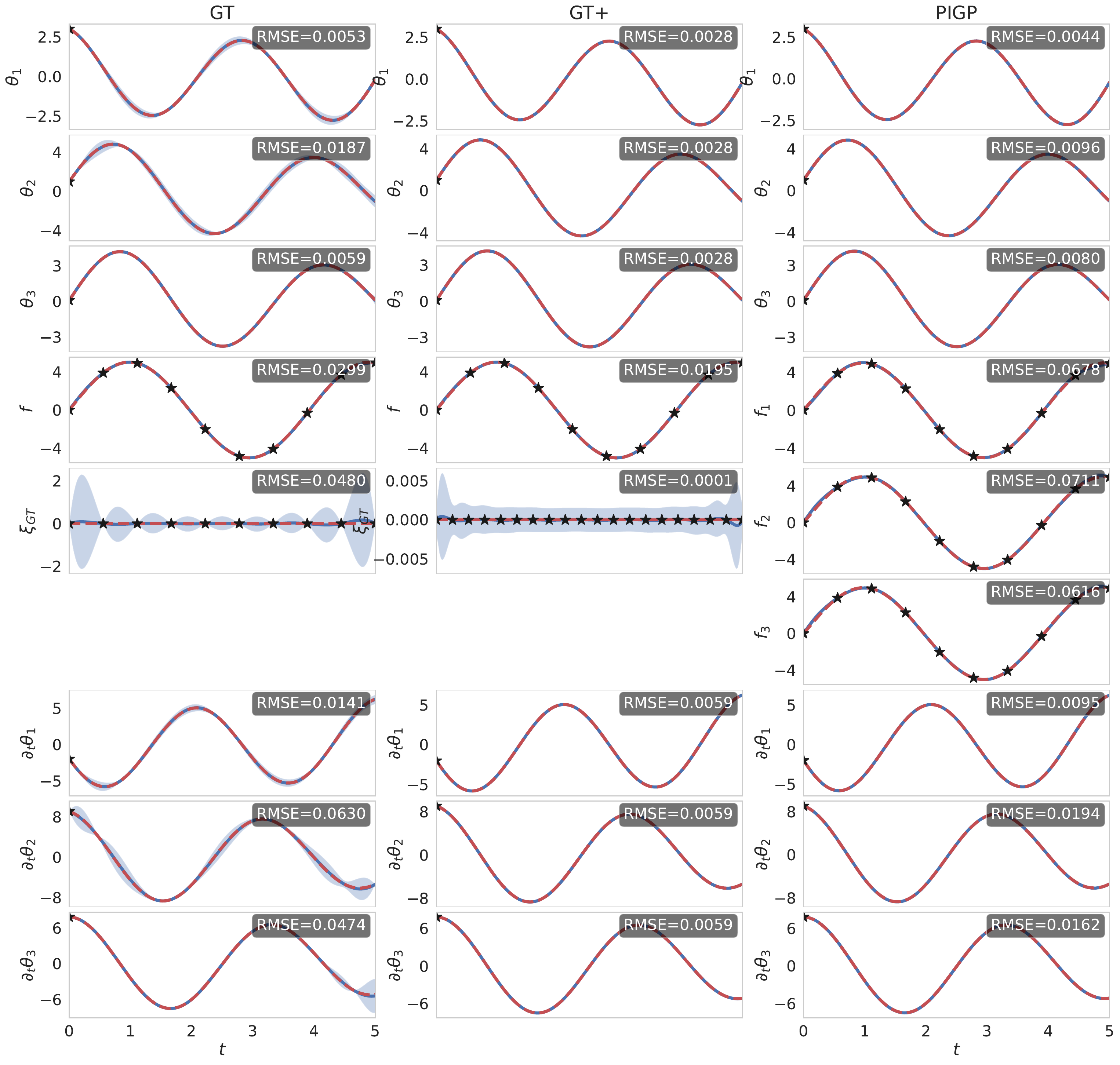} 
\caption{Inferred solutions of the forward problem of experiment 1 for the GT method (left), GT+ with extra collocation points (middle), and the PIGP method (right). Each subplot shows the solution of one task, the last three are the derivative of the angles in the first three tasks, necessary to constrain a second order initial condition problem. The true solution is marked in red, the training data with black stars, and the inferred solutions in blue with a lightblue $2\sigma$ uncertainty band. The RMSE on the test points is written in the corner of each task. }
\label{fig:Ex2_forward}
\end{figure}
In Fig. \ref{fig:Ex2_forward} we show the forward solution of each task, on the left side for GT, in the middle for GT+, and on the right side for PIGP. 
The RMSE of the mean solution calculated over the test points is written in the corner of each task and ranges between 0.0001 and 0.1. We see that overall, the RMSE of the GT method using a smaller amount of total datapoints than GT+ and PIGP are similar to PIGP, though the uncertainty is high in some places. The additional collocation points of GT+ reduce this uncertainty. In total, GT+ uses the same amount of training data as PIGP, but the RMSE for GT+ is lower than for PIGP. 

\begin{figure}[h!]
\centering
  \includegraphics[width=\textwidth]{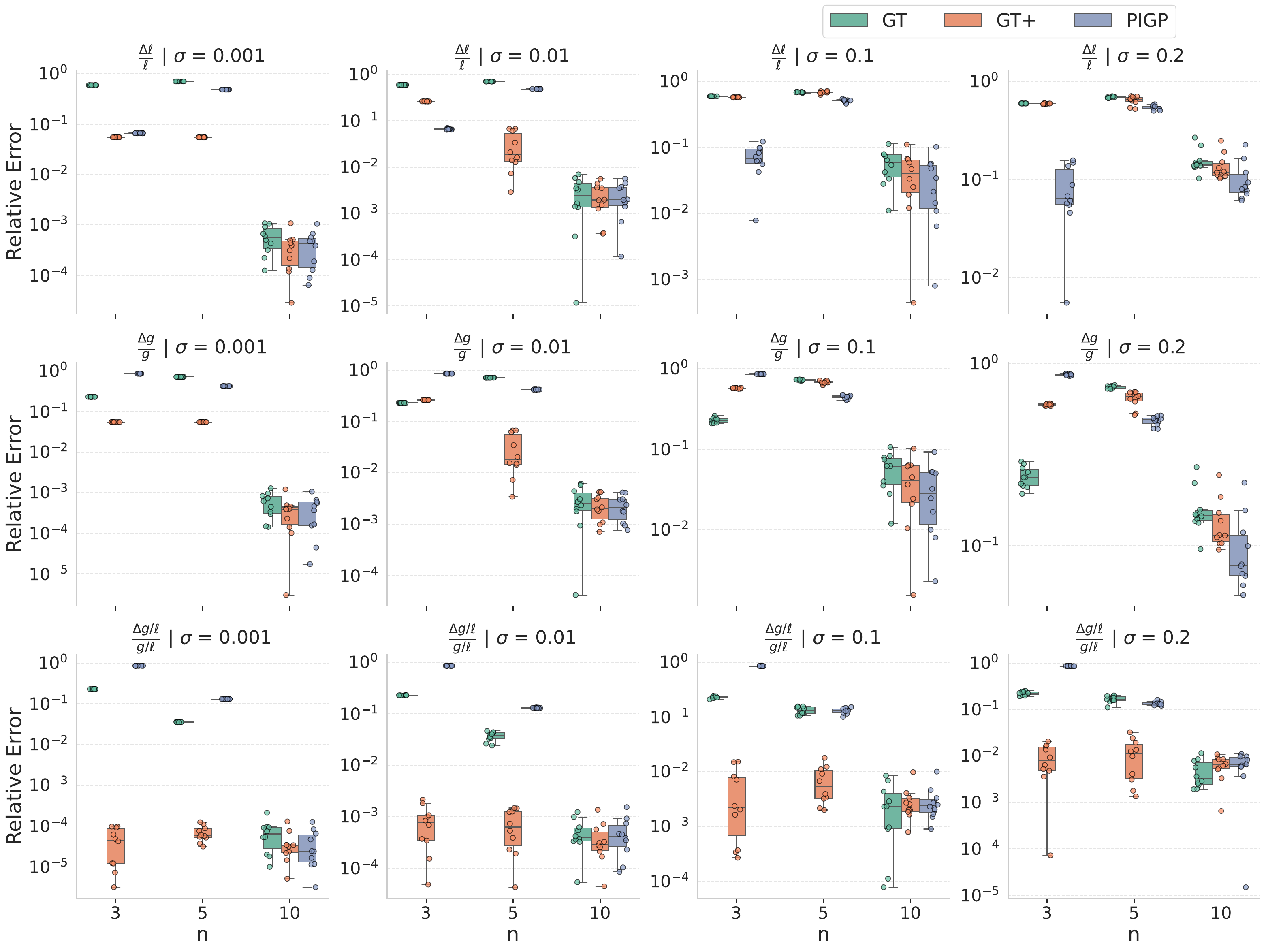}
\caption{Results of the systematic comparison in experiment 1 of GT (green), GT+ (orange), and PIGP (blue) using the mean absolute error of the learned physical parameters (maximum a posteriori, MAP value) $\ell$ (first row), $g$ (middle row), as well as their ratio $g/\ell$ (last row) relative to their true values $(\ell, g) = (2.5, 9.81)$ as metric. Note that $g/\ell$ is not an independent parameter, but does dictate much of the dynamics of the system and can as such be most precisely estimated by GT+. The experiments have been conducted for different numbers of training data $n \in \{3, 5, 10\}$ for the tasks, as well as different noise levels $\sigma \in \{0.001, 0.01, 0.1, 0.2\}$, shown in different subplots. We show the distinct runs as scatterplots, as well as their box-plots for each combination $(\sigma$, n).}
\label{fig:Ex2_inverse}
\end{figure}

We observe in Fig. \ref{fig:Ex2_inverse} that GT achieves a comparable, and in $g/\ell$ about one order of magnitude improved performance compared to PIGP in the low data regime. However, GT+ improves especially the $g/\ell$ results up to 5 orders of magnitude. 
We believe the performance of ghost tasking may be attributed to the inter-coupling between the tasks corresponding to the angles $\theta_i(t)$, which allows the system to share information on the joint dynamics \cite{mymaxent}. Constraining on additional collocation points in the ghost task forces the fitting error of the ghost task, that can be seen as a residue, closer to zero, progressively suppressing oscillatory components in the ghost task. Therefore, no dynamically important oscillations can be absorbed in the ghost task as a residue, considerably improving the learning of the frequency-related ratio. The additional collocation points cannot be provided to PIGP in this experiment, since the forcing term $f(t)$ is treated as an unknown quantity that must itself be inferred from measured data. PIGP recovers the performance of GT in the case where $f(t)$ is assumed to be fully known, as we show in the ablation studies in Tab. \ref{tab:Ex2}.

The full posterior distributions depicted in Fig. \ref{fig:Ex2_posteriors} corroborate this analysis. 
The figure displays the transformed log posterior landscapes\footnote{The reference standard Bayesian estimate is very sharp since we do not have any model uncertainty, so the underlying multimodal structure can only be resolved when plotting the logarithm of the log posterior. We therefore depict $-\log(max(\log(p))-\log(p)+1)$ for all three methods.} in the $(g,\ell)$-plane for all combinations of $n$ and $\sigma$, comparing the reference full Bayesian estimate to GT, GT+ and PIGP. The elongated structures in the posterior indicate a strong correlation between $g$ and $\ell$ that is explained by the system dynamics depending primarily on $\omega=\sqrt{g/\ell}$.
Furthermore, the reference standard Bayesian posterior exhibits multimodality, with several distinct frequency modes corresponding to solutions with different frequencies passing through the same training points. Although the true frequency mode has substantially higher posterior probability, the presence of additional modes reflects the inherent ambiguity of parameter inference from oscillatory observations. Importantly, the same multimodal structure is recovered by GT+, while multimodality is present in GT, but with a slightly shifted frequency, pointing to physical oscillations being absorbed in the ghost task as previously discussed. In general, this indicates that the ghost tasking method captures this aspect of the underlying posterior geometry. 
In contrast, the PIGP posterior is unimodal and is not able to reproduce these additional physically meaningful modes in the considered setting, which can lead to biased posteriors as can be seen for $n = 3$. 
When the forcing term $f(t)$ is assumed to be known (almost) exactly, PIGP is also able to recover the multimodal posterior structure, as shown in our ablation studies in Fig. \ref{fig:Ex2_posteriors_knownu} in \ref{ap:ablation}.

\begin{figure}[h]
\centering
\begin{minipage}{0.45\textwidth}
  \includegraphics[width=\textwidth]{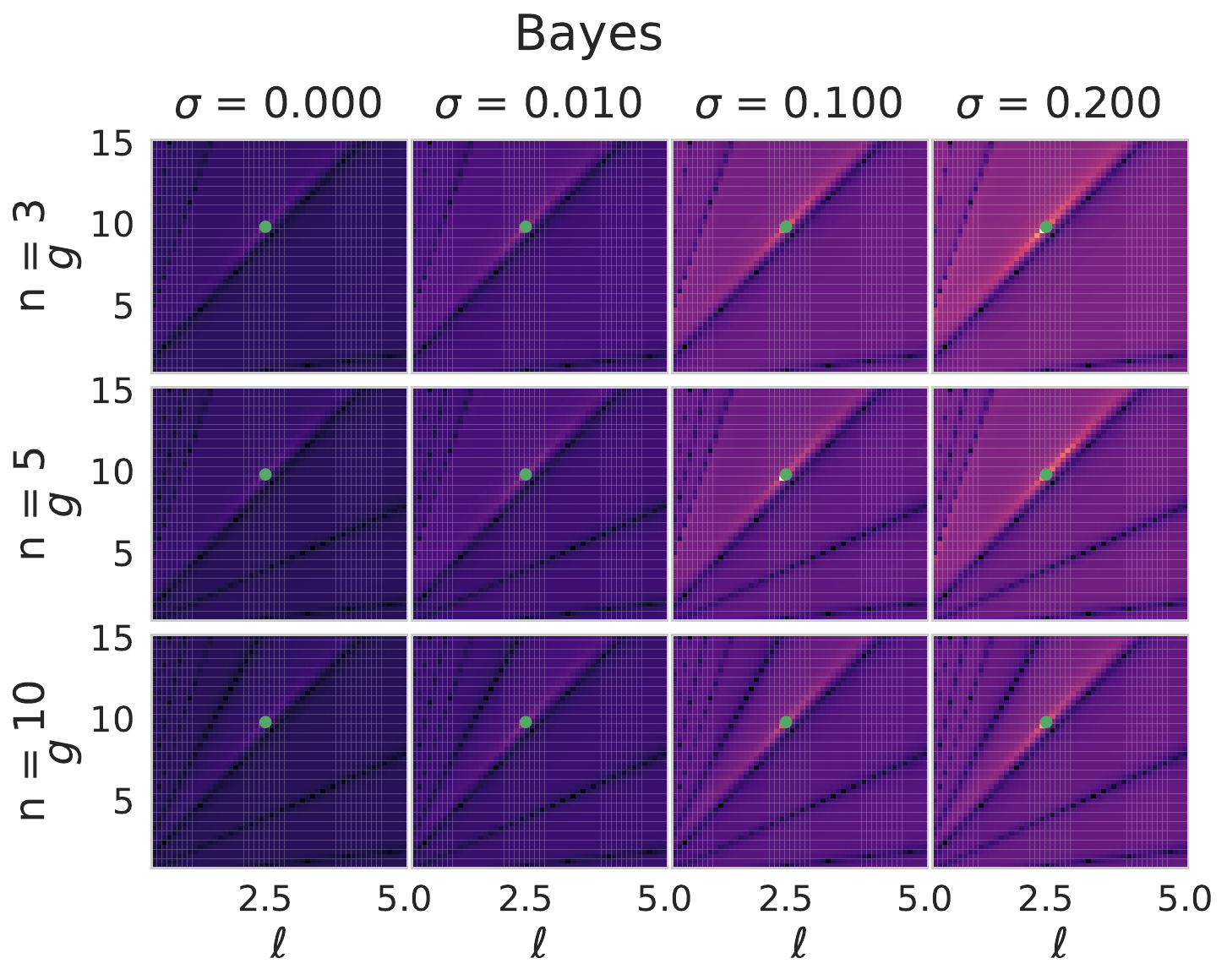}
  \end{minipage}
  \begin{minipage}{0.45\textwidth}
  \includegraphics[width=\textwidth]{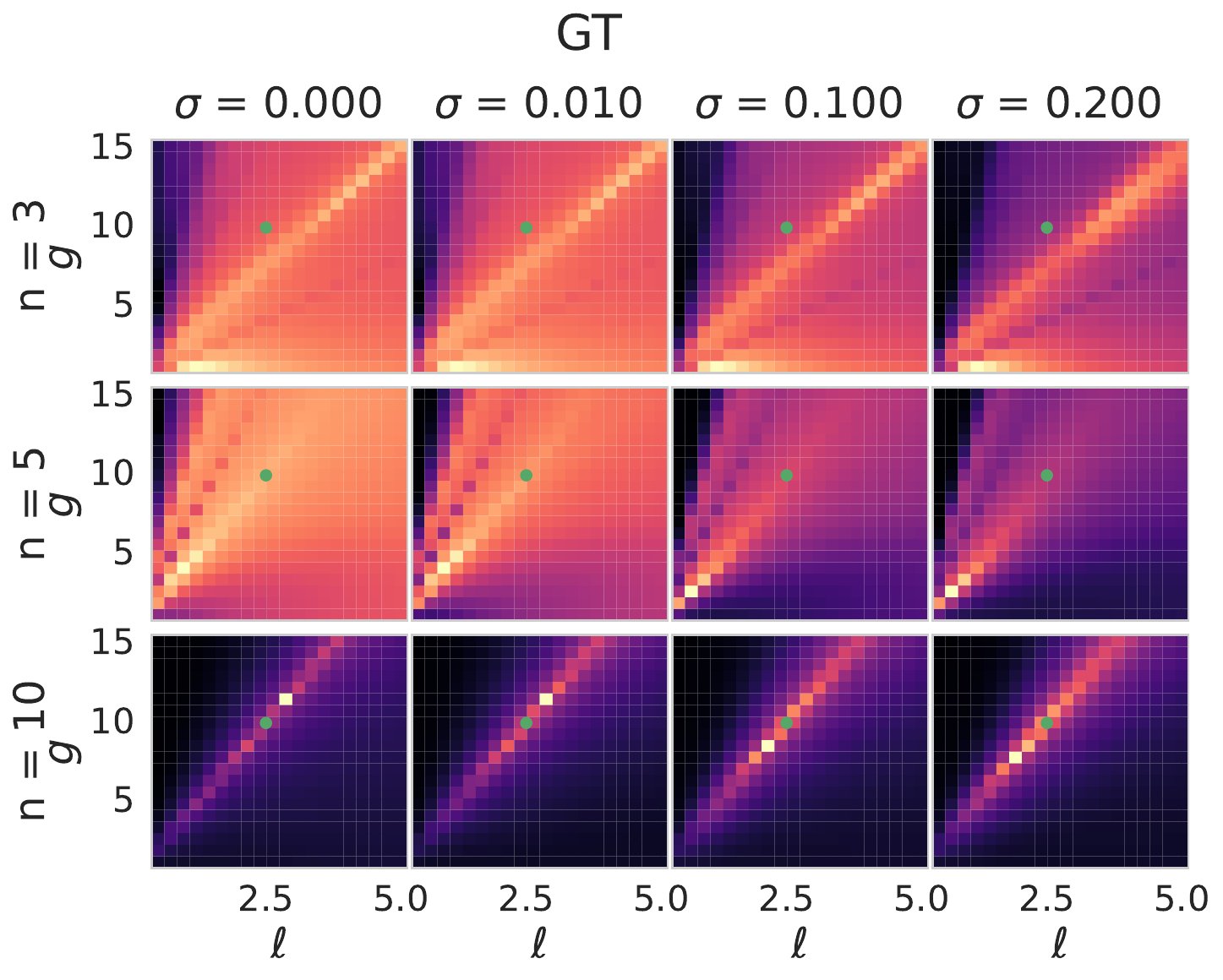}
  \end{minipage}
  \begin{minipage}{0.45\textwidth}
  \includegraphics[width=\textwidth]{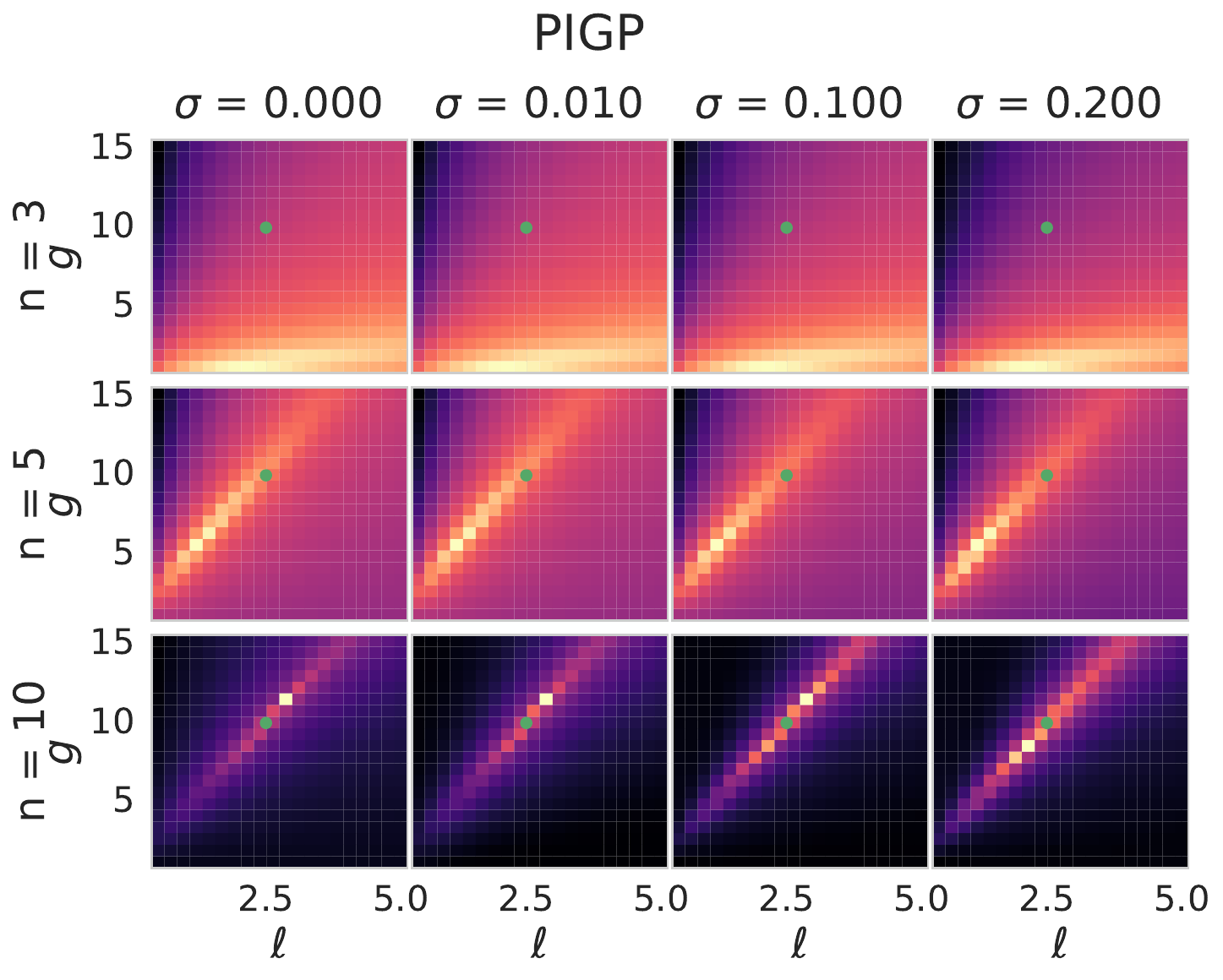}
  \end{minipage}
  \begin{minipage}{0.45\textwidth}
  \includegraphics[width=\textwidth]{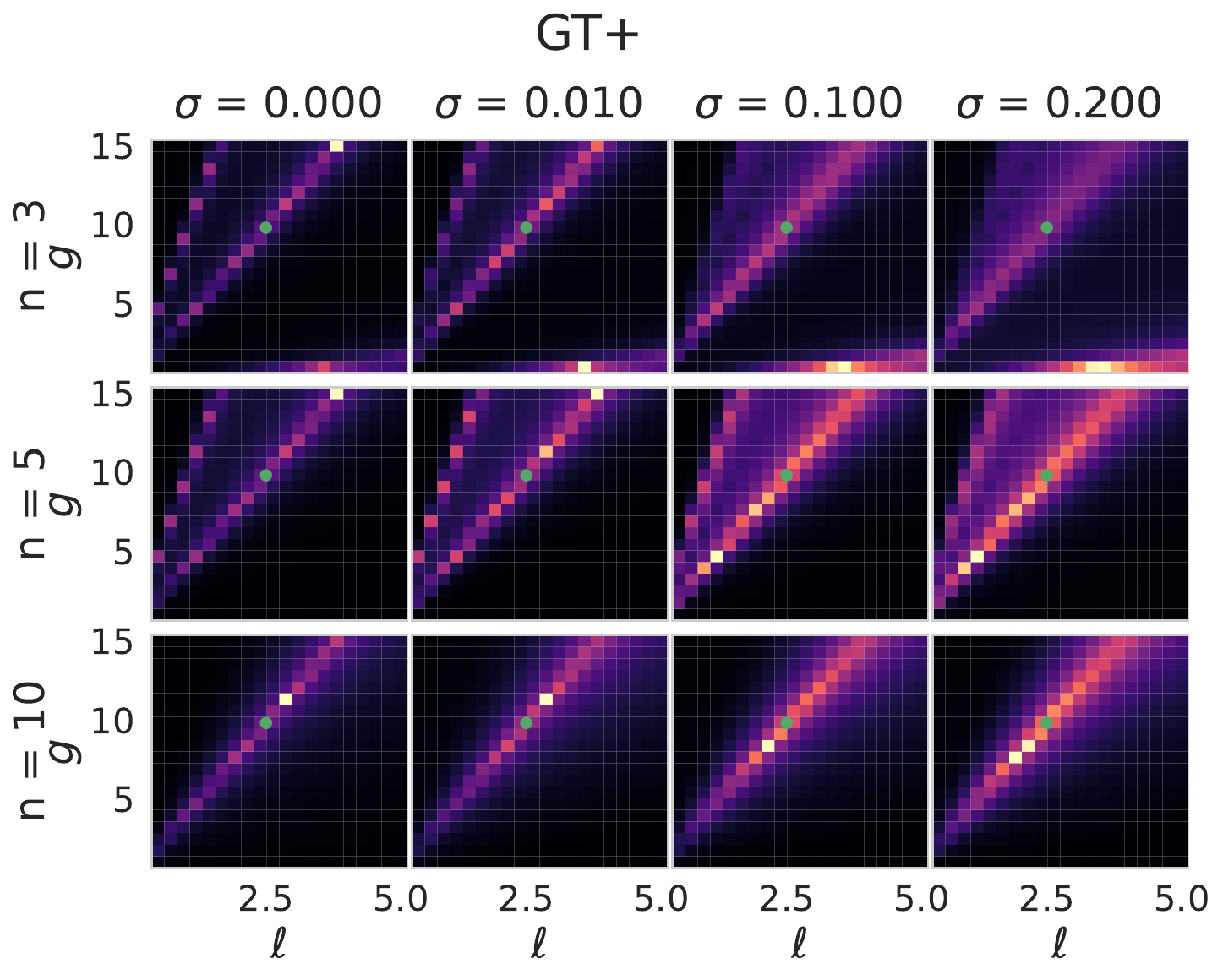}
  \end{minipage}
\caption{Re-scaled log posteriors  $-\log(\max(\log(p(\ell, g)))-\log(p(\ell,g))+1)$ in the $(g,\ell)$-plane marginalized over the calibration parameters $A$ and $l$ for all combinations of $n$ and $\sigma$. The reference Bayesian estimate for the known solution (upper left plot) is compared to GT (upper right), GT+ (lower right), and PIGP (lower left). The elongated structures show a strong correlation between $g$ and $\ell$, and we see multimodality in the reference and GT posteriors. We do not show colorbars because the likelihood scales vary substantially between panels; the comparison is qualitative and focuses on the relative structure of each likelihood rather than absolute values compared across panels.}
\label{fig:Ex2_posteriors}
\end{figure}

\FloatBarrier

\subsection{Experiment 2: Coupled transport}
\label{sec:Ex1}
We will now assess the performance of ghost tasking on a system of PDEs with two input variables, shown in Eq.~\ref{eq:ex1} that includes all possible cases and complications, but is still simple enough to be solved analytically as well. 
\begin{align}
\partial_t u + x \partial_x u + \alpha\partial_x v &= 0 \nonumber\\
\partial_t v + x\partial_x v &= 0
\label{eq:ex1}
\end{align}
A physical interpretation of this system could be a coupled transport in a shear flow field with some coupling constant $\alpha$ for example. Note that the coefficients are polynomials in $x$, so the second Weyl Algebra (over $\vv x = (x,t)$) is a suitable choice of operator ring $D$. 
We define our operator matrix $R^{\alpha}(\vv x)$ by rewriting Eq.~\ref{eq:ex1}
\begin{equation}
        R^{\alpha}(\vv x)\vec{\xi}(\vv x) =
    \begin{pmatrix}
 \partial_t+x\partial_x & \alpha\partial_x\\
        0&\partial_t+x\partial_x \\
    \end{pmatrix}\begin{pmatrix}
        u(x, t)\\v(x, t)
    \end{pmatrix} = \vec{0}.
    \label{eq:Ex1system}
\end{equation}
Alas, the original system is torsion and is therefore not parametrizable.
We will try to render the system parametrizable by introducing a ghost task and a corresponding column vector $\Lambda(\vv x)$. We have two options going forward: Using theorem \ref{th:GT}, we can construct an augmented system that is projective (in short "proj"), i.e. guaranteed to be able to realize all solutions in any sufficiently smooth function space, by choosing $\Lambda(\vv x)$, and the corresponding module $\tau(\Lambda)D^m$ its projection generates, such that $ext^1_D(M, D)/(\tau(\Lambda)D^m) = 0$, i.e. $\tau(\Lambda)D^m \cong ext^1_D(M, D)$. On the other hand, the condition $ext^1_D(ext^1_D(M,D)/(\tau(\Lambda)D^m), D) = 0$ of Proposition \ref{prop:GT2} to render the augmented system torsion-free (in short "tf") is weaker, and finding a valid $\Lambda(\vv x)$ in this sense by brute force search can be significantly faster, and heuristically results in shorter, lower order term structures in the resulting parametrization matrix $Q^\alpha(\vv x)$. On the downside, since we work over the second Weyl algebra, torsion-freeness does not guarantee that our parametrization captures all solutions over $C^{\infty}$ (see chapter \ref{subsec:params}). Although this may be mathematically unsatisfying, from an engineering and efficiency standpoint we nonetheless recommend using the torsion-free version whenever it is applicable, as we demonstrate in this experiment and the next. For this experiment, we will explore and compare both versions. 
For the projective system, a suitable $\Lambda(\vv x)$ is $\Lambda_{proj}(\vv x) = (1,t)$, resulting in the parametrization matrix
\begin{equation}
\resizebox{\textwidth}{!}{$\displaystyle
Q^{\alpha}_{\mathrm{GT\,proj}}(\vv x) :=
\begin{pmatrix}
 \alpha t^{2} x\partial_x^{2} + \alpha t^{2}\partial_t\partial_x - 2\alpha t x\partial_x^{2} - t x^{2}\partial_x^{2} - 2\alpha t\partial_t\partial_x + \alpha t^{2}\partial_x - 2tx\partial_x\partial_t + 2x^{2}\partial_x^{2} - 2\alpha t\partial_x - t\partial_t^{2} + 4x\partial_x\partial_t + 4\alpha\partial_x + 2\partial_t^{2} + t\partial_t - 3x\partial_x - 5\partial_t + 1
\\
 -t^{2}x^{2}\partial_x^{2} - 2t^{2}x\partial_t\partial_x + 2tx^{2}\partial_x^{2} - t^{2}\partial_t^{2} + 4tx\partial_x\partial_t + 2t\partial_t^{2} + t^{2}\partial_t - 2tx\partial_x - 4t\partial_t - 2x\partial_x - 2\partial_t + 4
\\
 -tx^{3}\partial_x^{3} - 3tx^{2}\partial_t\partial_x^{2} + 2x^{3}\partial_x^{3} - 3tx\partial_t^{2}\partial_x + 6x^{2}\partial_t\partial_x^{2} - 2tx^{2}\partial_x^{2} - t\partial_t^{3} + 6x\partial_x\partial_t^{2} - tx\partial_x\partial_t + 2\partial_t^{3} + t\partial_t^{2} - 6x\partial_x\partial_t - 6\partial_t^{2} - 2x\partial_x + 2\partial_t
\end{pmatrix}.
$}
\label{eq:Ex1_GT1}
\end{equation}
On the other hand, choosing the $\Lambda_{tf} = (1,\partial_t)$ renders the system torsion-free and leads to the parametrization matrix
\begin{align}
Q^{\alpha}_{GTtf}(\vv x) = \begin{pmatrix}
- \alpha \partial_t \partial_x+x \partial_x+\partial_t-1 \\
 \partial_t \partial_x x+\partial_t^{2}-\partial_t \\
 \partial_x^{2} x^{2}+2 \partial_t \partial_x x+\partial_t^{2}-\partial_t
    \end{pmatrix}.
    \label{eq:Ex1_GT2}
\end{align}

The corresponding (projective) PIGP parametrization, equivalent to choosing $\Lambda = \begin{pmatrix}
    1&0\\0&1
\end{pmatrix}$, is
\begin{align}
    Q^{\alpha}_{PIGP}(\vv x) = \begin{pmatrix}
        1 & 0\\ 0 & 1\\
        \partial_t+x\partial_x & \alpha\partial_x\\
        0&\partial_t+x\partial_x \\
    \end{pmatrix}.
    \label{eq:Ex1_PIGP}
\end{align} 
Comparing Eq.~\ref{eq:Ex1_GT1} and Eq.~\ref{eq:Ex1_GT2} to Eq.~\ref{eq:Ex1_PIGP}, we note that both $Q^\alpha_{GTproj}(\vv x)$ and $Q^\alpha_{GTtf}(\vv x)$ have three rows, and therefore needs three tasks, while $Q^\alpha_{PIGP}(\vv x)$ needs four tasks. Furthermore, $Q^\alpha_{GTproj}(\vv x)$ and $Q^\alpha_{GTtf}(\vv x)$ have one column, meaning that the latent space consists of one function, while $Q^\alpha_{PIGP}(\vv x)$ has two columns and therefore needs two independent functions spanning the latent space in order to express all solutions of our system.\\

To generate training data, we use the analytical solution obtained by the method of characteristics,
\begin{align}
    u(t,x) &= u_0(x\exp({-t})) - \alpha(1-\exp({-t}))v_0'(x\exp({-t})) \ , \nonumber \\
    v(t,x) &= v_0(x\exp({-t})) \ .
\end{align}
Providing initial conditions $u(x, 0) = 0$ and $v(x,0) = 3\sin(2x)$ and choosing the true parameter to be $\alpha = 2$ renders the solution unique. 
For the forward problem, both GT versions require conditioning on the full grid in 1 ghost task, while PIGP requires conditioning on the full grid in 2 tasks. We provide initial conditions in both cases. The domain of interest is $\Omega = \{(x,t)^T: x,t \in [0,1]\}$, but we place collocation points on a wider region: the fully conditioned tasks are conditioned on a $11\times10$ grid over $x \in [-0.1, 1], t \in [0,1]$, and the initial conditions of $u$ and $v$ are enforced on 11 linearly spaced points $(x,0)$ for $x \in [-0.1, 1]$. This extension outside $\Omega$ is possible because the parametrization is valid on all of $\mathbb{R}^2$, and we use it for stability reasons: In this example, the characteristics transport information along $x = x_0 e^t$, so every value near the corner $(x,t) = (0,1)$ traces back to a small neighborhood of $(0,0)$, making that region especially sensitive to variations near $(0,0)$. By also constraining on $x \in (-0.1, 0)$, we make sure that the initial condition is well represented near $(0,0)$. In total, this means we condition on $1\cdot 110 + 2\cdot 11 = 132$ training points for our ghost tasking methods and $2\cdot110 + 2\cdot11 = 242$ training points for the PIGP method. This data inequality stems from the different amount of tasks we need to condition on and ensures the same data density is provided to both methods. 
The test points lie on a linear $51 \times 51$ grid for each task, and only points that lie in the domain $\Omega$ are considered for the RMSE calculations. 

For the inverse problem, analogously to experiment 1, we systematically generated training points on $n\times n$ grids with $n = 3, 5, 10$ for all tasks. In our ablation studies \ref{ap:ablation}, we also considered the previously introduced "GT+" method using additional collocation points. Concretely, the collocation grid for the ghost task is augmented to a $10 \times 10$ instead of a $n\times n$ grid. Because we know the tasks corresponding to inhomogeneities in the PIGP method also to be exactly zero, we can also consider an augmented $10 \times 10$ collocation grid for the two inhomogeneity tasks in the PIGP method. We also study this augmented method in our ablation studies in \ref{ap:ablation} and analogously call this variant "PIGP+". 
While the additional collocation points improved the results slightly in both methods, we did not find the same level of improvement relative to the increased computation time as in experiment 1, so we only discuss the standard cases "GT" and "PIGP", with the same amount of training points in each task, in the main text. \\

Fig. \ref{fig:Ex1_forward} shows an example solution of the forward problem given the ghost tasks and the initial conditions for both the GT (upper two rows) and the PIGP method (lower two rows). 
We see that both methods perform comparably well with an RMSE below 0.006 on every task. Overall, the projective version of GT outperforms the other two slightly, while using less training data than PIGP. It shall be noted however, that inference for projective GT took about 25 seconds, while for torsion-free GT and PIGP, it took about 3 seconds.

\begin{figure}[h!]
\centering
  \includegraphics[width=\textwidth]{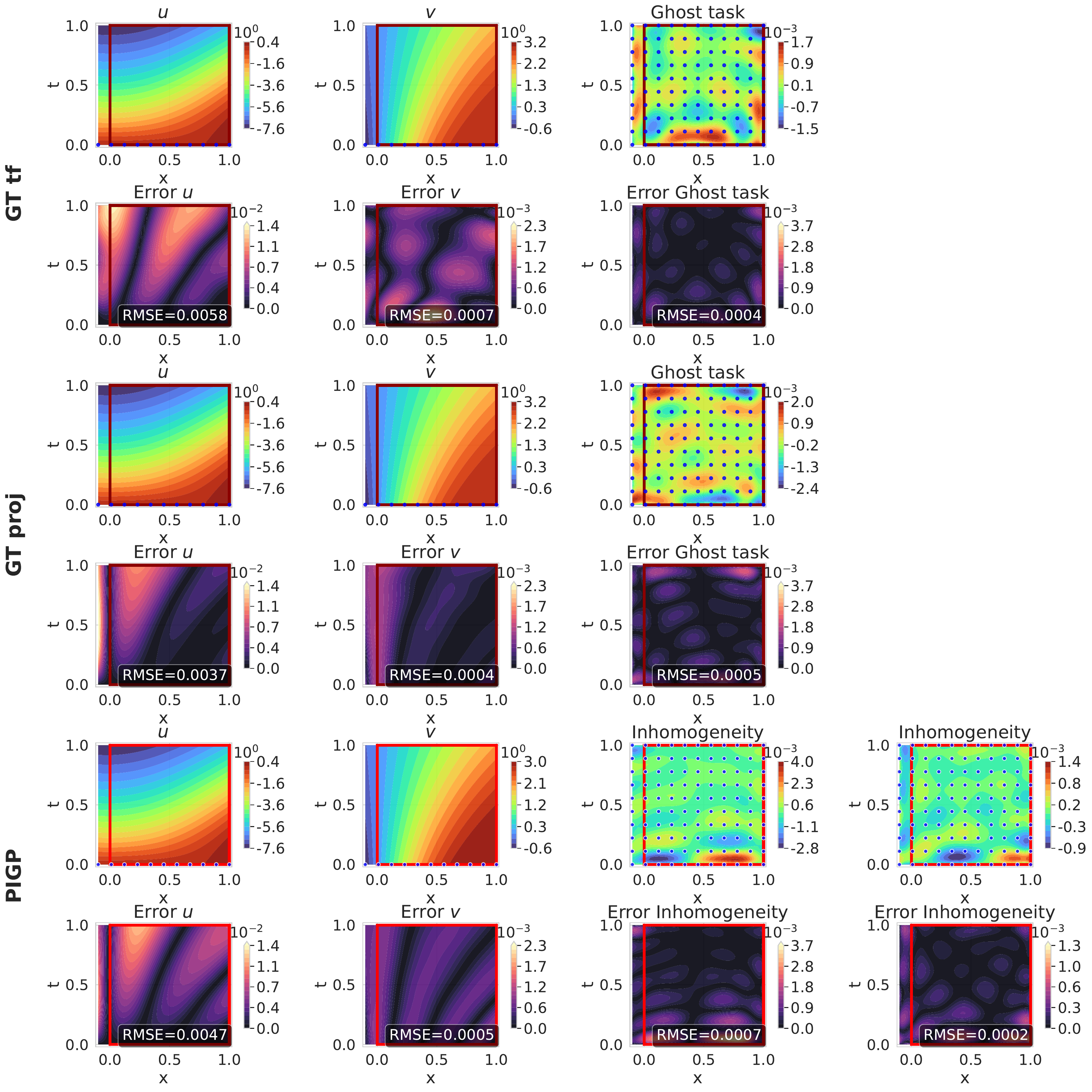}
\caption{Inferred solution (uneven rows) and absolute error (even rows) of the forward problem given training data for the ghost task/input tasks and the initial conditions. The torsion-free GT version is depicted in the first two rows, the projective GT version in the middle two rows, and PIGP in the lowest two rows. The left column shows the results for $u$, the middle column for $v$ and on the right we show the ghost task resp. input tasks for the inhomogeneity necessary. Conditioned points are marked with blue. The RMSE is computed within the region of interest, marked by a red box.}
\label{fig:Ex1_forward}
\end{figure}

The results of the inverse problem are shown in Fig. \ref{fig:Ex1_inverse}. 
All methods perform comparably across settings. This matches our expectations, since the underlying model is only loosely coupled (with ${\cal L}^{\alpha}(x,t)$ being a triangular matrix) -- a setting in which the loosely-coupled PIGP approach is already well suited. We especially note that the performances of the torsion-free and the projective GT versions are similar, however, since the parametrization matrix for projective GT (eq. \ref{eq:Ex1_GT1}) contains more terms, the kernel evaluation takes much more time, resulting in convergence times that were approximately 10 times longer than for the torsion-free GT and the PIGP case. In summary, both torsion-free ghost tasking  and projective ghost tasking recover the performance of PIGP in this experiment while using less training data.

\begin{figure}[h!]
\centering
  \includegraphics[width=\textwidth]{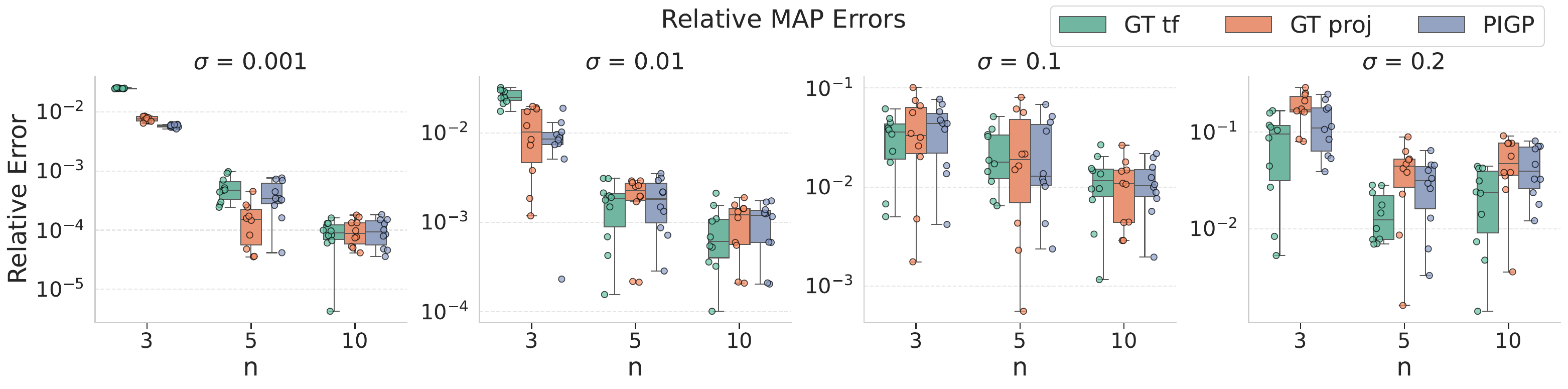}
\caption{Relative error as performance metric of the inverse problem of experiment 2, analogously to Fig. \ref{fig:Ex2_inverse} of experiment 1. The results for the torsion-free GT system are shown in green, the results for the projective GT version are shown in orange and the results for PIGP are shown in blue.}
\label{fig:Ex1_inverse}
\end{figure}

\subsection{Experiment 3: Anisotropic Magnetostatics}
Next we discuss a magnetostatics example as in \cite{chris_und_patrick}: For a simply connected domain $\Omega$, we can write the magnetostatic equations for the magnetic flux density $\vec{B}(x,y,z)$ in terms of a vector potential $\vec{A}(r)$ with $\nabla \times \vec{A} = \vec{B}$ through the curl-curl equation with Dirichlet boundary conditions
\begin{align}
    \nabla \times (\nu\nabla \times \vec{A}) = \vec{J}, &&\text{in } \Omega\\
    \vec{n} \times \vec{A} = 0 &&\text{on }\partial \Omega
\end{align} 
$\nu(x,y,z)$ refers to the reluctivity tensor, i.e. the inverse of the permeability, and $\vec{J}(x,y,z)$ is the current density.
We can write the curl operator $\nabla \times$ explicitly as an operator matrix
\begin{equation}(\nabla \times) = \begin{pmatrix}
    0&-\partial_z & \partial_y\\ \partial_z & 0 & -\partial_x\\ -\partial_y & \partial_x & 0
\end{pmatrix}.\end{equation}

For this example, we choose $\vec{J}(x,y,z)$ to be aligned with the $x$-axis, i.e. $\vec{J}(x,y,z) = \begin{pmatrix}
    J_x(x,y,z)&0&0\end{pmatrix}^T$.
We consider an inhomogeneous material in a box $\Omega = \{(x,y,z)^T : x,y \in (0, 1), z \in (1,2) \}$ with a local reluctivity $\nu(x,y,z) = \text{diag}(\nu_0, \nu_0, \nu_0 z)$. It behaves symmetrically in the $x-y$ plane but its reluctivity increases in $z$ direction with its position. This could be e.g. an engineered meta-material, or an approximation for a layered composite material. Then, we can formulate our problem in terms of operator matrices
\begin{align}
(\nabla \times)\nu(\nabla \times) \vec{A} = \vec{J} \Leftrightarrow
\left(\begin{array}{cccc}
-\partial_y^{2} z\nu_0-\partial_z^{2} \nu_0  & \partial_x \nu_0\partial_y z & \partial_z \nu_0  \partial_x & 0 
\\
 \partial_x\nu_0 \partial_y \nu_0z & -\partial_x^{2} z-\partial_z^{2} \nu_0  & \partial_z \nu_0  \partial_y & 0 
\\
 \partial_z \nu_0  \partial_x & \partial_z \nu_0  \partial_y & -\partial_x^{2} \nu_0 -\partial_y^{2} \nu_0  & -1 
\end{array}\right)
\begin{pmatrix}
        A_1\\A_2\\A_3\\J_x
    \end{pmatrix} = \vec{0}.
\end{align}
This system is not parametrizable. Since it is a system of PDEs, we can distinguish between a torsion-free and a projective augmented system. We were able to calculate a projective $\Lambda_{proj} = (0, 1, xz)$. However, calculating the corresponding parametrization matrix was not feasible on a laptop in a reasonable time. Instead, we are opting to build a torsion-free system by adding the column vector $\Lambda = (0, 1, 0)^T$, and show that already the torsion-free system is indeed sufficient to vastly outperform PIGP in this experiment. The torsion-free parametrization matrix is
\begin{align}
Q^{\nu_0}_{GT}(\vv x) = \begin{pmatrix} 
- \partial_x & - \partial_y 
\\
 - \partial_y &  \partial_x 
\\
 - \partial_z & 0 
\\
 0 &  \partial_x^{2}  \partial_y z \nu_{0}+ \partial_y^{3} z \nu_{0}+ \partial_y \, \partial_z^{2} \nu_{0} 
\\
 0 & - \partial_x^{3} z \nu_{0}- \partial_x \, \partial_y^{2} z \nu_{0}- \partial_x \, \partial_z^{2} \nu_{0} 
\end{pmatrix}.
\end{align}
It is interesting to note that the first column of $Q^{\nu_0}_{GT_1}(\vv x)$ is $-\nabla \vec{A}$, reflecting the gauge invariance of the vector potential. 
In this experiment, conditioning on $J_x$ and on the boundaries of $\vec{A}$ is sufficient for solving the forward problem, but the vector potential $\vec{A}$ is a latent variable that we cannot observe. What we are actually interested in is the magnetic flux density $\vec{B}(x,y,z)$, that we can calculate by taking the curl of the vector potential $\vec{B} = \nabla \times \vec{A}$. Similar to the first experiment, we can add tasks to infer $\vec{B}$ by taking the curl of the first three rows of $Q^{\phi}_{GT_1}(\vv x)$, resulting in a total parametrization matrix 
\begin{align}
Q^{\nu_0}_{GT_{+B}}(\vv x) = \begin{pmatrix} 
- \partial_x & - \partial_y 
\\
 - \partial_y &  \partial_x 
\\
 - \partial_z & 0 
\\
 0 &  \partial_x^{2}  \partial_y z \nu_{0}+ \partial_y^{3} z \nu_{0}+ \partial_y \, \partial_z^{2} \nu_{0} 
\\
 0 & - \partial_x^{3} z \nu_{0}- \partial_x \, \partial_y^{2} z \nu_{0}- \partial_x \, \partial_z^{2} \nu_{0} \\
 0 & -\partial_x\partial_z\\
 0 & -\partial_y\partial_z\\
 0 & \partial_x^2 + \partial_y^2
\end{pmatrix}.
\label{eq:Ex3_B_GT}
\end{align}

The PIGP method uses $J_y$ and $J_z$ as additional input tasks:
\begin{align}
    \begin{pmatrix}
        -\partial_y^{2} \nu_0z-\partial_z^{2} \nu_0  & \partial_x \nu_0\partial_y z & \partial_z \nu_0  \partial_x &-1&0&0\\ 
 \partial_x \nu_0\partial_y z & -\partial_x^{2} \nu_0z-\partial_z^{2} \nu_0  & \partial_z \nu_0  \partial_y &0&-1&0\\
 \partial_z \nu_0  \partial_x & \partial_z \nu_0  \partial_y & -\partial_x^{2} \nu_0 -\partial_y^{2} \nu_0  &0&0&-1&\end{pmatrix}\begin{pmatrix}
        A_x\\A_y\\A_z\\J_x\\J_y\\J_z
    \end{pmatrix} = \begin{pmatrix}
    0\\0\\0
    \end{pmatrix}
\end{align}
and the corresponding parametrization matrix (already including the curl operator to calculate $\vec{B}$) is
\begin{align}
    Q^{\nu_0}_{PIGP}(\vv x) = \begin{pmatrix}
        1 & 0 & 0\\
        0 & 1 & 0\\
        0&0&1\\
         -(\partial_y^{2} z+\partial_z^{2}) \nu_0  & \partial_x \partial_y z\nu_0 & \nu_0\partial_z   \partial_x \\
 \partial_x \partial_y z\nu_0 & -(\partial_x^{2} z+\partial_z^{2}) \nu_0  & \partial_z \nu_0  \partial_y \\
 \partial_z \nu_0  \partial_x & \partial_z \nu_0  \partial_y & -(\partial_x^{2} +\partial_y^{2} )\nu_0\\
 0&-\partial_z & \partial_y\\ \partial_z & 0 & -\partial_x\\ -\partial_y & \partial_x & 0
    \end{pmatrix}.
    \label{eq:Ex3_B_PIGP}
\end{align}

We choose the true parameter value in the simulations to be $\nu_0 = 1$. Given a fixed current density $\vec{J}$, the corresponding solution for the vector potential $\vec{A}$ and the magnetic flux density $\vec{B}$ is
\begin{align}
J = 2\nu_0 \begin{pmatrix} 
z^3 - 3z^2 + 2z - y + y^2 \\ 0 \\ 0 
\end{pmatrix} &
    \vec{A} = \begin{pmatrix} 
y(1-y)(z-1)(z-2) \\  0 \\ 0 
\end{pmatrix} & 
\vec{B} = \begin{pmatrix} 
0 \\ 
(y - y^2)(2z - 3) \\ 
-(1 - 2y)(z^2 - 3z + 2)
\end{pmatrix}.\end{align}

For the forward problem, we want to infer $\vec{B}$ provided $\vec{J}$ and the boundary constraint $\vec{n} \times A = 0 \text{ on } \partial \Omega$. In the GT method, we encode our knowledge by conditioning on a $6\times 6 \times 6$ grid on $J_x$ and the ghost task, while in the PIGP method, we condition the tasks $J_x, J_y,$ and $J_z$ on these gridpoints. Per construction, $J_y$ and $J_z$ are enforced to be exactly zero in the GT method. 
For both methods, we enforced the boundary condition on the necessary surface points of the point grid, amounting to 
\begin{align}
    A_x(x_i, y = 0, z_j) = A_x(x_i, y = 1, z_j) = A_x(x_i, y_j, z = 1) = A_x(x_i, y_j, z = 2) &= 0 \nonumber\\
    A_y(x = 0, y_i, z_j) = A_y(x = 1, y_i, z_j) = A_y(x_i, y_j, z = 1) = A_y(x_i, y_j, z = 2) &= 0 \nonumber\\
    A_z(x = 0, y_i, z_j) = A_z(x = 1, y_i, z_j) = A_z(x_i, y = 0, z_j) = A_z(x_i, y = 1, z_j) &= 0 \quad \forall i,j = 1,...,n.
\end{align} 
This yields a total of $360$ boundary points for both methods and $2\cdot6^3 = 432$ inner points ($J_x$ and $\xi_{GT}$) for GT resp. $3\cdot6^3 = 648$ ($J_x, J_y, \text{ and } J_z$) for PIGP. 
The test set consists of a $18\times 18 \times 18$ grid across all tasks. 

The inverse problem is structured analogously to the previous experiments on $n^3$ training points on every task on a grid with $n = 3,5,7$. As in the second experiment, we show the effect of adding extra collocation points to the ghost/input tasks in the ablation studies in \ref{ap:ablation}. Since the vector potential $\vec{A}$ is not a measurable quantity, we did not use it for the inverse problem.  

\begin{figure}[h!]
\centering
  \includegraphics[width=\textwidth]{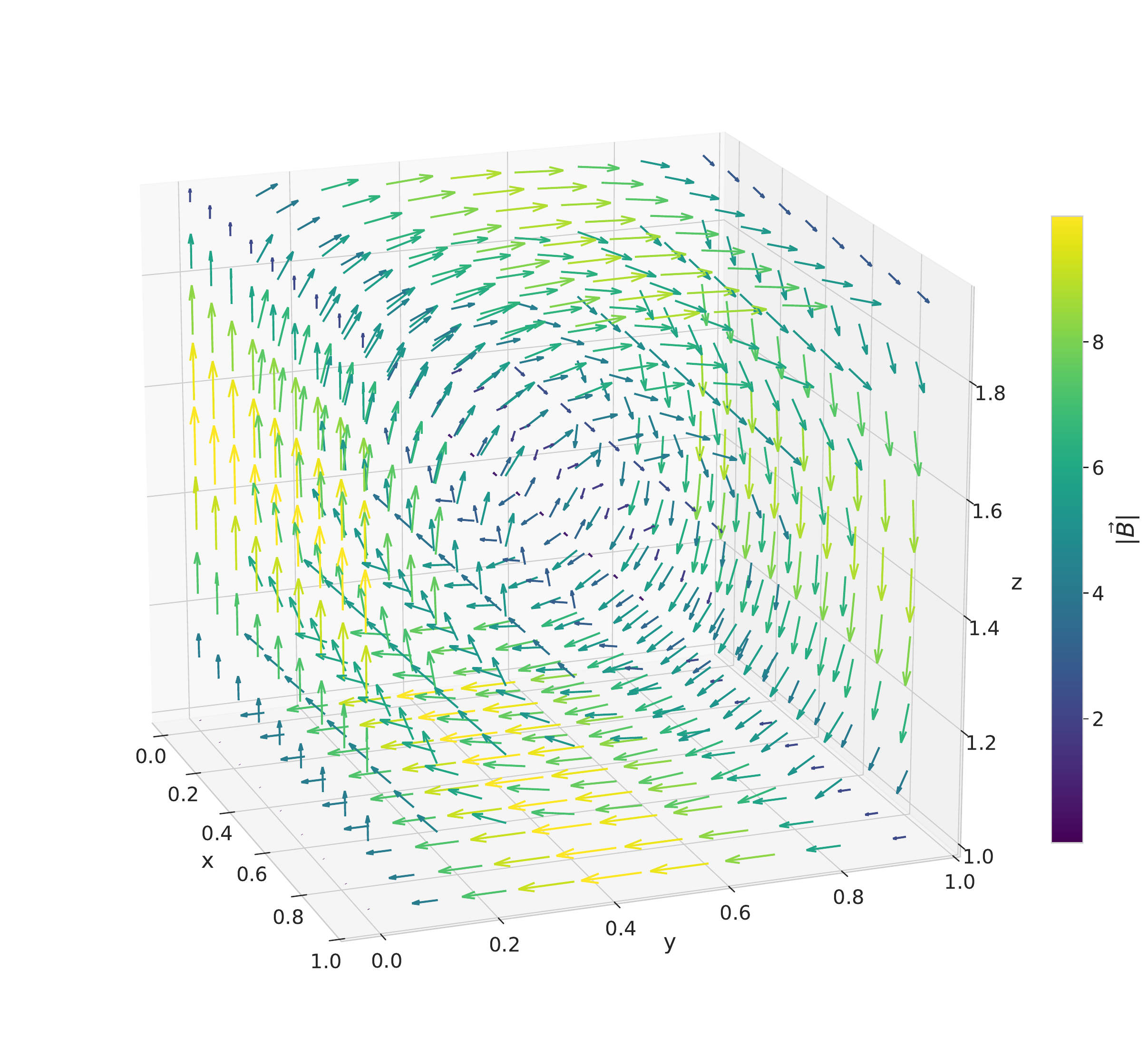}
\caption{Quiver plot of the magnetic flux density $\vec{B}$ inferred with GT in the forward problem of experiment 3 given Dirichlet boundary conditions for $\vec{A}$ and $6\times 6\times 6$ training points for $J_x$ and the ghost task $\xi_{GT}$. Direction of $\vec{B}$ is indicated with arrows, the strength is represented with the length of the arrows as well as their color.}
\label{fig:Ex3_quiver_GT}
\end{figure}

Since this is a three-dimensional problem, Fig. \ref{fig:Ex3_quiver_GT} shows a quiver plot of the inferred magnetic flux density $\vec{B}$ of the GT method as a visualization. 
Fig. \ref{fig:Ex3_forward_auxiliary} resp. Fig. \ref{fig:Ex3_forward_B} show a 2d slice of the conditioned tasks resp. the inferred $\vec{B}$ fields through the $y-z$ plane at $x = 0.6$. In the corner of each subplot we show the RMSE calculated over the whole volume. 

\begin{figure}[h!]
\centering
  \includegraphics[width=\textwidth]{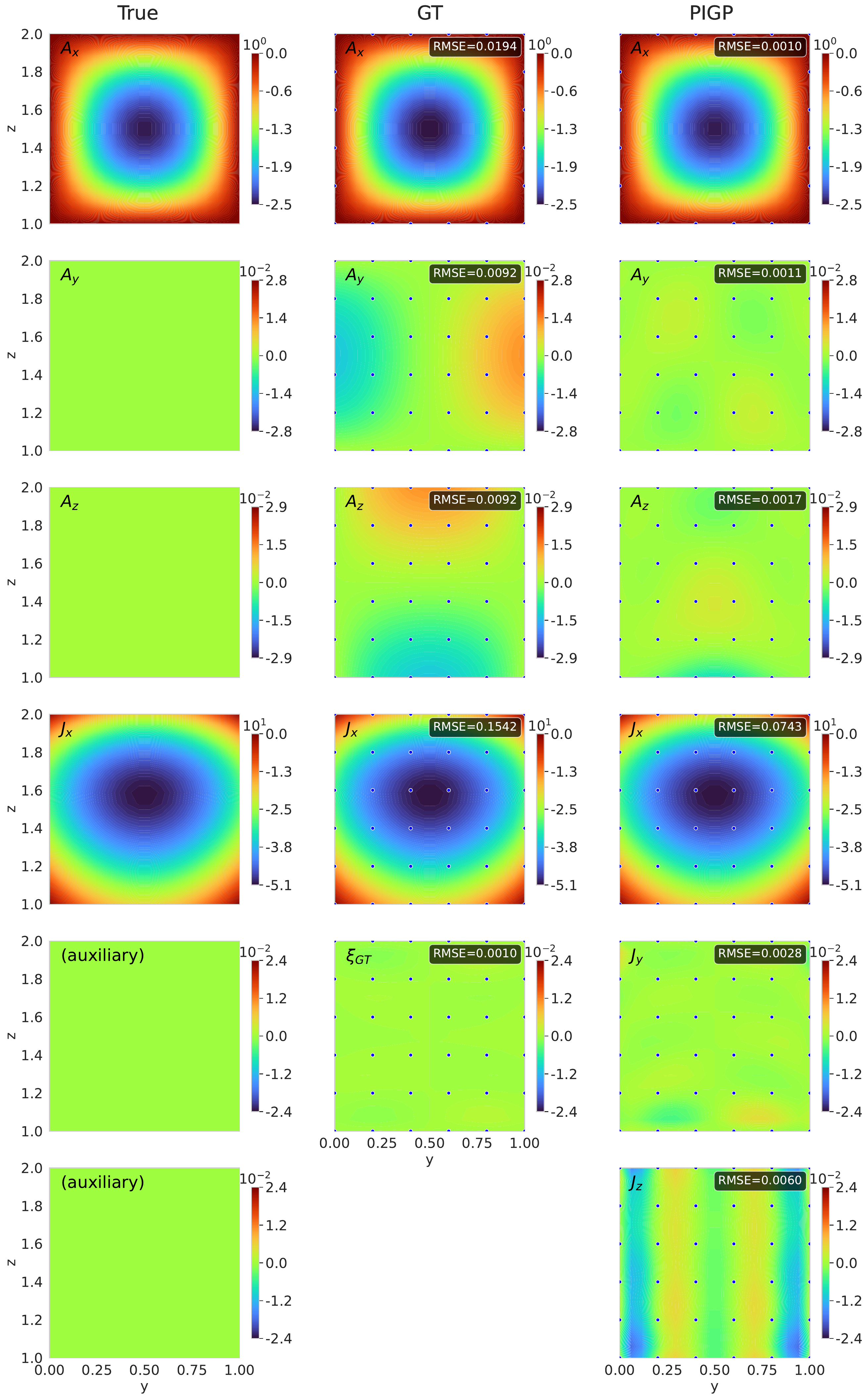}
\caption{Visualization of the conditioned tasks in the forward problem of Experiment 3. The left column shows two dimensional slices though the $y-z$ plane at $x = 0.6$ of the analytic ground truth, the middle column shows the ghost tasking method, and the right column shows PIGP. The rows depict the (x,y,z) components of the vector potential $\vec{A}$, the $x$ component of the current density $J_x$, and the ghost task $\xi_{GT}$ for GT, and $J_y$, $J_z$ respectively for PIGP. Training data is marked with blue points and the RMSE over the full volume is written in the corner of each task.}
\label{fig:Ex3_forward_auxiliary}
\end{figure}
\begin{figure}[h!]
\centering
  \includegraphics[width=\textwidth]{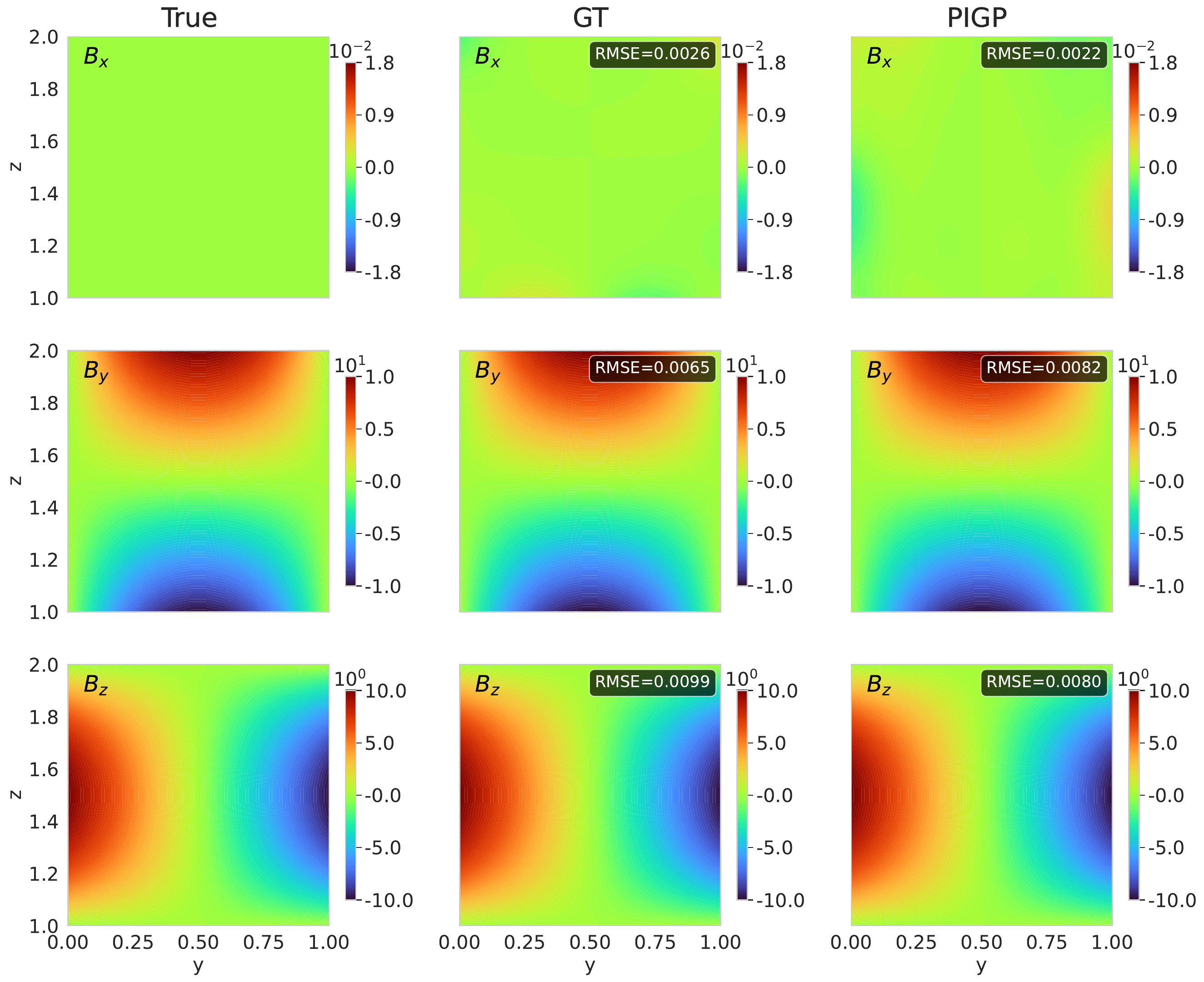}
\caption{Visualization of the solution of the forward problem of Experiment 3 for the magnetic flux density components. The left column shows two dimensional slices though the $y-z$ plane at $x = 0.6$ of the analytic ground truth, the middle column shows the ghost tasking method, and the right column shows PIGP. The rows depict the (x,y,z) components of the inferred quantity of interest, the magnetic flux density $\vec{B}$. No training data on $\vec{B}$ is used. The RMSE over the full volume is written in the corner of each task.}
\label{fig:Ex3_forward_B}
\end{figure}

As in the previous experiments we see good performances across both methods with RMSEs between 0.002 and 0.01 in the inferred magnetic flux density components. 
In the inverse problem setting shown in Fig. \ref{fig:Ex3_inverse} however we observe a significantly better performance of GT than PIGP, which can be understood when looking at the posterior distribution of $\nu_0$: Similar to experiment 1 (Fig. \ref{fig:Ex2_posteriors}), PIGP has again a rather biased posterior (shown in the lowest panel of Fig. \ref{fig:Ex3_posterior}). This can be attributed to the lack of informativeness of the data-fit term of the log marginal likelihood (Eq.~\ref{eq:mll}) shown in the middle panel of Fig. \ref{fig:Ex3_posterior}. The complexity penalty is similarly peaked for both methods, favouring small values of $\nu_0$. GT's data-fit term is able to compensate this bias by peaking sharply around the correct value $\nu_0\approx 1.$ On the contrary, PIGPs data-fit term is flatter, i.e. less informative. The reason for this may lie in the column rank and therefore the coupling strength of the parametrization matrices, discussed in more detail in chapter \ref{sec:discussion}. 

\begin{figure}[h!]
\centering
  \includegraphics[width=\textwidth]{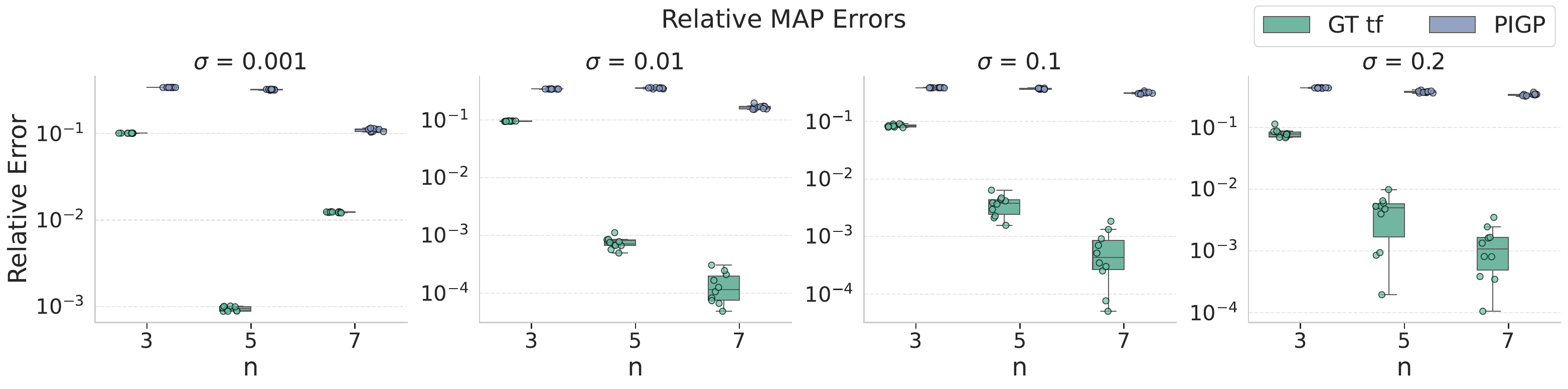}
\caption{Relative error of the MAP estimate of the parameter $\nu_0$ as performance metric of the inverse problem of experiment 3, analogously to Fig. \ref{fig:Ex2_inverse} and \ref{fig:Ex1_inverse}. We see a clear advantage of the GT method (green) as compared to PIGP (blue) in all experiments.}
\label{fig:Ex3_inverse}
\end{figure}

\begin{figure}[h]
\centering
  \includegraphics[width=0.9\textwidth]{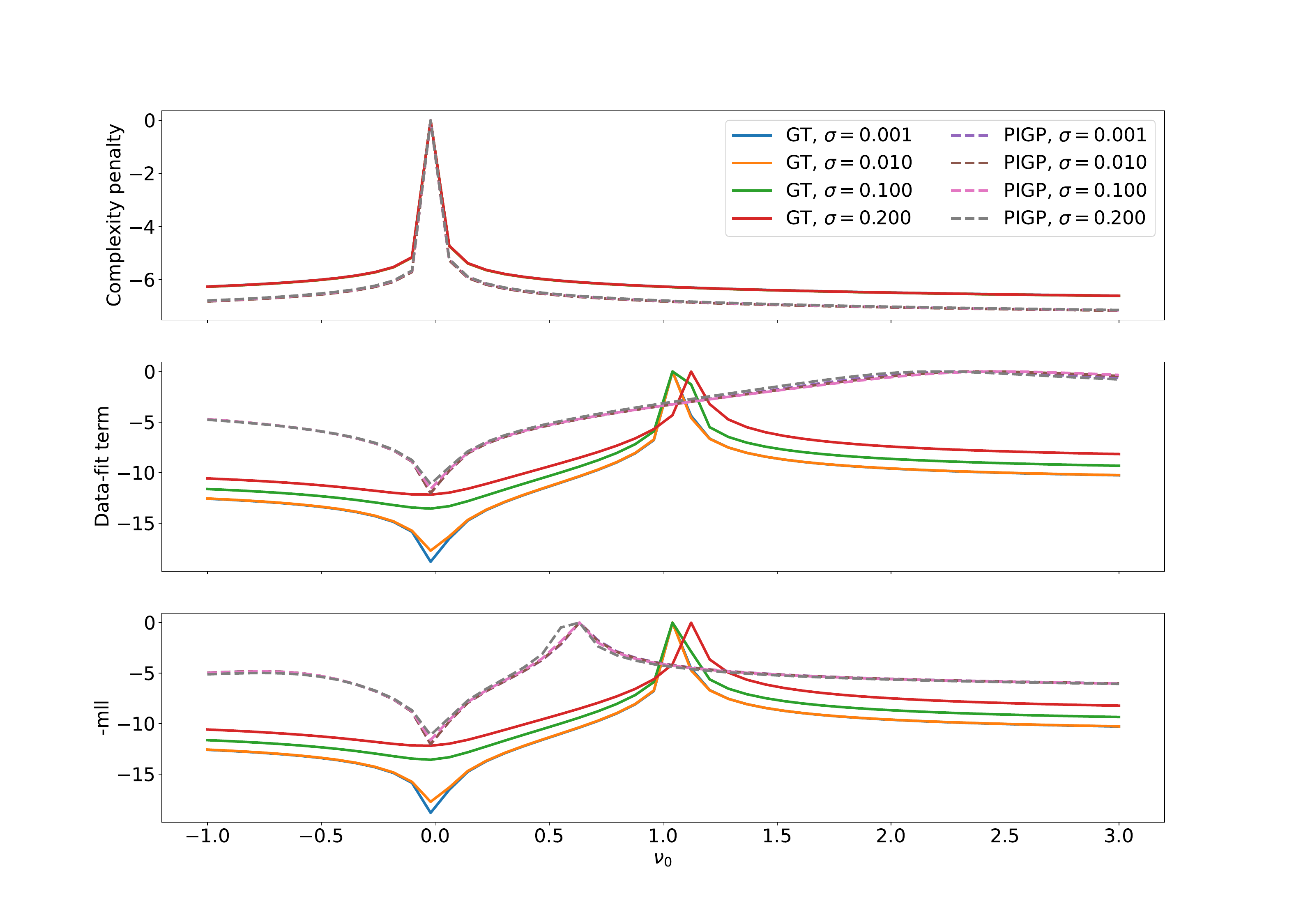}
\caption{Re-scaled components $p$  $-\log(\max(p)-p+1)$ of the log marginal likelihood (Eq.~\ref{eq:mll}) for GT (marked with x) and PIGP (marked with circles) as a function of $\nu_0$, conditioned on the respective MAP estimates of the calibration parameters $A$ and $l$ for the $n = 3$ training data set and all values $\sigma$ evaluated on 50 points $\nu_0 \in [-1, 3]$. The first row shows the complexity term, the second row shows the data-fit term, and the last row shows the total log marginal likelihood. Curves are not normalized due to re-scaling; the relevant information is conveyed by the position and width of the peaks.}
\label{fig:Ex3_posterior}
\end{figure}

\FloatBarrier
\section{Discussion and Qualitative Comparison}
\label{sec:discussion}
\subsection{Discussion of experimental results}
\label{subsec:discussionresults}
A central structural difference between ghost tasking and PIGP lies in the number of latent functions required to parametrize the solution space, and this difference is a direct consequence of the task-counting results established in chapter~\ref{sec:theorem}. Recall that the PIGP construction of Raissi et al. \cite{raissi_machine_2017} always requires $q$ additional inhomogeneity tasks, and the resulting minimal parametrization matrix $Q^{\phi}_{PIGP}$ correspondingly has $q$ columns, i.e., $q$ latent functions. Ghost tasking, by contrast, requires at most $m$ ($m \in \{1,2\}$) tasks additional to the $q\leq p\leq 2q-m$ tasks from the system matrix $R$, so that the number of latent functions $n = p+m-q$ ranges between $m$ and $q$, with the upper bound recovering PIGP exactly (chapter~\ref{subsec:ghostprops}). Fewer latent functions mean that a fixed amount of training data constrains each latent function more strongly, resulting in a more sharply peaked, informative data-fit term.

We observe this effect across our experiments. In Experiment 3, the most extreme case, the parametrization matrices $Q^{\nu_0}_{GT_{+B}}$ and $Q^{\nu_0}_{PIGP}$ (Eq.~\eqref{eq:Ex3_B_GT} and \eqref{eq:Ex3_B_PIGP}) illustrate this directly. Since the first three tasks, corresponding to $\vec{A}$, are not observed in the inverse problem, $Q^{\nu_0}_{GT_{+B}}$ effectively has column rank $n=1$: a single latent function is sufficient to fit all training data across five tasks. This constrains the latent function strongly, yielding a data-fit term sharply peaked around the true value of $\nu_0$ (Fig. \ref{fig:Ex3_posterior}). In contrast, $Q^{\nu_0}_{PIGP}$ has column rank 3, so three latent functions must be fit from training data across six tasks. Despite the larger total amount of training data, each latent function is effectively less constrained, allowing misfits in $\nu_0$ to be compensated for and producing a flatter, less informative data-fit term that fails to fully counteract the complexity penalty.

The same effect, though weaker, appears in Experiments 1 and 2: in Experiment 1, $Q^\alpha_{GT}$ (Eq.~\eqref{eq:Ex2_GT}) has column rank 2, while $Q^\alpha_{PIGP}$ (Eq.~\eqref{eq:Ex2_PIGP}) has column rank 3, and the broader posteriors of PIGP in Fig. \ref{fig:Ex2_posteriors} reflect this effect.  In Experiment 2, $Q^{(\ell,g)}_{GT}$ (Eq.~\eqref{eq:Ex1_GT2}) has column rank 1, while $Q^{(\ell,g)}_{PIGP}$ (Eq.~\eqref{eq:Ex1_PIGP}) has column rank 2.

\subsection{Practical considerations and limitations}
\label{subsec:limitations}
Beyond the structural aspects discussed above, some practical considerations arise when applying ghost tasking.

First, finding a suitable generator $\Lambda$ is algorithmic for holonomic systems (meaning the dimension of the system equals the number of variables), where it can be computed directly using \texttt{Macaulay2}~\cite{macaulay2} (\ref{ap:macaulay}). More generally, Stafford's two-generator theorem (Th. \ref{th:staffordserre}) admits a constructive proof \cite{leykin2004, hillebrand, quadrat2006} and while they are in principle implemented in \texttt{OreModules}'s \texttt{Stafford} package and \texttt{Macaulay2}'s \texttt{Dmodules} package, to our knowledge, they can currently not be applied to extension modules in those algebra programs. In practice, we therefore rely on a heuristic ansatz for $\Lambda(\vv x)$, that we verify computationally (see chapter~\ref{subsec:ghostprops} and \ref{ap:macaulay}).

Second, ghost tasking has the advantage that the ghost tasks are known to be exactly zero, so collocation points can be added freely without requiring measurements or real data. However, adding collocation points trades off accuracy against computation time, which scales cubically with the total amount of training data. For oscillatory systems, we observed that additional collocation points yield substantial improvements in accuracy (Fig. \ref{fig:Ex2_inverse}); for the remaining experiments, the improvement is comparatively modest (Tab. \ref{tab:Ex1} and \ref{tab:Ex3} in \ref{ap:ablation}).

Third, as with any parametrization method applied to systems with non-constant coefficients, the resulting parametrized kernels can become numerically ill-conditioned, which may affect stability in sharply peaked likelihood regimes \cite{mypsdpaper}. We refer to the extensive literature on stabilization and regularization techniques, e.g.\ \cite{burger}, for approaches to mitigating this issue.

Lastly, a practical trade-off concerns the choice between the projective and the torsion-free variant of ghost tasking introduced in chapter~\ref{subsec:ghostprops}. Although the two variants are largely comparable in predictive performance, requiring projectivity incurs greater computational overhead both in the computer algebra required to determine $\Lambda(\vv x)$ and the parametrization matrix, and in evaluation time, owing to the heuristically more complex terms arising in the projective parametrization. We therefore recommend using the weaker, torsion-free variant in practice. This choice carries the theoretical caveat that parametrizability of the full solution space is no longer guaranteed, but at least in our experiments, we did not encounter a case in which this limitation was practically relevant.

\subsection{Qualitative Method Comparison}
To contextualize ghost tasking within existing hybrid GP methods, we need to compare the methods with respect to several metrics: (i) reliance on analytical prior knowledge versus algorithmic construction, (ii) structural restrictions on the admissible systems, notably parametrizability and classes of coefficients, and (iii) the trade-off between model complexity and performance, particularly in inverse problems.

Mercer-type kernels, such as those discussed in  \cite{albert_gaussian_2019, dong_kriging, dong_diss, boogaart}, can achieve strong performance both in inverse and in forward problems, when applicable. However, they rely on prior analytical knowledge of the system in that the kernel is constructed from the form of the fundamental solutions. This is also the case for latent-force models \cite{alvarez_linear_2013}, which use the system's Green's function. These approaches are therefore non-algorithmic and tailored to specific problem classes where advanced prior knowledge beyond the general form of the equation is available or obtainable. While highly effective in settings where such knowledge is available, their applicability is inherently limited by the availability of such fundamental solutions; moreover, latent force models, as one of the earliest approaches, have since been outperformed by more recent methods such as LODE-GPs ("Linear Ordinary Differential Equations") \cite{besginow_constraining_2022} that construct kernels based on the Smith normal form of the system matrix and are applicable to ordinary systems with constant coefficients. 

Parametrization-based methods such as PIGPs \cite{raissi_machine_2017} and the method of \cite{lange-hegermann_algorithmic_2018}, as well as our ghost tasking, enforce constraints through operator structure on multiple correlated tasks (outputs, and inhomogeneities or ghost tasks) and find solutions through subsequent pointwise conditioning, rather than requiring known functions. Their key advantage is that they are algorithmic, i.e., applicable without requiring analytical knowledge of the solution.
A central trade-off in these methods lies in the number of auxiliary tasks and with it latent functions as discussed previously. Increasing the number of tasks improves generality by introducing more latent functions, but weakens the coupling between the tasks due to the larger latent function space. This can degrade performance in inverse problems (as seen for example in \cite{mymaxent}), and requires more training points to constrain the additional tasks and latent functions. 
Conversely, reducing the number of tasks and latent functions strengthens coupling and improves data efficiency, but risks restricting admissible systems and applicability of the method to parametrizable systems.
Ghost tasking can be understood as the sweet spot in this trade-off: it preserves strong coupling while relaxing parametrizability constraints for general systems. 

 Ghost tasking is theoretically applicable to a broad class of systems, including those with meromorphic coefficients. In practice, however, this generality is, as in \cite{lange-hegermann_algorithmic_2018}, mediated by the need to compute parametrizations via computer algebra programs. As a consequence, it inherits the limitations of currently available computer algebra implementations, which are largely restricted to polynomial or rational coefficients (e.g., via Weyl algebras). Other approaches that seek to circumvent or eliminate the parametrizability constraints are "linear ordinary differential equation Gaussian processes" LODE-GPs \cite{besginow_constraining_2022} for ODEs and "Ehrenpreis-Palamodov Gaussian Processes" EPGPs \cite{harkonen_gaussian_2023} for PDEs, but they are only applicable to systems with constant coefficients. 

In certain cases, the limitation to rational coefficients due to the algorithms available in computer algebra programs that we use in ghost tasking can be mitigated in practice in so-called commuting systems, where coefficient functions commute with differential operators. This allows the use of simpler algorithms over commutative polynomial rings as used in \cite{lange-hegermann_algorithmic_2018}, yielding tractable parametrizations when such assumptions are justified. However, this approach introduces implicit structural constraints through the need to verify $R^{\phi}(\vv x)Q^{\phi}(\vv x) \overset{!}{=} 0$ considering the specific coefficient functions explicitly and is therefore only applicable in restricted settings (see \ref{app:commutative} for an example).

In future work it may be interesting to investigate parametrizations based on the Jacobson Normal Form (JNF) \cite{jacobson}, analogous to the Smith Normal Form employed in \cite{besginow_constraining_2022}. While JNF applies only to ODEs, it may be able to work with general coefficient classes and reduce the latent function space of ODE systems even further. Another promising direction is the integration of ghost tasking with approaches that incorporate boundary conditions directly into the kernel. Since we have shown that ghost tasking shares the structure of the parametrized GPs of \cite{lange-hegermann_algorithmic_2018} and \cite{raissi_machine_2017}, boundary-condition constructions that have been applied to either of these methods, such as \cite{pmlr-v130-lange-hegermann21a} or \cite{gulian2022gaussian}, could likely be transferred with only minor modifications. More broadly, extending parametrization-based methods to the wider class of weak formulations of differential operators remains an important direction for future research.
\section{Conclusion}
We have introduced ghost tasking as a method to extend parametrizability to physical systems that on their own are not parametrizable, while retaining a minimal latent function space. We do so by adding  one, but at most two, additional ("ghost") tasks. This results in a minimal latent function space that is able to fit all solutions of the system. We show that the physics-informed Gaussian Process (PIGP) method \cite{raissi_machine_2017} can itself be understood as a form of parametrization, alas one that introduces an often unnecessarily large number of auxiliary tasks and latent functions, rather than the minimal number for both that ghost tasking identifies. As such, ghost tasking can be understood as the optimally data-efficient parametrization on the spectrum between PIGP as in \cite{raissi_machine_2017} and the null space-based parametrization as in \cite{lange-hegermann_algorithmic_2018}.
We showed that ghost tasking achieves performance comparable to PIGP on forward problems while requiring fewer training data and outperforms PIGP when using the same amount of training data. On inverse problems, ghost tasking frequently attains several orders of magnitude higher accuracy, particularly in the low-data regime.
Our results demonstrate that a substantially broader class of differential equation systems can be encoded more efficiently than previously possible as priors in GPs, requiring fewer tasks and latent functions and less training data without sacrificing predictive performance.
As advances in constructive algebraic analysis and automated generator construction continue, we expect the range of systems that can be encoded optimally by ghost tasking to grow further, making the framework increasingly practical for more complex physical models.

\section{Acknowledgements}
J.M. would like to thank A. Quadrat for helpful discussions and A. Leykin for providing the foundation of the \texttt{Macaulay2} code snippets. J.M. and C.A. were financially supported by the Austrian Research Promotion Agency (FFG) project VENTUS within the AI for Green funding program, Grant no. 910263. S.R. was financially supported by the Austrian Science Fund (FWF), Grant DOI: 10.55776/J4774. This work has been carried out within the framework of the EUROfusion Consortium, funded by the European Union via the Euratom Research and Training Programme (Grant Agreement No 101052200 — EUROfusion). Views and opinions expressed are however those of the author(s) only and do not necessarily reflect those of the European Union or the European Commission. Neither the European Union nor the European Commission can be held responsible for them. 

\section{Declaration of generative AI and AI-assisted technologies in the manuscript preparation process}
During the preparation of this work the authors used Claude (Anthropic) and GPT (OpenAI) in order to help with code development, brainstorming, and text polishing. After using these tools, the authors reviewed and edited the content as needed and take full responsibility for the content of the published article.

  \bibliographystyle{elsarticle-num} 
  \bibliography{literature}

@article{serre,
  title={Serre’s reduction of linear partial differential systems with holonomic adjoints},
  author={Cluzeau, Thomas and Quadrat, Alban},
  journal={Journal of symbolic computation},
  volume={47},
  number={10},
  pages={1192--1213},
  year={2012},
  publisher={Elsevier}
}

@article{ogofserre,
  title={Serre’s reduction of linear functional systems},
  author={Boudellioua, Mohamed Salah and Quadrat, Alban},
  journal={Mathematics in Computer Science},
  volume={4},
  number={2},
  pages={289--312},
  year={2010},
  publisher={Springer}
}

@article{amc1311,
  title={A functorial approach to the behaviour of multidimensional control systems},
  author={Pommaret, Jean-Fran{\c{c}}ois and Quadrat, Alban},
  journal={International Journal of Applied Mathematics and Computer Science},
  volume={13},
  number={1},
  pages={7--13},
  year={2003},
  publisher={Uniwersytet Zielonog{\'o}rski. Oficyna Wydawnicza}
}

@article{stafford,
  title={A constructive study of the module structure of rings of partial differential operators},
  author={Quadrat, Alban and Robertz, Daniel},
  journal={Acta Applicandae Mathematicae},
  volume={133},
  number={1},
  pages={187--234},
  year={2014},
  publisher={Springer}
}

@book{rasmussen_gaussian_2006,
	title = {Gaussian {Processes} for {Machine} {Learning} Vol. 2},
	url = {https://mitpress.mit.edu/9780262182539/gaussian-processes-for-machine-learning/},
	language = {en-US},
	urldate = {2024-11-26},
	publisher = {MIT press Cambridge, MA},
	author = {Rasmussen},
	year = {2006},
}

@inproceedings{lange-hegermann_algorithmic_2018,
	title = {Algorithmic {Linearly} {Constrained} {Gaussian} {Processes}},
	volume = {31},
	url = {https://proceedings.neurips.cc/paper/2018/hash/68b1fbe7f16e4ae3024973f12f3cb313-Abstract.html},
	urldate = {2024-12-03},
	booktitle = {Advances in {Neural} {Information} {Processing} {Systems}},
	publisher = {Curran Associates, Inc.},
	author = {Lange-Hegermann, Markus},
	year = {2018},
}

@article{albert_gaussian_2019,
	title = {Gaussian {Processes} for {Data} {Fulfilling} {Linear} {Differential} {Equations}},
	volume = {33},
	copyright = {http://creativecommons.org/licenses/by/3.0/},
	issn = {2504-3900},
	url = {https://www.mdpi.com/2504-3900/33/1/5},
	doi = {10.3390/proceedings2019033005},
	language = {en},
	number = {1},
	urldate = {2024-11-28},
	journal = {Proceedings},
	author = {Albert, Christopher G.},
	year = {2019},
	note = {Number: 1
Publisher: Multidisciplinary Digital Publishing Institute},
	pages = {5},
}

@inproceedings{dong_kriging,
	address = {Dordrecht},
	title = {Kriging {Variables} that {Satisfy} the {Partial} {Differential} {Equation} $\Delta${Z} = {Y}},
	isbn = {978-94-015-6844-9},
	doi = {10.1007/978-94-015-6844-9_17},
	
	language = {en},
	booktitle = {Geostatistics},
	publisher = {Springer Netherlands},
	author = {Dong, Anne},
	editor = {Armstrong, Margaret},
	year = {1989},
	pages = {237--248},
}

@article{alvarez_linear_2013,
	title = {Linear {Latent} {Force} {Models} {Using} {Gaussian} {Processes}},
	volume = {35},
	issn = {1939-3539},
	url = {https://ieeexplore.ieee.org/abstract/document/6514873},
	doi = {10.1109/TPAMI.2013.86},
	number = {11},
	urldate = {2025-07-01},
	journal = {IEEE Transactions on Pattern Analysis and Machine Intelligence},
	author = {Álvarez, Mauricio A. and Luengo, David and Lawrence, Neil D.},
	month = nov,
	year = {2013},
	pages = {2693--2705},
}

@article{raissi_machine_2017,
	title = {Machine learning of linear differential equations using {Gaussian} processes},
	volume = {348},
	issn = {0021-9991},
	url = {https://www.sciencedirect.com/science/article/pii/S0021999117305582},
	doi = {10.1016/j.jcp.2017.07.050},
	urldate = {2025-07-01},
	journal = {Journal of Computational Physics},
	author = {Raissi, Maziar and Perdikaris, Paris and Karniadakis, George Em},
	month = nov,
	year = {2017},
	pages = {683--693},
}

@article{besginow_constraining_2022,
	title = {Constraining {Gaussian} processes to systems of linear ordinary differential equations},
	volume = {35},
	url = {https://proceedings.neurips.cc/paper_files/paper/2022/hash/bcef27c5825d1ed8757290f237b2d851-Abstract-Conference.html},
	urldate = {2025-01-21},
	journal = {Advances in Neural Information Processing Systems},
	author = {Besginow, Andreas and Lange-Hegermann, Markus},
	year = {2022},
	pages = {29386--29399},
}

@inproceedings{harkonen_gaussian_2023,
	title = {Gaussian process priors for systems of linear partial differential equations with constant coefficients},
	url = {https://proceedings.mlr.press/v202/harkonen23a.html},
	urldate = {2025-01-21},
	booktitle = {International conference on machine learning},
	publisher = {PMLR},
	author = {Harkonen, Marc and Lange-Hegermann, Markus and Raita, Bogdan},
	year = {2023},
	pages = {12587--12615},
}

@article{oberst,
  title={Multidimensional constant linear systems},
  author={Oberst, Ulrich},
  journal={Acta Applicandae Mathematica},
  volume={20},
  number={1},
  pages={1--175},
  year={1990},
  publisher={Springer}
}

@article{hillebrand,
  title={Towards an effective version of a theorem of Stafford},
  author={Hillebrand, Andre and Schmale, Wiland},
  journal={Journal of Symbolic Computation},
  volume={32},
  number={6},
  pages={699--716},
  year={2001},
  publisher={Elsevier}
}

@article{leykin2004,
  title={Algorithmic proofs of two theorems of Stafford},
  author={Leykin, Anton},
  journal={Journal of Symbolic Computation},
  volume={38},
  number={6},
  pages={1535--1550},
  year={2004},
  publisher={Elsevier}
}

@inproceedings{quadrat2006,
  title={Constructive computation of flat outputs of a class of multidimensional linear systems with variable coefficients},
  author={Quadrat, Alban and Robertz, Daniel},
  booktitle={Proceedings of MTNS},
  pages={24--28},
  year={2006}
}

@article{chyzak,
  title={Effective algorithms for parametrizing linear control systems over Ore algebras},
  author={Chyzak, Fr{\'e}d{\'e}ric and Quadrat, Alban and Robertz, Daniel},
  journal={Applicable Algebra in Engineering, Communication and Computing},
  volume={16},
  number={5},
  pages={319--376},
  year={2005},
  publisher={Springer}
}

@article{staffordog,
  author = {Stafford, J. T.},
  title = {Module structure of {W}eyl algebras},
  journal = {Journal of the London Mathematical Society. Second Series},
  volume = {18},
  year = {1978},
  number = {3},
  pages = {429--442},
  doi = {10.1112/jlms/s2-18.3.429}
}

@book{rotmancours,
  title={An introduction to homological algebra},
  author={Rotman, Joseph J},
  number={2},
  year={2009},
  publisher={Springer}
}

@Misc{macaulay2,
 title = {Macaulay2, a software system for research in algebraic geometry},
          author = {Grayson, Daniel R. and Stillman, Michael E.},
year = {},
         
          howpublished = {Available at \url{http://www2.macaulay2.com}}
        }

@inproceedings{gpytorch1,
  title={GPyTorch: Blackbox Matrix-Matrix Gaussian Process Inference with GPU Acceleration},
  author={Gardner, Jacob R and Pleiss, Geoff and Bindel, David and Weinberger, Kilian Q and Wilson, Andrew Gordon},
  booktitle={Advances in Neural Information Processing Systems},
  year={2018}
}

@article{noether,
author = {Noether, E.},
journal = {Mathematische Annalen},
pages = {26-61},
title = {Abstrakter Aufbau der Idealtheorie in algebraischen Zahl- und Funktionenkörpern},
url = {http://eudml.org/doc/159155},
volume = {96},
year = {1927},
}

@article{gradefiltration,
  title={Grade filtration of linear functional systems},
  author={Quadrat, Alban},
  journal={Acta Applicandae Mathematicae},
  volume={127},
  number={1},
  pages={27--86},
  year={2013},
  publisher={Springer}
}

@misc{gpytorch2,
	title = {{GPyTorch}: {Blackbox} {Matrix}-{Matrix} {Gaussian} {Process} {Inference} with {GPU} {Acceleration}},
	shorttitle = {{GPyTorch}},
	url = {http://arxiv.org/abs/1809.11165},
	doi = {10.48550/arXiv.1809.11165},
	urldate = {2025-07-01},
	publisher = {arXiv},
	author = {Gardner, Jacob R. and Pleiss, Geoff and Bindel, David and Weinberger, Kilian Q. and Wilson, Andrew Gordon},
	month = jun,
	year = {2021},
	note = {arXiv:1809.11165 [cs]},
}

@incollection{oremodules,
  title={OreModules: A symbolic package for the study of multidimensional linear systems},
  author={Chyzak, Fr{\'e}d{\'e}ric and Quadrat, Alban and Robertz, Daniel},
  booktitle={Applications of time delay systems},
  pages={233--264},
  year={2007},
  publisher={Springer}
}

@article{narcowich,
  title={Generalized Hermite interpolation via matrix-valued conditionally positive definite functions},
  author={Narcowich, Francis J and Ward, Joseph D},
  journal={Mathematics of Computation},
  volume={63},
  number={208},
  pages={661--688},
  year={1994},
  publisher={Washington, DC: National Academy of Sciences-National Research Council,[1960?-}
}

@INPROCEEDINGS{wahlstrom0,
  author={Wahlström, Niklas and Kok, Manon and Schön, Thomas B. and Gustafsson, Fredrik},
  booktitle={2013 IEEE International Conference on Acoustics, Speech and Signal Processing}, 
  title={Modeling magnetic fields using Gaussian processes}, 
  year={2013},
  volume={},
  number={},
  pages={3522-3526},
  doi={10.1109/ICASSP.2013.6638313}}

@inproceedings{mymaxent,
  author    = {Moser, Johanna and Albert, Christopher and Ranftl, Sascha},
  title     = {Parameter Learning With Physics-Consistent Gaussian Processes},
  booktitle = {44th International Workshop on Bayesian Inference and
               Maximum Entropy Methods in Science and Engineering},
  year      = {2026},
  month     = may,
  address   = {Auckland, New Zealand},
  doi       = {10.5281/zenodo.20392038},
  url       = {https://doi.org/10.5281/zenodo.20392038},
}

@misc{jax,
  author = {James Bradbury and Roy Frostig and Peter Hawkins and Matthew James Johnson and Chris Leary and Dougal Maclaurin and George Necula and Adam Paszke and Jake Vander{P}las and Skye Wanderman-{M}ilne and Qiao Zhang},
  title = {{JAX}: composable transformations of {P}ython+{N}um{P}y programs},
  howpublished = {\url{http://github.com/jax-ml/jax}},
  note = {Version 0.3.13},
  year = {2018},
}

@article{chris_und_patrick,
  title={2D Fourier finite element formulation for magnetostatics in curvilinear coordinates with a symmetry direction},
  author={Albert, Christopher G and Lainer, Patrick and Biro, Oszkar},
  journal={Computer Physics Communications},
  volume={277},
  pages={108401},
  year={2022},
  publisher={Elsevier}
}

@article{cours,
  title={An introduction to constructive algebraic analysis and its applications},
  author={Quadrat, Alban},
  journal={Les cours du CIRM},
  volume={1},
  number={2},
  pages={281--471},
  year={2010}
}

@misc{mypsdpaper,
      title={Resolving positive semi-definiteness in physics-informed kernels for scientific machine learning}, 
      author={J. Moser and C. Albert and S. Ranftl},
      year={2026},
      eprint={2608.20985},
      archivePrefix={arXiv},
      primaryClass={physics.comp-ph},
      url={https://arxiv.org/abs/2608.20985}, 
}

@article{fliess_controllability,
title = {Some basic structural properties of generalized linear systems},
journal = {Systems \& Control Letters},
volume = {15},
number = {5},
pages = {391-396},
year = {1990},
issn = {0167-6911},
doi = {https://doi.org/10.1016/0167-6911(90)90062-Y},
url = {https://www.sciencedirect.com/science/article/pii/016769119090062Y},
author = {Michel Fliess}
}

@book{zerz,
  title={Topics in multidimensional linear systems theory},
  author={Zerz, Eva},
  year={2000},
  publisher={Springer}
}

@article{pommaret1999algebraic,
  title={Algebraic analysis of linear multidimensional control systems},
  author={Pommaret, JF and Quadrat, A},
  journal={IMA Journal of Mathematical control and Information},
  volume={16},
  number={3},
  pages={275--297},
  year={1999},
  publisher={Oxford University Press}
}

@article{pinn,
  author       = {Raissi, Maziar and Perdikaris, Paris and Karniadakis, George Em},
  title        = {Physics-informed neural networks: A deep learning framework for solving forward and inverse problems involving nonlinear partial differential equations},
  doi          = {10.1016/j.jcp.2018.10.045},
  url          = {https://www.osti.gov/biblio/1595805},
  journal      = {Journal of Computational Physics},
  issn         = {ISSN 0021-9991},
  number       = {C},
  volume       = {378},
  place        = {United States},
  publisher    = {Elsevier},
  year         = {2018},
  month        = {11}}

@ARTICLE{willems_controllability,
  author={Willems, J. and Mitter, S.},
  journal={IEEE Transactions on Automatic Control}, 
  title={Controllability, observability, pole allocation, and state reconstruction}, 
  year={1971},
  volume={16},
  number={6},
  pages={582-595},
  doi={10.1109/TAC.1971.1099819}}

@article{burger,
  title={Modern regularization methods for inverse problems},
  author={Benning, Martin and Burger, Martin},
  journal={Acta numerica},
  volume={27},
  pages={1--111},
  year={2018},
  publisher={Cambridge University Press}
}

@book{jacobson,
  title={Basic algebra I},
  author={Jacobson, Nathan},
  year={2012},
  publisher={Courier Corporation}
}

@article{raissi_nonlinear,
author = {Raissi, Maziar and Perdikaris, Paris and Karniadakis, George Em},
title = {Numerical Gaussian Processes for Time-Dependent and Nonlinear Partial Differential Equations},
journal = {SIAM Journal on Scientific Computing},
volume = {40},
number = {1},
pages = {A172-A198},
year = {2018},
doi = {10.1137/17M1120762},
URL = {     
        https://doi.org/10.1137/17M1120762
},
eprint = { 
        https://doi.org/10.1137/17M1120762
},
}

@article{chen_nonlinear,
title = {Solving and learning nonlinear PDEs with Gaussian processes},
journal = {Journal of Computational Physics},
volume = {447},
pages = {110668},
year = {2021},
issn = {0021-9991},
doi = {https://doi.org/10.1016/j.jcp.2021.110668},
url = {https://www.sciencedirect.com/science/article/pii/S0021999121005635},
author = {Yifan Chen and Bamdad Hosseini and Houman Owhadi and Andrew M. Stuart}
}

@article{moss2026conditioning,
  title={Conditioning Gaussian Processes on Almost Anything},
  author={Moss, Henry and Astfalck, Lachlan and Cowperthwaite, Thomas and Doumont, Colin and Willis, Sam and Hennig, Philipp and Nemeth, Christopher and Zammit-Mangion, Andrew},
  journal={arXiv preprint arXiv:2605.21041},
  year={2026}
}

@article{boogaart,
  title={Kriging for processes solving partial differential equations},
  author={van den Boogaart, Karl Gerald},
  journal={Proceedings of the IAMG2001, Cancun, Mexiko},
  pages={10--12},
  year={2001}
}

@phdthesis{dong_diss,
  title={Estimation g{\'e}ostatistique des ph{\'e}nom{\`e}nes r{\'e}gis par des {\'e}quations aux d{\'e}riv{\'e}es partielles},
  author={Dong, Anne},
  year={1990},
  school={ENSMP}
}

@book{wahlstrom2015,
  title={Modeling of magnetic fields and extended objects for localization applications},
  author={Wahlstr{\"o}m, Niklas},
  year={2015},
  publisher={Linkopings Universitet}
}

@article{jidling2017linearly,
  title={Linearly constrained Gaussian processes},
  author={Jidling, Carl and Wahlstr{\"o}m, Niklas and Wills, Adrian and Sch{\"o}n, Thomas B},
  journal={Advances in neural information processing systems},
  volume={30},
  year={2017}
}

@article{berlinghieri2023gaussian,
  title={Gaussian processes at the Helm (holtz): A more fluid model for ocean currents},
  author={Berlinghieri, Renato and Trippe, Brian L and Burt, David R and Giordano, Ryan and Srinivasan, Kaushik and {\"O}zg{\"o}kmen, Tamay and Xia, Junfei and Broderick, Tamara},
  journal={arXiv preprint arXiv:2302.10364},
  year={2023}
}

@article{ding2019bdrygp,
  title={BdryGP: a new Gaussian process model for incorporating boundary information},
  author={Ding, Liang and Mak, Simon and Wu, CF},
  journal={arXiv preprint arXiv:1908.08868},
  year={2019}
}

@article{tan2018gaussian,
  title={Gaussian process modeling with boundary information},
  author={Tan, Matthias Hwai Yong},
  journal={Statistica Sinica},
  pages={621--648},
  year={2018},
  publisher={JSTOR}
}

@InProceedings{pmlr-v130-lange-hegermann21a,
  title = 	 { Linearly Constrained Gaussian Processes with Boundary Conditions },
  author =       {Lange-Hegermann, Markus},
  booktitle = 	 {Proceedings of The 24th International Conference on Artificial Intelligence and Statistics},
  pages = 	 {1090--1098},
  year = 	 {2021},
  editor = 	 {Banerjee, Arindam and Fukumizu, Kenji},
  volume = 	 {130},
  series = 	 {Proceedings of Machine Learning Research},
  month = 	 {13--15 Apr},
  publisher =    {PMLR},
  url = 	 {https://proceedings.mlr.press/v130/lange-hegermann21a.html}
}

@InProceedings{pmlr-v89-solin19a,
  title = 	 {Know Your Boundaries: Constraining Gaussian Processes by Variational Harmonic Features},
  author =       {Solin, Arno and Kok, Manon},
  booktitle = 	 {Proceedings of the Twenty-Second International Conference on Artificial Intelligence and Statistics},
  pages = 	 {2193--2202},
  year = 	 {2019},
  editor = 	 {Chaudhuri, Kamalika and Sugiyama, Masashi},
  volume = 	 {89},
  series = 	 {Proceedings of Machine Learning Research},
  month = 	 {16--18 Apr},
  publisher =    {PMLR},
  url = 	 {https://proceedings.mlr.press/v89/solin19a.html}
}

@article{gulian2022gaussian,
  title={Gaussian process regression constrained by boundary value problems},
  author={Gulian, Mamikon and Frankel, Ari and Swiler, Laura},
  journal={Computer Methods in Applied Mechanics and Engineering},
  volume={388},
  pages={114117},
  year={2022},
  publisher={Elsevier}
}

@article{swiler2020survey,
  title={A survey of constrained Gaussian process regression: Approaches and implementation challenges},
  author={Swiler, Laura P and Gulian, Mamikon and Frankel, Ari L and Safta, Cosmin and Jakeman, John D},
  journal={Journal of Machine Learning for Modeling and Computing},
  volume={1},
  number={2},
  year={2020},
  publisher={Begel House Inc.}
}

@article{hamelijnck2024physics,
  title={Physics-informed variational state-space gaussian processes},
  author={Hamelijnck, Oliver and Solin, Arno and Damoulas, Theodoros},
  journal={Advances in Neural Information Processing Systems},
  volume={37},
  pages={98505--98536},
  year={2024}
}

@inproceedings{lean,
  title={The Lean 4 Theorem Prover and Programming Language},
  author={Leonardo Mendonça de Moura and Sebastian Ullrich},
  booktitle={CADE},
  year={2021},
  url={https://api.semanticscholar.org/CorpusID:235800962}
}

@misc{dmodules,
  author = {Schnell, Christian},
  title = {{$D$}-modules},
  year = {2019},
  note = {Lecture notes},
  url = {https://www.math.stonybrook.edu/~cschnell/pdf/notes/d-modules.pdf}
}

@misc{chris_stafford,
  author = {{Christopher Albert}},
  title = {{Stafford 3.8}},
  year = {2026},
  howpublished = {Palomar, PALOMAR-2026-09-05-000007 v2},
  url = {https://palomar-registry.org/entry?id=PALOMAR-2026-09-05-000007&version=2},
}
\appendix
\section{Computer algebra syntax}
\label{ap:macaulay}
Here we provide the \texttt{Macaulay2} and \texttt{OreModules} syntax used to construct a ghost column $\Lambda(\vv x)$ and to check the validity of a proposed one. For the full scripts, we refer to our github page. As before, we drop the index $\phi$ for physical parameters and the argument $(\vv x)$ for readability. Full tutorials and detailed discussions on \texttt{OreModules} can be found on their website\footnote{\url{https://who.rocq.inria.fr/Alban.Quadrat/OreModules/index.html} Accessed on 30.7.2026}.

\newenvironment{macaulay}
{\verbbox}
{\endverbbox\par\noindent\colorbox{Thistle}{\parbox{\textwidth}{\theverbbox}}\par}

\newenvironment{ore}
{\verbbox}
{\endverbbox\par\noindent\colorbox{Apricot}{\parbox{\textwidth}{\theverbbox}}\par}

\paragraph{Defining the algebra and the system}
\begin{macaulay}
--Macaulay
needsPackage "Dmodules"
D = frac(QQ[a])[t, x, dt, dx, WeylAlgebra => {x=>dx, t=> dt}]
R = matrix{{x*dx + dt, a*dx}, {0, x*dx + dt}} 
\end{macaulay}
\begin{ore}
--OreModules
with(OreModules):
with(Ore_algebra):
D := DefineOreAlgebra(diff = [dt, t], diff = [dx, x], 
                      polynom = {t, x}, comm = {a})
R := evalm([[dt + x*dx, a*dx], [0, dt + x*dx]])
\end{ore}
Here hides a common pitfall: a system may be parametrizable for generic values of commuting constants (such as $a$ here), but the parametrization fails for a special combination of them (such as in \cite{besginow_constraining_2022}). Both packages here certify only the generic case, so if the constants are known numerically beforehand, it is safer to substitute them before certifying, precisely because a special value can break a generically parametrizable system.

\paragraph{Finding $\Lambda$}
The columns of $\Lambda$ must be chosen so that their residues generate this module (cf.\ Theorem \ref{th:GT}). If $\mathrm{ext}^1_D(M,D)$ is \emph{holonomic}, it is automatically cyclic \cite{serre}, and \texttt{Macaulay2} can produce a cyclic generator algebraically:
\begin{macaulay}
ext1M = coker Dtransposition R;
isHolonomic ext1M --if true, the next line works
Lambda = Dtransposition (makeCyclic Dtransposition R).Generator
\end{macaulay}
\texttt{Dtransposition} implements the involution map needed to map from left to right modules and inversely, needed to use  \texttt{Macaulay2}'s native Gr\"obner machinery. 

For a non-holonomic $\mathrm{ext}^1_D(M,D)$ (as in experiment 1 and 3), we do not know of a constructive method that is implemented for extension modules. The fallback is to loop through a set of candidate operators and test each candidate with the projectivity or torsion-freeness check given below, stopping at the first success; this has worked reasonably well in practice. The script running the loop can be found on our github page. 

Once we have a candidate, we define $P = \begin{pmatrix}R & {-}\Lambda\end{pmatrix}.$
\paragraph{Checking torsion-freeness} $M$ is torsion-free iff $\mathrm{ext}^1_D(N,D) = 0$ ((1) of Lemma \ref{lem:exteigenschaften}). In \texttt{Macaulay2}, we explicitly calculate $\mathrm{ext}^1_D(N,D)$: 
\begin{macaulay}
strip = X -> map((ring X)^(numRows X), (ring X)^(numColumns X), entries X)
K1 := strip syz P;
Q1 := strip transpose Dtransposition P;
Q2 := strip transpose Dtransposition K1;
ext1N := prune (kernel Q2 / image Q1);
ext1N == 0 --if true, then M is torsion-free
\end{macaulay}

In \texttt{OreModules} the analogous call is
\begin{ore}
Exti(Involution(P, D), D, 1);
\end{ore}
if the first of its three returned matrices is the identity, $M$ is torsion-free. 

\paragraph{Checking projectivity} To test projectivity, it is actually easiest to test if $M$ is stably free, and thus by implication also projective. Lemma \ref{lem:exteigenschaften} tells us, that $M$ is stably free iff $\mathrm{ext}^1_D(M,D) = 0$. In \texttt{Macaulay2} we can test this by checking whether $D^q \subseteq PD^{p+m}$, i.e.\ whether $\mathrm{ext}^1_D(M,D) = D^q/PD^{p+m} = 0$
\begin{macaulay}
isSubset(image id_(target Dtransposition P), image Dtransposition P)
\end{macaulay}
Alternatively, we can test if $R$ has a right inverse $S$ with $PS = \mathrm{id}_q$ ((3) of Lemma \ref{lem:exteigenschaften}). 
\begin{macaulay}
W = id_(target Dtransposition P) // Dtransposition P 
W =!= null and (Dtransposition P)*W == id_(target Dtransposition P) 
S = Dtransposition W
\end{macaulay}
In \texttt{OreModules}, the analogous call is to calculate
\begin{ore}
RightInverse(P, D)
\end{ore}
Alternatively, we can check (2) of Lemma \ref{lem:exteigenschaften} by calculating the first $n$ extension modules $\mathrm{ext}^i_D(N,D)$ (with $n$ being the number of variables in the chosen Weyl algebra $A_n(\mathbb{R})$).
\begin{ore}
Exti(Involution(R, D), D, n);
\end{ore}
the first matrix returned by each of the $n$ calls is the presentation matrix $R_{\mathrm{ext}^i}$ of $\mathrm{ext}^i_D(M,D)$; if every first matrix is the identity, all $n$ extension modules vanish and the system is projective.

\paragraph{Minimal parametrization}
Once we have checked torsion-freeness or projectivity, we need to calculate the corresponding parametrization matrix $Q$. 
In \texttt{Macaulay2}, we can build a parametrization matrix by calculating the generators of the right null space of $P$
\begin{macaulay}
    Q = Dtransposition mingens image syz Dtransposition P 
    transpose Q*(transpose P) -- test: should be 0
\end{macaulay}
We note that \texttt{Macaulay2} has a reversed product convention, so to check $PQ = 0$, we actually need to check $Q^TP^T = 0$. 
Based on this parametrization, we can also check torsion-freeness analogous to what \cite{lange-hegermann_algorithmic_2018} has proposed for commutative algebras by calculating the left null space $P_{new}$ of $Q$ an checking if it presents the same module as $P$ does:
\begin{macaulay}
    Pnew = transpose mingens image syz transpose Q
\end{macaulay}
We note that the parametrization that we can calculate with \texttt{Macaulay2} does not necessarily have the minimal number of columns. 
In \texttt{OreModules}, we can find a minimal parametrization in \texttt{OreModules} by calling
\begin{ore}
MinimalParametrization(P, D)
\end{ore}

\section{Ablation studies}
\label{ap:ablation}
In ghost tasking, but also in PIGP for homogeneous equations, we need to condition on tasks that we know are exactly zero everywhere. We want to know if adding extra collocation points to condition on for the zero-valued tasks improve the results compared to the "standard" variant of having the same amount of points in each task. Additionally, since we know the collocation points to be exactly zero, we choose to constrain collocation points in exactly known tasks to a small noise nugget of $\sigma = 10^{-3}$. We refer to this method with extra, exactly known collocation points with a "+" (e.g. "GT+" and "PIGP+"). 
We show the median of the errors of the learned MAP estimates of the physical parameters over the 10 runs relative to their true values in the inverse settings of all experiments in Tab. \ref{tab:Ex1}-\ref{tab:Ex3} for the standard and the "+" variants. The variant that performed best across methods is marked with bold, and the best variant within a method is marked in italics. In the last rows, we show the overall performance calculated by the geometric mean of the table entries. 

In our first experiment (chapter \ref{sec:Ex2}), "GT+" shows strong improvements especially for the ratio of the parameters $g/\ell$, which is strongly related to the eigenfrequency of the system. As discussed, this is likely due to the original GT method absorbing physically relevant oscillations in the ghost task, which can be suppressed with "GT+". "PIGP+" cannot be defined, since per construction, we measure its inputs $f_i(t)$ alongside the other tasks and can therefore not provide arbitrarily many collocation points as in GT. 
This is an inherent advantage of GT, but for the sake of the ablation studies, we consider a new variant which we will call "known $f(t)$": We assume the inhomogeneity $f(t)$ is fully known and can therefore be conditioned on as many extra collocation points as the ghost task. In this case, we condition both $\xi_{GT}$ and $f(t)$ for GT, and $f_1(t)$, $f_2(t)$, and $f_3(t)$ for PIGP on 25 noiseless input points. In this limit, GT and PIGP perform similarly and we recover the multimodal posterior distribution of $\ell$ and $g$ also with PIGP (Fig. \ref{fig:Ex2_posteriors_knownu}. For the other two experiments, adding extra collocation points does not lead to similarly strong improvements. 

\begin{table}[]
\sisetup{mode=match, round-mode=figures, round-precision=2, scientific-notation=true, detect-weight=true, detect-shape=true, detect-family=true}
\caption{Ablation study experiment 1. Median absolute error of the physical parameters (each cell: $\frac{\Delta \ell}{\ell}$ / $\frac{\Delta g}{g}$ / $\frac{\Delta g/\ell}{g/\ell}$) relative to their true values $(\ell, g) = (2.5, 9.81)$ across 10 runs. We compare the variants "standard", where every task has the same amount of training data $n$, and "+", where the ghost task is constrained on $25$ points and the standard deviation of the likelihood in these tasks is fixed to be $\sigma = 0.001$. We compare two settings "measured $f(t)$", where $f(t)$ is provided with $n$ measurements, and "known $f(t)$", where $f(t)$ is provided with 25 noise-free input points for all settings. For PIGP, we cannot provide a "+" variant for "measured $f(t)$" per definition. Overall performance is calculated using the geometric mean of the above results and shown in the last row. We see similar performances between GT and PIGP, while "GT+" outperforms both drastically especially for $g/\ell$ in the measured case. For known $f(t)$, performances are similar.} 
\label{tab:Ex2}
\resizebox{\textwidth}{!}{%
\begin{tabular}{llccccc}
\toprule
$\frac{\Delta \ell}{\ell}/\frac{\Delta g}{g}/\frac{\Delta g/\ell}{g/\ell}$ & & \multicolumn{3}{c}{measured $f(t)$} & \multicolumn{2}{c}{known $f(t)$} \\
\cmidrule(lr){3-5} \cmidrule(lr){6-7}
 & & \multicolumn{2}{c}{\textbf{GT}} & PIGP & \textbf{GT} & PIGP \\
\cmidrule(lr){3-4} \cmidrule(lr){5-5} \cmidrule(lr){6-6} \cmidrule(lr){7-7}
n & $\sigma$ & standard & + & standard &  &   \\
\midrule
3 & 0.001 & \num{0.596268}/\num{0.231035}/\num{0.228804} & {\bfseries \num{0.0551756}}/{\bfseries \num{0.0551655}}/{\bfseries \num{4.49098e-05}} & \num{0.0663496}/\num{0.862723}/\num{0.852972} & {\bfseries \num{0.000308705}}/{\bfseries \num{0.000353493}}/\num{3.52798e-05} & \num{0.000328083}/\num{0.000372496}/{\bfseries \num{3.3207e-05}} \\
3 & 0.01 & \num{0.596267}/{\bfseries \num{0.231478}}/\num{0.228526} & {\itshape \num{0.263984}}/\num{0.263757}/{\bfseries \num{0.000765588}} & {\bfseries \num{0.0658151}}/\num{0.862947}/\num{0.853053} & {\bfseries \num{0.00561553}}/\num{0.00513256}/\num{0.000447494} & \num{0.00562678}/{\bfseries \num{0.00510016}}/{\bfseries \num{0.000404498}} \\
3 & 0.1 & \num{0.596229}/{\bfseries \num{0.231431}}/\num{0.228536} & {\itshape \num{0.57447}}/\num{0.575996}/{\bfseries \num{0.00219645}} & {\bfseries \num{0.0675862}}/\num{0.8603}/\num{0.851068} & \num{0.111937}/\num{0.102848}/\num{0.00696708} & {\bfseries \num{0.0964749}}/{\bfseries \num{0.0892782}}/{\bfseries \num{0.00434157}} \\
3 & 0.2 & \num{0.596577}/{\bfseries \num{0.236963}}/\num{0.225217} & {\itshape \num{0.594656}}/\num{0.596028}/{\bfseries \num{0.00783146}} & {\bfseries \num{0.0643401}}/\num{0.870774}/\num{0.860306} & \num{0.231128}/\num{0.21009}/\num{0.0183854} & {\bfseries \num{0.219964}}/{\bfseries \num{0.200549}}/{\bfseries \num{0.0153837}} \\
\addlinespace
5 & 0.001 & \num{0.708796}/\num{0.719209}/\num{0.0355823} & {\bfseries \num{0.055105}}/{\bfseries \num{0.0551411}}/{\bfseries \num{5.89572e-05}} & \num{0.488758}/\num{0.422138}/\num{0.130266} & \num{0.000514962}/{\bfseries \num{0.000500515}}/\num{5.70447e-05} & {\bfseries \num{0.000513011}}/\num{0.000501817}/{\bfseries \num{5.63325e-05}} \\
5 & 0.01 & \num{0.709609}/\num{0.719732}/\num{0.0376355} & {\bfseries \num{0.0186755}}/{\bfseries \num{0.0180059}}/{\bfseries \num{0.000630084}} & \num{0.489115}/\num{0.422694}/\num{0.130524} & {\bfseries \num{0.00190377}}/\num{0.00127412}/\num{0.000652329} & \num{0.00201696}/{\bfseries \num{0.00123779}}/{\bfseries \num{0.000652296}} \\
5 & 0.1 & \num{0.682977}/\num{0.728831}/\num{0.130991} & {\itshape \num{0.680969}}/{\itshape \num{0.678849}}/{\bfseries \num{0.00529511}} & {\bfseries \num{0.51666}}/{\bfseries \num{0.450707}}/\num{0.133448} & \num{0.073338}/\num{0.0668064}/{\bfseries \num{0.00734228}} & {\bfseries \num{0.0645341}}/{\bfseries \num{0.0584159}}/\num{0.00758037} \\
5 & 0.2 & \num{0.688832}/\num{0.744617}/\num{0.161509} & {\itshape \num{0.659868}}/{\itshape \num{0.65802}}/{\bfseries \num{0.0111512}} & {\bfseries \num{0.554321}}/{\bfseries \num{0.494583}}/\num{0.131474} & \num{0.218723}/\num{0.202936}/\num{0.0125428} & {\bfseries \num{0.199967}}/{\bfseries \num{0.184742}}/{\bfseries \num{0.0117897}} \\
\addlinespace
10 & 0.001 & \num{0.0005541}/\num{0.00053243}/\num{6.40572e-05} & {\bfseries \num{0.000353224}}/{\bfseries \num{0.000393609}}/{\itshape \num{3.26283e-05}} & \num{0.00043175}/\num{0.000418039}/{\bfseries \num{2.4065e-05}} & \num{0.000304575}/{\bfseries \num{0.000330387}}/\num{2.71255e-05} & {\bfseries \num{0.000303512}}/\num{0.000340104}/{\bfseries \num{2.03579e-05}} \\
10 & 0.01 & \num{0.00245328}/\num{0.00252543}/\num{0.000393967} & {\bfseries \num{0.0019588}}/{\bfseries \num{0.00204219}}/{\bfseries \num{0.000291407}} & \num{0.00198631}/\num{0.00210454}/\num{0.000418856} & {\bfseries \num{0.00241212}}/\num{0.00254801}/{\bfseries \num{0.000349406}} & \num{0.00245741}/{\bfseries \num{0.00251442}}/\num{0.000350032} \\
10 & 0.1 & \num{0.0593031}/\num{0.0619563}/\num{0.0023064} & {\itshape \num{0.0401823}}/{\itshape \num{0.0408355}}/{\bfseries \num{0.00227542}} & {\bfseries \num{0.0278957}}/{\bfseries \num{0.0285181}}/\num{0.00239935} & \num{0.0380358}/\num{0.0386261}/{\bfseries \num{0.00227954}} & {\bfseries \num{0.0344958}}/{\bfseries \num{0.0355603}}/\num{0.00228054} \\
10 & 0.2 & \num{0.142673}/\num{0.145875}/{\bfseries \num{0.00323054}} & {\itshape \num{0.119392}}/{\itshape \num{0.113933}}/\num{0.00638651} & {\bfseries \num{0.0821377}}/{\bfseries \num{0.0779711}}/\num{0.00638667} & \num{0.104084}/\num{0.101338}/\num{0.00460427} & {\bfseries \num{0.0943625}}/{\bfseries \num{0.0933007}}/{\bfseries \num{0.00444603}} \\
\midrule
\multicolumn{2}{l}{\textbf{overall}} & \num{0.162704}/\num{0.121165}/\num{0.022201} & {\itshape \num{0.0666162}}/{\bfseries \num{0.0670696}}/{\bfseries \num{0.000866538}} & {\bfseries \num{0.0608057}}/\num{0.136916}/\num{0.041289} & \num{0.010617}/\num{0.0100925}/\num{0.000969479} & {\bfseries \num{0.0102012}}/{\bfseries \num{0.00964358}}/{\bfseries \num{0.000879039}} \\
\bottomrule
\end{tabular}%
}
\end{table}
\begin{figure}[h!]
\centering
  \begin{minipage}{0.45\textwidth}
  \includegraphics[width=\textwidth]{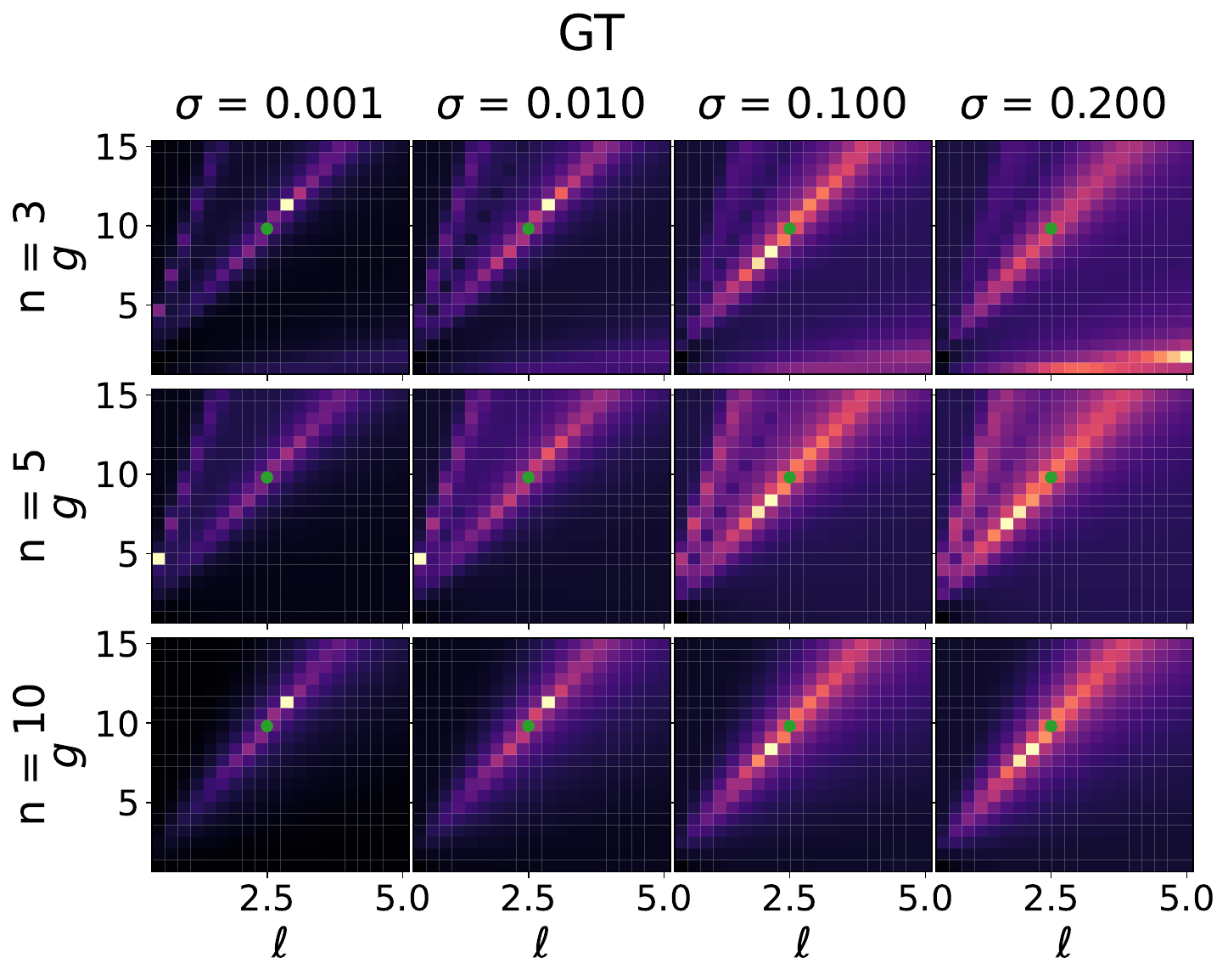}
  \end{minipage}
  \begin{minipage}{0.45\textwidth}
  \includegraphics[width=\textwidth]{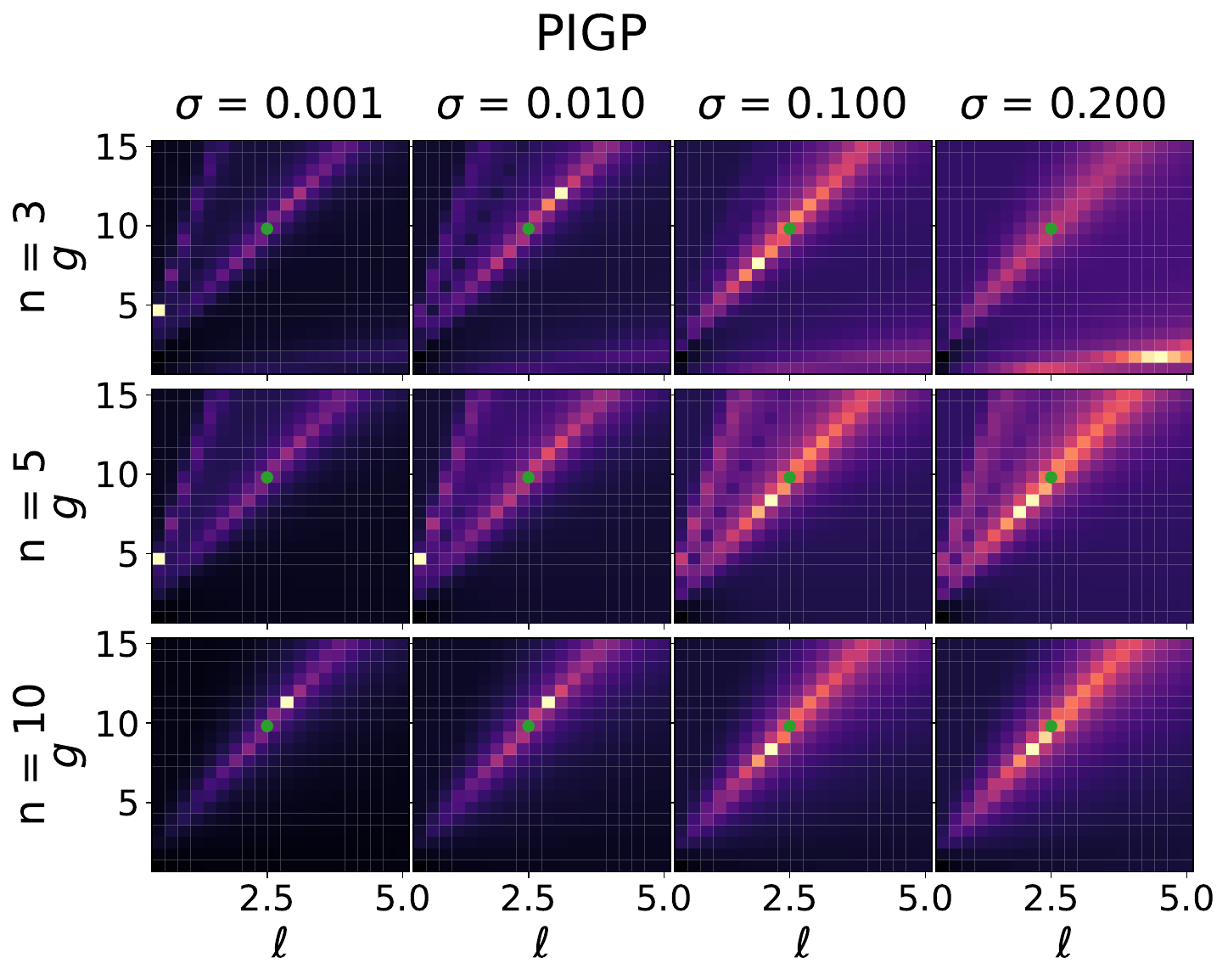}
  \end{minipage}
\caption{Re-scaled log posteriors of $g, \ell$ analogously to Fig. \ref{fig:Ex2_posteriors} assuming $f(t)$ is known. We recover similar multimodal structures as we did in the original experiment, but this time for both methods.}
\label{fig:Ex2_posteriors_knownu}
\end{figure}

\begin{table}[]
\sisetup{mode=match, round-mode=figures, round-precision=2, scientific-notation=true, table-format=1.2e-1, detect-weight=true, detect-shape=true, detect-family=true}
\caption{Ablation study experiment 2. Median absolute error of the physical parameter $\alpha$ relative to its true value $\alpha_{true} = 2$, analogously to \ref{tab:Ex2}. For the "+" variant, the ghost task resp. the inhomogeneities (for GT resp. PIGP) are constrained on $15\times 15$ points and the standard deviation of the likelihood in these tasks is fixed to be $\sigma = 0.001$. Contrary to experiment 1 (\ref{tab:Ex2}), we do not see significant differences between variants in the results.}
\label{tab:Ex1}
\resizebox{\textwidth}{!}{%
\begin{tabular}{llSSSSSS}
\toprule
{$\frac{\Delta \alpha}{\alpha}$} & & \multicolumn{2}{c}{\textbf{GT (torsion-free)}} & \multicolumn{2}{c}{\textbf{GT (projective)}} & \multicolumn{2}{c}{PIGP} \\
\cmidrule(lr){3-4} \cmidrule(lr){5-6} \cmidrule(lr){7-8}
n & {$\sigma$} & {standard} & {+} & {standard} & {+} & {standard} & {+} \\
\midrule
3 & 0.001 & 0.0250089 & \itshape 0.000663676 & 0.00762962 & \itshape 0.000510805 & 0.00582949 & \bfseries 0.000355886 \\
3 & 0.01 & 0.0251172 & \bfseries 0.00308351 & \itshape 0.0103383 & 0.0125361 & 0.00865 & \itshape 0.00339176 \\
3 & 0.1 & \itshape 0.0360237 & 0.0435996 & \bfseries 0.0333033 & 0.0830999 & \itshape 0.044 & 0.109045 \\
3 & 0.2 & \bfseries 0.0953646 & 0.255076 & \itshape 0.168597 & 0.725525 & \itshape 0.109904 & 0.641046 \\
\addlinespace
5 & 0.001 & 0.000476128 & \itshape 0.000189513 & 0.000152691 & \bfseries 0.000141994 & 0.000345346 & \itshape 0.000185464 \\
5 & 0.01 & 0.00184069 & \bfseries 0.0016835 & 0.00225342 & \itshape 0.00194024 & \itshape 0.00182672 & 0.00183743 \\
5 & 0.1 & \itshape 0.0179435 & 0.0189678 & \itshape 0.0189525 & 0.0323223 & \bfseries 0.012887 & 0.0264649 \\
5 & 0.2 & \bfseries 0.012287 & 0.0570769 & \itshape 0.044273 & 0.178954 & \itshape 0.0312465 & 0.302208 \\
\addlinespace
10 & 0.001 & \itshape 8.99094e-05 & 0.000101184 & 8.7346e-05 & \bfseries 8.6918e-05 & 9.30985e-05 & \itshape 9.02278e-05 \\
10 & 0.01 & 0.00061429 & \bfseries 0.000471394 & 0.00122068 & \itshape 0.00117363 & 0.0012042 & \itshape 0.00120184 \\
10 & 0.1 & 0.0116274 & \bfseries 0.00634416 & 0.0108223 & \itshape 0.00901841 & 0.0102954 & \itshape 0.0101971 \\
10 & 0.2 & \bfseries 0.0236151 & 0.0246172 & \itshape 0.0470936 & 0.0701019 & \itshape 0.0393625 & 0.0776159 \\
\midrule
\multicolumn{2}{l}{\textbf{overall}} & 0.00655762 & \bfseries 0.00443914 & \itshape 0.00660932 & 0.00764208 & \itshape 0.0061298 & 0.0072101 \\
\bottomrule
\end{tabular}}
\end{table}

\begin{table}[]
\sisetup{mode=match, round-mode=figures, round-precision=2, scientific-notation=true, table-format=1.2e-1, detect-weight=true, detect-shape=true, detect-family=true}
\caption{Ablation study experiment 3. Median absolute error of the physical parameter $\nu_0$ relative to its true value $\nu_{0true} = 1$, analogously to \ref{tab:Ex2}. For the "+" variant, the ghost task resp. the inhomogeneities (for GT resp. PIGP) are constrained on $10\times 10\times 10$ points and the standard deviation of the likelihood in these tasks is fixed to be $\sigma = 0.001$. GT outperforms PIGP in every setting. Contrary to experiment 1 (\ref{tab:Ex2}), we do not see significant differences between variants in the results.}
\label{tab:Ex3}
\begin{tabular}{llSSSS}
\toprule
{$\frac{\Delta \nu_0}{\nu_0}$} & & \multicolumn{2}{c}{\textbf{GT}} & \multicolumn{2}{c}{PIGP} \\
\cmidrule(lr){3-4} \cmidrule(lr){5-6}
n & {$\sigma$} & {standard} & {+} & {standard} & {+} \\
\midrule
3 & 0.001 & 0.10125 & \bfseries 0.0809063 & \itshape 0.342445 & 0.483184 \\
3 & 0.01 & 0.0960358 & \bfseries 0.0731214 & \itshape 0.345774 & 0.494206 \\
3 & 0.1 & 0.0834565 & \bfseries 0.0539069 & \itshape 0.386825 & 0.525768 \\
3 & 0.2 & 0.0778331 & \bfseries 0.0469981 & \itshape 0.434896 & 0.551056 \\
\addlinespace
5 & 0.001 & 0.000944797 & \bfseries 0.000884615 & \itshape 0.32439 & 0.377715 \\
5 & 0.01 & \bfseries 0.000722241 & 0.000837045 & 0.359478 & \itshape 0.279961 \\
5 & 0.1 & 0.00377455 & \bfseries 0.00332554 & 0.373856 & \itshape 0.193678 \\
5 & 0.2 & 0.00500214 & \bfseries 0.0046573 & 0.372423 & \itshape 0.1768 \\
\addlinespace
7 & 0.001 & 0.0123503 & \bfseries 0.0120364 & 0.11204 & \itshape 0.110139 \\
7 & 0.01 & 0.000116718 & \bfseries 9.69767e-05 & 0.168336 & \itshape 0.100939 \\
7 & 0.1 & 0.000431414 & \bfseries 0.000415067 & 0.311 & \itshape 0.205141 \\
7 & 0.2 & \bfseries 0.00106767 & 0.00117856 & 0.340092 & \itshape 0.157765 \\
\midrule
\multicolumn{2}{l}{\textbf{overall}} & 0.0053422 & \bfseries 0.00463437 & 0.304887 & \itshape 0.258767 \\
\bottomrule
\end{tabular}

\end{table}

\section{Application to systems with general non-constant coefficients}
\label{app:commutative}
We can find parametrizations even for certain DEs with non-constant coefficents via commutative polynomial rings. Such \emph{commuting cases} can be found for instance in Example~4.2 of \cite{lange-hegermann_algorithmic_2018} and Example~1 of \cite{mymaxent}, the latter of which we shall use as an illustrative example.
If we try to find a parametrization for the system represented by 
\begin{align}
R(\vv x) = \begin{pmatrix}
        \nabla^2+g_1(\vv x) & g_2(\vv x)  & -1\\[0.3em]
        g_3(\vv x)  & \nabla^2 + g_4(\vv x)  & 0
    \end{pmatrix}.
\end{align} with general functions $g_i(\vv x) = g_i(x_1, x_2)$, we could do so by assuming $g_i(\vv x)$ as (implicitly constant) parameters and using the Macaulay2 algorithm over commutative rings as in \cite{lange-hegermann_algorithmic_2018}, and obtain a parametrization of the form
\begin{align}
Q(\vv x) = \begin{pmatrix}
        g_4(\vv x) + \nabla^2\\[0.3em]
        -g_3(\vv x)\\[0.3em]
        -g_2(\vv x)g_3(\vv x) + g_1(\vv x)g_4 (\vv x)
        + \nabla^2\!\bigl(g_1(\vv x)+g_4(\vv x)\bigr) 
        + (\nabla^2)^2
\end{pmatrix}.
\end{align}

Evaluating $R(\vv x)Q(\vv x)$ yields
\begin{align}
R(\vv x)Q(\vv x) = \begin{pmatrix}
        \nabla^2 g_4(\vv x) - g_4(\vv x)\nabla^2 \\[0.3em]
        -\nabla^2 g_3(\vv x) + g_3(\vv x)\nabla^2
    \end{pmatrix} 
    \overset{!}{=} \begin{pmatrix} 0 \\ 0 \end{pmatrix},
\end{align}
from which we deduce that the parametrization is valid precisely when $g_3(\vv x)$ and $g_4(\vv x)$ commute with $\nabla^2$, i.e., they must be constants. Notably, no such commutativity constraints apply to $g_1(\vv x)$ and $g_2(\vv x)$. 

This commutativity requirement is a strong restriction: in general applications, $g_3(\vv x)$ and $g_4(\vv x)$ will depend on their arguments, and thus the commutative parametrization is not available. Nonetheless, when the elements of $R(\vv x)$ commute with the elements of $Q(\vv x)$, this approach provides a convenient and computationally inexpensive way to obtain a parametrization using standard tools for commutative polynomial rings.

\end{document}